\documentclass[11pt]{article}
\usepackage[margin=1in]{geometry}
\usepackage{amsmath,amssymb,amsthm,mathtools,booktabs}
\usepackage[T1]{fontenc}
\usepackage{lmodern}
\usepackage{microtype}
\usepackage{xcolor}
\usepackage{tikz}
\usepackage{float,enumitem}
\floatstyle{ruled}
\newfloat{algorithm}{tbp}{loa}
\floatname{algorithm}{Algorithm}
\usepackage{fancyhdr}
\usepackage[colorlinks=true,linkcolor=blue,urlcolor=blue,citecolor=blue]{hyperref}
\hypersetup{pdftitle={Coloring 3-colorable graphs with O(n^(4/23)) colors via a Gaussian-cover recursion},pdfauthor={Emile Anand},pdfsubject={Approximate graph coloring}}
\newtheorem{theorem}{Theorem}
\newtheorem{lemma}[theorem]{Lemma}
\newtheorem{proposition}[theorem]{Proposition}
\theoremstyle{definition}
\newtheorem{definition}[theorem]{Definition}
\theoremstyle{remark}
\newtheorem{remark}[theorem]{Remark}
\newcommand{\E}{\mathbb E}
\newcommand{\R}{\mathbb R}
\newcommand{\Var}{\operatorname{Var}}
\newcommand{\Prb}{\Pr}
\newcommand{\ip}[2]{\langle #1,#2\rangle}
\newcommand{\phib}{\overline\Phi}
\newcommand{\eps}{\varepsilon}

\title{Coloring 3-colorable graphs with $O(n^{4/23})$ colors\protect\\
via a Gaussian-cover recursion}
\author{Emile Anand\footnote{Georgia Institute of Technology, School of Computer Science, \texttt{emile@gatech.edu}}}
\date{}

\begin{document}
\maketitle

\begin{abstract}
We give a randomized polynomial-time algorithm that colors any promised
$3$-colorable graph on $n$ vertices with $\smash{O(n^{4/23}) =
O(n^{0.17391\ldots})}$ colors, improving on the recent bounds of $O(n^{0.19539})$
by Bansal, Huang, and Lee~\cite{BHL} and Narang and Tang \cite{narang2026improvedsdpcoloring3colorable} who obtained $\smash{O(n^{(13-\sqrt{97})/18+\epsilon})=O(n^{0.17506\dots + \epsilon})}$ colors for every fixed $\smash{\epsilon>0}$.

To prove our result, we start from a fixed-level semidefinite relaxation, where we use  a finite-depth recursion on Gaussian covers. Fixing a root vertex, we group vertices by correlation with the root vector. Here, each step extends a cover of directions by one edge and transfers it to a successor group. Our key analytic ingredient is a variance bound for Gaussian maxima: for a maximum of $m\geq 2$ centered linear forms with coefficient norms at most $r$, mean $\mu$, and variance $v$, we prove $v\leq r^2-\mu^2/(2\log m)$ using Chen's Gaussian convexity theorem \cite{Chen}. Together with a variance-scale lower-tail estimate, this controls the threshold loss at each extension, which shows that root-conditioned vector colorings can either extract a large independent set from a group or bound its size, forcing a contradiction after constantly many steps. The resulting sparse-case guarantee combines with the dense progress bound of Kawarabayashi, Thorup, and Yoneda~\cite{KTY}, and the recursion's numerical inequalities are verified via rational interval arithmetic.\looseness=-1
\end{abstract}

\vspace{0.5cm}

\small \noindent \textbf{AI disclosure.} This paper was written with the assistance of AI tools, including Cognition's Devin agent (using SWE-2 High), which identified the connection to Ehrhard's inequality. Through an extended conversation with OpenAI's Astra, we found that this approach yielded an improvement on this problem, and we used Devin to formalize the argument. The author has reviewed all proofs and arguments for correctness and takes full responsibility for the content.

\newpage

\section{Introduction}

Given an $n$-vertex graph promised to be 3-colorable, how many colors
suffice in polynomial time? The promise guarantees a partition into
three independent sets without providing it. Deciding 3-colorability
is NP-complete, and the promise does not remove the difficulty:
coloring a 3-colorable graph with five colors is already
NP-hard~\cite{KLS,GK,BKO}, and hardness for every constant number of
colors is conjectured. The problem is a canonical promise problem: a
global property of the input is guaranteed but not given, so
algorithms must exploit structure they cannot directly inspect. It has
been a test case for semidefinite rounding and convex hierarchies, and
the gap it leaves openn, which is constant-factor hardness against
polynomial-color algorithms, is among the largest in approximation
theory. Approximation algorithms therefore seek a proper
coloring with a controlled number of additional colors. Historically, progress on
this problem has relied on combining combinatorial structure with
semidefinite programming.

The two approaches apply in complementary regimes: combinatorial
methods exploit a large minimum degree, whereas semidefinite rounding
benefits from an upper bound on the maximum degree. In the latter regime, the basic task is to extract an
independent set large enough to make progress toward a coloring. The
Gaussian rounding algorithm of Karger, Motwani, and Sudan~\cite{KMS}
provides the starting point: they show how to convert a vector coloring into a random vertex set by thresholding Gaussian inner products, removing conflicts via a maximal matching. Notably, the geometry left behind when this
procedure has small expected output provides additional
information, since, at each selected vertex, the matching gives a family of edge directions whose Gaussian maximum remains large.

In this work, we develop a finite-depth recursion for exploiting that
geometry. We fix a root and group vertices into correlation shells
according to their inner product with its vector, as in the neighborhood analyses of~\cite{KMS,BHL}. A family in one such
shell is described by its correlation and a Gaussian threshold that it
reaches with inverse-polynomial probability. The recursion extends its
endpoints by one graph edge, projects away the root direction, and
transfers the cover property to a successor shell. Through this process, large shells yield
independent sets through rounding, and small shells impose cardinality
bounds that constrain the loss in the next extension. These alternatives produce a scalar recursion whose continuing states lie in
a compact domain (see Figure~\ref{fig:recursion}).

The cover-composition arguments of Bansal, Huang, and
Lee~\cite[Lemma~4.14]{BHL} and
Chlamt\'a\v{c}~\cite[Theorem~3.4.4]{Chl09} use
Borell's Gaussian isoperimetric enlargement inequality~\cite{Borell}.
In those bounds, the enlargement cost is multiplied by the radius of
the inner family. We instead derive a composition estimate from the
variance of the inner Gaussian maximum. For a maximum of $m\geq 2$
centered Gaussian linear forms with coefficient norms at most $r$,
mean $\mu$, and variance $v$, we prove
\[
  v\leq r^2-\frac{\mu^2}{2\log m}.
\]
Together with Chen's lower-deviation inequality~\cite{Chen}, this
replaces the full-radius factor by a variance-dependent factor.
The resulting composition inequality gives the neighborhood lift in
Lemma~\ref{lem:lift}. Section~\ref{sec:related} compares this proof with the concurrent approach
of Narang and Tang~\cite{narang2026improvedsdpcoloring3colorable}.
The remaining finite inequalities are verified by exact rational
interval arithmetic over an adaptive subdivision of the compact domain.\looseness=-1

Our main theorem is an improved coloring bound of  $O(n^{4/23})=O(n^{0.17391\ldots})$, as stated
in Theorem~\ref{thm:main}, which improves on the bound of Bansal, Huang, and
Lee~\cite{BHL} and concurrent work of Narang and Tang~\cite{narang2026improvedsdpcoloring3colorable}, where the exponent we present has a slightly smaller value by setting
$\alpha=1/(5+3c_*)$, where $c_*=1/4+10^{-6}$, to absorb the subpolynomial
losses in the dense--sparse reduction. For our analysis, only a fixed number of neighborhood extensions is needed, since the algorithm rounds the full root-correlation shells and does not have to enumerate the paths or construct the pruned graph used in the proof.

\begin{theorem}\label{thm:main}
Every promised 3-colorable graph on $n$ vertices has a proper coloring with
$O(n^{4/23})$ colors that can be found in randomized polynomial time, with
success probability at least $1-n^{-10}$.
\end{theorem}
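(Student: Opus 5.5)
The plan follows the standard dense--sparse framework (Blum; Kawarabayashi--Thorup; KTY), specialized to the target exponent. Fix $\alpha=1/(5+3c_*)$ with $c_*=1/4+10^{-6}$, so that $\alpha<4/23$ and $\alpha\to 4/23$ as $c_*\to 1/4$. The slack between $\alpha$ and $4/23$ absorbs all $n^{o(1)}$ losses.

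The algorithm repeatedly finds an independent set $I$ of size $\tilde\Omega(|V|/n^{\alpha'})$ in the current graph, or otherwise makes progress, with $\alpha'$ slightly below $4/23$. It colors $I$ with a fresh color, deletes it, and recurses. Since each round removes a $\tilde\Omega(n^{-\alpha'})$ fraction of the vertices, this uses $O(n^{\alpha'}\log n)=O(n^{4/23})$ colors. Once fewer than $n^{4/23}$ vertices remain, each gets its own color.

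Each round proceeds in two cases.

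\emph{Dense case.} If the current graph has minimum degree at least $\Delta:=n^{1-3\alpha\cdots}$ at the threshold dictated by KTY, I would invoke the dense progress bound of Kawarabayashi, Thorup, and Yoneda~\cite{KTY} as a black box. It either returns a large independent set or a coloring progress step (for example, identifying a subset that is $2$-colorable or that reduces the problem), with a polynomial number of colors charged in a way consistent with exponent $\alpha$. I would check that the KTY threshold parameter matches $c_*$: their guarantee is phrased as ``degree $\geq n^{1-\ldots}$ gives progress,'' and the choice $\alpha=1/(5+3c_*)$ is precisely where it meets the sparse guarantee.

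\emph{Sparse case.} Otherwise, after removing high-degree vertices by standard greedy arguments, the maximum degree is below $\Delta$. Here I solve the fixed-level SDP relaxation, fix a root vertex, and partition vertices into root-correlation shells. I then apply the sparse-case guarantee built from the Gaussian-cover recursion: the lift lemma (Lemma~\ref{lem:lift}), the Gaussian-maximum variance bound $v\le r^2-\mu^2/(2\log m)$, and Chen's lower-tail estimate. The dichotomy is as follows. Either some shell is large and KMS-style threshold rounding with matching removal yields an independent set of size $|S|/\Delta^{c_*+o(1)}$, or every shell is small. In the latter case the cardinality bounds propagate through the finitely many extension steps, and after a constant number of steps they reach a contradiction; the accompanying numerical inequalities are certified by interval arithmetic. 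Balancing $\Delta^{c_*}$ against the dense threshold gives $n^{\alpha}$.

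The main obstacle is this sparse-case guarantee, specifically showing that the recursion's continuing states stay in a compact domain and terminate within constantly many steps. For the theorem itself, I would take that guarantee as established by the preceding lemmas. The remaining work is the parameter bookkeeping: verifying that the exponents match at $\alpha=1/(5+3c_*)$ and that $\alpha\le 4/23$.

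The success probability comes from the rounding. Each rounding attempt succeeds with probability at least $1/\mathrm{poly}(n)$, so I repeat it $O(n^{c}\log n)$ times per round. A union bound over the $\le n$ rounds then gives overall failure probability at most $n^{-10}$. Polynomial running time follows because the SDP level is fixed and only constantly many extension steps are used, with no path enumeration required.
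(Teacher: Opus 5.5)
Your architecture is the paper's: a sparse guarantee from the shell-rounding and recursion lemmas, KTY on the dense side, the dense--sparse combination, and amplification. However, the one piece that is genuinely this theorem's own content, the exponent bookkeeping, is wrong as you state it.

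The sparse routine does not lose a factor $\Delta^{c_*}$. The threshold is fixed by $\phib(t)=\Delta^{-1/D}$ with $D=3(1+c_*)$. The guarantee is an independent set of expected size $n\phib(t)=n\Delta^{-1/D}$, obtained in one of three ways:
\begin{enumerate}
\item global modified KMS already returns $n\phib(t)/4$; or
\item some shell with $|H|\ge\Delta^{S+\rho(q)}$ is rounded via Lemma~\ref{lem:shell-rounding}, giving the absolute size $\Delta^{S-o(1)}>\Delta^{R_*}=n^{1-\alpha}$; or
\item every shell obeys the cap $K(q)$, and the recursion is contradicted.
\end{enumerate}
In the second case the loss is the correlation-dependent factor $\Delta^{\rho(q)}$, and the rounding is plain KMS on the root-conditioned vector colorings, not matching rounding.

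With your loss, balancing $\Delta^{c_*}=\sqrt{n/\Delta}$ gives $\Delta=n^{1/(1+2c_*)}$ and cost $n^{c_*/(1+2c_*)}\approx n^{1/6}$. So the assertion that it ``gives $n^{\alpha}$'' is false, and it also claims more than the lemmas deliver. The correct balance $\Delta^{1/D}=\sqrt{n/\Delta}$ gives $\Delta=n^{\beta}$ with $\beta=D/(D+2)=1-2\alpha$ (not $1-3\alpha$). Both costs then equal $n^{1/(5+3c_*)+o(1)}=n^{\alpha+o(1)}$. You must also check $\beta\approx0.652>1/2$ so that KTY applies. KTY's cost is $n^{o(1)}\sqrt{n/\Delta}$, so there is no $c_*$ to ``match'' on the dense side; $c_*$ enters only through $D$.

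Second, the mixed-degree case is not handled by greedily removing high-degree vertices. Deleting them does leave maximum degree below $\Delta$, but the deleted vertices still have to be colored. A Wigderson-type neighborhood step pays for itself only at degree about $n^{1-\alpha}$, far above $n^{\beta}=n^{1-2\alpha}$. You need Lemma~\ref{lem:combination-background} (BHL, Theorem~2.6, based on Kawarabayashi--Thorup). It takes progress separately in the regimes ``maximum degree $\le n^\beta$'' and ``minimum degree $\ge n^\beta$'' and outputs the $\widetilde O(n^{\alpha'})$-coloring, also absorbing progress of types (ii) and (iii). It should replace your hand-rolled color count.

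Two smaller points:
\begin{itemize}
\item The algorithm performs no extension steps. It only runs the global routine and rounds all $O(n\log n)$ shells; the recursion lives in the analysis.
\item The SoS relaxation is solved only to inverse-polynomial accuracy. You therefore need the stability argument that these errors are absorbed by the strict verified margins and the slack $S-R_*$.
\end{itemize}
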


\subsection{Related work}\label{sec:related}

\textbf{Combinatorial and semidefinite algorithms.}
Progress on this problem has alternated between combinatorial and
semidefinite tools. Wigderson~\cite{Wigderson} gave an
$O(n^{1/2})$-coloring algorithm by using the bipartite neighborhoods
of high-degree vertices and greedily coloring the remaining graph.
Blum~\cite{Blum} introduced the general notion of progress toward a
coloring and improved the guarantee to $\widetilde O(n^{3/8})$ by
looking one step further out, exploiting distance-2 neighborhoods
where earlier arguments had used only immediate ones. Karger, Motwani,
and Sudan~\cite{KMS} brought in semidefinite programming, following
Goemans and Williamson~\cite{GW}, to reach $\widetilde O(n^{1/4})$
colors; Blum and Karger~\cite{BK} then combined the two approaches
for $\widetilde O(n^{3/14})$. Arora, Chlamt\'a\v c, and
Charikar~\cite{ACC} brought the bound to $O(n^{0.2111})$ through a
geometric analysis of semidefinite rounding, and
Chlamt\'a\v c~\cite{Chl07} pushed it to $O(n^{0.2072})$ with a
hierarchy of semidefinite relaxations~\cite{Lasserre}. Kawarabayashi
and Thorup~\cite{KT17} returned to combinatorial methods, which,
combined with the semidefinite analysis, give $\widetilde O(n^{0.19996})$
colors. For general graphs without a colorability promise,
Halld\'orsson~\cite{Halldorsson} achieved an
$O(n(\log\log n)^2/\log^3 n)$ approximation ratio.

Kawarabayashi, Thorup, and Yoneda~\cite{KTY} subsequently developed a combinatorial
progress algorithm for graphs of minimum degree $\Delta>\sqrt n$ that
colors such graphs with $n^{o(1)}\sqrt{n/\Delta}$ colors; this idea, combined
with semidefinite rounding, gives their $\widetilde O(n^{0.19747})$
bound. We make use of their dense-side guarantee in our work. Related structural parameters for dense graphs
include rank-expansion, a well-linkedness measure defined through the
cutrank of balanced separations~\cite{Anand}.

\textbf{Hardness.}
Khanna, Linial, and Safra~\cite{KLS} showed that coloring a
3-colorable graph with four colors is NP-hard; Guruswami and
Khanna~\cite{GK} gave a new proof that does not rely on the PCP
theorem, and Bul\'in, Krokhin, and Opr\'sal~\cite{BKO} extended the
hardness to five colors. Dinur, Mossel, and Regev~\cite{DMR} studied
conditional hardness for approximate coloring under variants of the
Unique Games Conjecture. Indeed, without the 3-colorability promise, the
chromatic number is hard to approximate within $n^{1-\eps}$ for every
$\eps>0$~\cite{FK,Zuckerman}.

\textbf{Neighborhood covers and root-conditioned rounding.}
Bansal, Huang, and Lee~\cite{BHL} obtained an
$O(n^{0.19539})$-coloring algorithm by extending the neighborhood
analysis to a third step and constructing a vector $5/2$-coloring
for the resulting third-neighborhood sets. We adopt their
root-conditioned vector-coloring constructions and their
dense-sparse formulation. The cover machinery itself, which comprises of Gaussian
covers, packings, and the composition and pruning steps, was assembled across these earlier SDP analyses, and the statements we use
appear in Chlamt\'a\v{c}'s dissertation~\cite{Chl09}, whose Theorems~3.4.4 and~3.6.9 establish cover composition and the packing
spread theorem. Whereas the neighborhood analysis in~\cite{BHL}
uses three levels, our proof combines the norm-sensitive composition
of Lemma~\ref{lem:compose} with root-conditioned shell rounding and extends the
recursion to a fixed depth of $404$.

\textbf{Concurrent work.}
Concurrently, Narang and Tang~\cite{narang2026improvedsdpcoloring3colorable} obtain
$O(n^{(13-\sqrt{97})/18+\varepsilon})$ colors for every fixed
$\varepsilon>0$, also via a fixed-depth Gaussian-cover
recursion. Their Lemma~2.5 yields the same leading
composition loss as Lemma~\ref{lem:compose} here for a common inner norm bound,
using Ehrhard's inequality. An explicit potential then establishes
their guarantee for modified KMS rounding.
Our proof derives the composition estimate from Chen's Gaussian
convexity theorem and a variance bound, combines the recursion with
root-conditioned shell rounding from a three-round SoS relaxation,
and verifies its drift by exact rational interval arithmetic, to obtain the exponent $4/23$.

\textbf{Gaussian convexity and finite verification.}
Gaussian isoperimetry controls the measure of Euclidean enlargements of
sets~\cite{Borell}; Ehrhard's inequality describes Gaussian measure under
convex combinations of convex sets~\cite{Ehrhard}. The cover-composition
proof in Bansal, Huang, and Lee~\cite[Lemma 4.14]{BHL} uses the
isoperimetric enlargement inequality. Here the inner maximum is a convex
Gaussian functional, so its variance provides additional information
about the lower tail, so the resulting variance bound is an instance of
Gaussian  superconcentration~\cite{Chatterjee}. Appendix~\ref{app:entropy} gives a weighted composition statement
that retains individual witness masses and a precise entropy truncation
bound, and the recursion itself uses the cardinality-based estimate of
Lemma~\ref{lem:compose}. Chen~\cite{Chen} proves convexity of the normalized logarithmic
moment-generating function of a convex Gaussian functional, together
with a lower-tail estimate at the scale of its variance, and we apply both of these tools
to a maximum of linear forms. A finite exact computation verifies the
initialization and the uniform positive drift of the resulting
recursion, stated as Proposition~\ref{prop:finite}.

\paragraph{Notation.} 
For a graph $G=(V,E)$, write $n=|V|$, let $G[S]$ denote the
subgraph induced by $S\subseteq V$, and let $N_G(v)$ denote the
open neighborhood of $v$. We write $\Delta(G)$ for the maximum
degree and use $\Delta$ for a specified upper bound on $\Delta(G)$.
All vector norms are Euclidean, and $\ip{x}{y}$ denotes the
Euclidean inner product. Throughout, $g$ is a standard Gaussian
vector in the ambient space, and $\E$ and $\Var$ denote expectation
and variance. For $Z\sim N(0,1)$, define $\Phi(t):=\Prb[Z\le t]$ and $\phib(t):=1-\Phi(t)=\Prb[Z\ge t]$. We write $x_+:=\max\{x,0\}$, and all logarithms are natural.
The notation $\widetilde O$ suppresses polylogarithmic factors.
Unless stated otherwise, asymptotic notation refers to
$n\to\infty$, with the recursion depth and all auxiliary constants
held fixed. Moreover, all the neighborhoods we consider are open neighborhoods. Unless explicitly restricted,
a root-correlation shell with root $i$ consists of vertices in
$V\setminus\{i\}$ of the original input graph. All analytic and
algorithmic correlation buckets use one fixed partition of width
$1/\log\Delta$. \\

\section{Technical Background}\label{sec:architecture}

We recall the rounding algorithms and the geometric statements used
in the proof. Throughout this section, $g$ denotes a standard Gaussian
vector in the relevant Euclidean space.

\subsection{Vector colorings and Gaussian rounding}
\label{sec:background-vector}

\begin{definition}[Vector coloring {
\cite{KMS}}]\label{def:vector-coloring}
For a real number $\kappa\ge2$, a \emph{vector $\kappa$-coloring}
of $G=(V,E)$ assigns unit vectors $(x_j)_{j\in V}$ such that
$\ip{x_j}{x_k}\le-1/(\kappa-1)$ for every $jk\in E$.
The coloring is \emph{strict} if equality holds on every edge.
\end{definition}

A proper integer $\kappa$-coloring gives such vectors by placing
its colors at the vertices of a regular simplex. Karger, Motwani,
and Sudan \cite{KMS} round a vector coloring by selecting vertices above a
Gaussian threshold and keeping the isolated vertices.

\begin{algorithm}[H]
\caption{KMS rounding {\cite{KMS}}}\label{alg:kms}
\small
\textbf{Input:} a graph $H$, a vector coloring $(x_j)$, and a threshold $u\ge0$.
\begin{enumerate}[leftmargin=1.6em,itemsep=0pt,topsep=2pt]
  \item Sample $g\sim N(0,I)$
  \item Construct the set
        $T\gets\{j\in V(H):\ip{g}{x_j}\ge u\}$.
  \item \textbf{Return} $\{j\in T:N_H(j)\cap T=\varnothing\}$.
\end{enumerate}
\end{algorithm}

\begin{lemma}[Karger--Motwani--Sudan {
\cite{KMS}}]\label{lem:kms-background}
Given a vector $\kappa$-coloring of an $m$-vertex graph with maximum
degree at most $\Delta\ge2$, a suitable threshold in
Algorithm~\ref{alg:kms} gives an independent set $I$ with
$\E|I|\ge\widetilde\Omega(m\Delta^{-(\kappa-2)/\kappa})$.
\end{lemma}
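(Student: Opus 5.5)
We follow the standard Karger--Motwani--Sudan computation. We fix $u>0$ and bound from below $\E|I|\ge \E|T| - \E|\{jk\in E(H): j,k\in T\}|$. This holds because every vertex of $T$ that is not returned has a neighbor in $T$, so the number of discarded vertices is at most the number of edges inside $T$. The key steps are as follows.

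\textbf{Vertex term.} Since each $x_j$ is a unit vector, $\ip{g}{x_j}\sim N(0,1)$, so $\Prb[j\in T]=\phib(u)$. Hence $\E|T|=m\,\phib(u)$.

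\textbf{Edge term.} Fix an edge $jk$. If both endpoints lie in $T$, then $\ip{g}{x_j+x_k}\ge 2u$. By the vector-coloring condition,
\[
\|x_j+x_k\|^2=2+2\ip{x_j}{x_k}\le 2-\tfrac{2}{\kappa-1}=\tfrac{2(\kappa-2)}{\kappa-1}.
\]
The variable $\ip{g}{x_j+x_k}$ is centered Gaussian with this variance, so
\[
\Prb[j,k\in T]\le \phib\!\left(u\sqrt{\tfrac{2(\kappa-1)}{\kappa-2}}\right).
\]
(For $\kappa=2$ the edge probability is $0$, and the claim is trivial with $u=0$.) There are at most $m\Delta/2$ edges. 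Therefore
\[
\E|I|\ge m\phib(u)-\tfrac{m\Delta}{2}\,\phib\!\left(u\sqrt{\tfrac{2(\kappa-1)}{\kappa-2}}\right).
\]

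\textbf{Choosing the threshold.} We use the Gaussian tail estimates $\phib(t)=\Theta(t^{-1}e^{-t^2/2})$ for $t\ge1$. The ratio of the subtracted term to $\phib(u)$ is then at most $O\big(\Delta\, e^{-u^2\cdot\frac{\kappa}{2(\kappa-2)}}\big)$, using $\frac{2(\kappa-1)}{\kappa-2}-1=\frac{\kappa}{\kappa-2}$. We choose
\[
u^2=\tfrac{2(\kappa-2)}{\kappa}\log\Delta + C\log\log\Delta,
\]
with $C$ a large constant, so that this ratio is at most $1/2$. Then
\[
\E|I|\ge \tfrac12 m\phib(u)=\widetilde\Omega\big(m\,e^{-u^2/2}\big)=\widetilde\Omega\big(m\Delta^{-(\kappa-2)/\kappa}\big),
\]
where the $u^{-1}$ and $(\log\Delta)^{C/2}$ factors are polylogarithmic and are absorbed by $\widetilde\Omega$.

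\textbf{Main obstacle.} The only delicate point is the bookkeeping in the tail asymptotics: making sure the lower-order factors $1/u$ and $e^{-C\log\log\Delta/2}$ remain polylogarithmic. This also requires treating small $\Delta$ separately, where the statement holds trivially by a constant-size argument, for example by taking $u$ to be a large constant.
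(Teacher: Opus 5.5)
Your argument is the standard Karger--Motwani--Sudan threshold computation. The paper does not reprove this lemma; it cites \cite{KMS} and \cite[Theorem~2.3]{BHL}. Your vertex term, the variance bound $\|x_j+x_k\|^2\le 2(\kappa-2)/(\kappa-1)$, the resulting edge probability, and the choice of $u$ are all correct.

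One step is false as written. Algorithm~\ref{alg:kms} returns only the isolated vertices of $H[T]$, so it discards \emph{both} endpoints of every edge inside $T$. The number of discarded vertices is therefore at most $\sum_{j\in T}\deg_{H[T]}(j)=2|E(H[T])|$, not $|E(H[T])|$; a single edge inside $T$ already discards two vertices. The bound $|T|-|E(H[T])|$ holds only for the variant that deletes one endpoint per edge. The correct inequality $|I|\ge|T|-2|E(H[T])|$ replaces $\tfrac{m\Delta}{2}$ by $m\Delta$ in your edge term. Your large constant $C$ absorbs this factor of two, so the conclusion is unaffected.

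Your estimates are also uniform in $\kappa$, though you did not say so. This follows from $\kappa/(2(\kappa-2))>1/2$ and $\sqrt{2(\kappa-1)/(\kappa-2)}>\sqrt2$, and it is needed because the paper applies the lemma with real $\kappa$ arbitrarily close to $2$.
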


We use the real-$\kappa$ formulation in
Bansal, Huang, and Lee~\cite[Theorem~2.3]{BHL}.
Arora, Chlamt\'a\v{c}, and Charikar~\cite{ACC} replace the isolated
vertex step by a maximal matching which retains every
selected isolated vertex (and hence Lemma~\ref{lem:kms-background} applies to it).
We will also run it at a lower threshold, where its matching gives
the geometric information needed for the recursion.

\begin{algorithm}[H]
\caption{Modified KMS rounding {\cite{ACC}}}\label{alg:threshold-matching}
\small
\textbf{Input:} a graph $H$, unit vectors $(x_j)$, and a threshold $u\ge0$.
\begin{enumerate}[leftmargin=1.6em,itemsep=0pt,topsep=2pt]
  \item Sample $g\sim N(0,I)$ and set
        $T\gets\{j\in V(H):\ip{g}{x_j}\ge u\}$.
  \item Find a maximal matching $M$ in $H[T]$, using a fixed edge ordering.
  \item \textbf{Return} $T\setminus V(M)$, where $V(M)$ denotes the matched vertices.
\end{enumerate}
\end{algorithm}

We obtain our vectors from the symmetrized three-round SoS/Lasserre
relaxation. The following properties explain why conditioning on a
root gives additional rounding options.

\begin{lemma}[Bansal--Huang--Lee {
\cite[Section~2.1]{BHL}}]\label{lem:sos-background}
An exact feasible symmetrized three-round solution gives a strict
vector 3-coloring $(v_j)$ and consistent local distributions on proper
colorings of triples, with uniform one-vertex marginals.
For a root $i$, write $q_j=\ip{v_i}{v_j}$. Then
$\Prb[j=\mathsf R\mid i=\mathsf R]=(1+2q_j)/3$,
so $q_j\ge-1/2$ and $q_j+q_k\le1/2$ whenever $jk\in E$.
\end{lemma}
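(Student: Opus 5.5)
The statement to prove is Lemma~\ref{lem:sos-background}: the formula $\Prb[j=\mathsf R\mid i=\mathsf R]=(1+2q_j)/3$ and the two constraints that follow from it. The plan is to read both directly off the local distributions given by the symmetrized three-round relaxation. We take as given what is cited from \cite[Section~2.1]{BHL}: the relaxation supplies, for every set $S$ of at most three vertices, a distribution $\mu_S$ on proper colorings of $S$ with colors $\{\mathsf R,\mathsf G,\mathsf B\}$. These distributions agree on intersections and are invariant under permutations of the three colors. The vectors $v_j$ are obtained by projecting the color-indicator vectors $u_{j,c}$ (with $\ip{u_{j,c}}{u_{k,d}}=\Prb_{\mu_{\{j,k\}}}[j=c,k=d]$) onto the complement of $\sum_c u_{j,c}$ and normalizing. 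Color symmetry gives one-vertex marginals $1/3$ and a pairwise law determined by the single number $p_{jk}=\Prb[j=k]$.

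The first step is to compute $q_j$ in terms of $p_{ij}$. We set $v_j=\sqrt{3/2}\,(u_{j,\mathsf R}-\tfrac13\sum_c u_{j,c})$, or equivalently use the standard simplex encoding $v_j=\sum_c \Prb$-weighted simplex vertices. Expanding the inner product with color symmetry gives
\[
\ip{v_i}{v_j}=\tfrac{3}{2}\Bigl(\tfrac{p_{ij}}{3}-\tfrac19\Bigr)=\tfrac{3p_{ij}-1}{2}.
\]
Hence $p_{ij}=(1+2q_j)/3$. Since $\Prb[j=\mathsf R\mid i=\mathsf R]=\Prb[j=i\text{'s color}]=p_{ij}$ by symmetry, this is the stated formula. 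For an edge $jk$ the local coloring is proper, so $p_{jk}=0$ and $\ip{v_j}{v_k}=-1/2$. This is exactly strictness of the vector 3-coloring, and the same computation with $i=j$ gives unit norm.

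The consequences then follow from nonnegativity and from three-vertex consistency. First, $(1+2q_j)/3\ge0$ gives $q_j\ge-1/2$. For an edge $jk$ we use $\mu_{\{i,j,k\}}$, which is available because the relaxation has three rounds. Conditioned on $i=\mathsf R$, the events $j=\mathsf R$ and $k=\mathsf R$ are disjoint because $jk$ is properly colored. Therefore
\[
\frac{1+2q_j}{3}+\frac{1+2q_k}{3}\le1,
\]
which gives $q_j+q_k\le1/2$. Consistency ensures that the conditional marginals computed from the triple distribution coincide with the pairwise values $p_{ij}$ and $p_{ik}$.

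The only delicate point is the normalization convention linking $v_j$ to the indicator vectors. Specifically, we must check that the chosen linear map sends pairwise agreement probability to inner product by the affine rule $q=(3p-1)/2$. I would verify this first on an integral coloring, where the rule gives $q=1$ for equal colors and $q=-1/2$ for different ones. Linearity in the moment matrix then extends it to all feasible solutions. Everything else is bookkeeping.
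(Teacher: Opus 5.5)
Your proposal is correct and follows the same route as the paper's own derivation in Section~\ref{sec:shells}: color symmetry gives $\Prb[j=\mathsf R\mid i=\mathsf R]=(1+2q_j)/3$, nonnegativity gives $q_j\ge-1/2$, and disjointness of $\{j=\mathsf R\}$ and $\{k=\mathsf R\}$ under the triple distribution conditioned on $i=\mathsf R$ gives $q_j+q_k\le 1/2$. Your derivation of the affine rule $q=(3p-1)/2$ fills in what the paper simply takes from \cite{BHL}, but it contains an arithmetic slip: with your convention $\ip{u_{j,c}}{u_{k,d}}=\Prb[j=c,k=d]$ one gets $\|u_{j,\mathsf R}-\tfrac13\sum_c u_{j,c}\|^2=2/9$ and cross inner products $p_{ij}/3-1/9$, so the normalizing factor must be $3/\sqrt2$ rather than $\sqrt{3/2}$ (as written, $\tfrac32(\tfrac{p_{ij}}3-\tfrac19)=\tfrac{3p_{ij}-1}{6}$), though your final rule $(3p_{ij}-1)/2$ and your integral-coloring check are right.
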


A \emph{root-correlation shell} consists of vertices whose $q_j$
lie in one interval of the fixed partition. Lemma~\ref{lem:shell-rounding}
constructs and rounds vector colorings on these shells.
Section~\ref{sec:algorithm} treats the accuracy of the computed SoS solution.

\subsection{Gaussian covers and packings}
\label{sec:background-packing}

On an edge $ij$ of a strict vector 3-coloring, let
$v_{ij}=(2v_j+v_i)/\sqrt3$. This is a unit vector perpendicular to
$v_i$. If both endpoints pass threshold $t$, then
$\ip{g}{v_{ij}}\ge\sqrt3t$. If a matching rounding fails, this forces large Gaussian maxima among these tangent directions.

\begin{definition}[Gaussian cover {
\cite{ACC,Chl09}}]\label{def:gaussian-cover}
For $s\in\R$ and $\delta\in(0,1]$, a finite family
$X=(r_j)_{j\in J}$ is an \emph{$(s,\delta)$-cover} if
$\Prb[\max_{j\in J}\ip{g}{r_j}\ge s]\ge\delta$.
\end{definition}

\begin{lemma}[Arora--Chlamt\'a\v{c}--Charikar {
\cite[Lemmas~3--4]{ACC}}]\label{lem:rounding-cover-background}
Let $I$ be the output of Algorithm~\ref{alg:threshold-matching}
on an $n$-vertex graph with a strict vector 3-coloring and threshold
$t>0$. Either $\E|I|\ge n\phib(t)/4$, or some nonempty induced
subgraph $H$ has the property that
$(v_{ij})_{j\in N_H(i)}$ is a $(\sqrt3t,1/8)$-cover for every
$i\in V(H)$.
\end{lemma}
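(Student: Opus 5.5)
The plan is to run the standard Arora--Chlamt\'a\v{c}--Charikar argument: couple Algorithm~\ref{alg:threshold-matching} with a peeling procedure. Fix $t>0$ and let $H_0=G$. Given the current induced subgraph $H$, call $i\in V(H)$ \emph{bad} if $(v_{ij})_{j\in N_H(i)}$ is not a $(\sqrt3t,1/8)$-cover, i.e.\ $\Prb[\max_{j\in N_H(i)}\ip{g}{v_{ij}}\ge\sqrt3t]<1/8$. Repeatedly delete bad vertices. If the process stops at a nonempty $H$, then every vertex of $H$ has the cover property, which is the second alternative.

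So assume the peeling deletes every vertex, in an order $i_1,\dots,i_n$. We show $\E|I|\ge n\phib(t)/4$. Since each $v_{ij}$ is a unit vector, $\ip{g}{v_{ij}}\sim N(0,1)$ (this is where strictness is used). Each $i_k$ is bad in $H_k:=G[\{i_k,\dots,i_n\}]$. By the observation preceding Definition~\ref{def:gaussian-cover}, if $i_k$ and some later neighbour $j$ both pass threshold $t$, then $\ip{g}{v_{i_kj}}\ge\sqrt3t$. Hence, letting $A_k$ be the event that $i_k\in T$ and no neighbour of $i_k$ in $H_k$ lies in $T$,
\[
\Prb[A_k]\ \ge\ \Prb[\ip{g}{v_{i_k}}\ge t]-\Prb\Bigl[\max_{j\in N_{H_k}(i_k)}\ip{g}{v_{i_kj}}\ge\sqrt3t\Bigr]\ \ge\ \phib(t)-\tfrac18 .
\]
This is too weak when $\phib(t)<1/8$. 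The fix is to use conditional independence: $v_{i_kj}\perp v_{i_k}$, so the tangent maximum is independent of $\ip{g}{v_{i_k}}$. Therefore
\[
\Prb[A_k]\ \ge\ \phib(t)\Bigl(1-\Prb\bigl[\max_j\ip{g}{v_{i_kj}}\ge\sqrt3t\bigr]\Bigr)\ \ge\ \tfrac78\,\phib(t).
\]
Applied to all $k$, this gives $\E\,\#\{k:A_k\}\ge\tfrac78 n\phib(t)$.

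It remains to relate the events $A_k$ to the output of the greedy maximal matching. This is the step I expect to be the main obstacle. Under $A_k$, the vertex $i_k$ has no later neighbour in $T$, but it may still be matched to an earlier neighbour $i_l$ ($l<k$). I would handle this by charging. Each matched edge $i_li_k$ with $l<k$ and $A_k$ holding absorbs at most one such $i_k$ per matched $i_l$. So the number of matched vertices satisfying $A_k$ is at most the number of vertices that lie in $T$ but fail their own $A_l$ event (the earlier endpoints $i_l$ have the later neighbour $i_k\in T$). This number is at most $|T|-\#\{k:A_k\}$.

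Thus the number of unmatched vertices in $T$ is at least $\#\{A_k\}-(|T|-\#\{A_k\})=2\#\{A_k\}-|T|$. Taking expectations, $\E|I|\ge 2\cdot\tfrac78 n\phib(t)-n\phib(t)=\tfrac34 n\phib(t)\ge n\phib(t)/4$. I would double-check this charging, since it is where the constant $1/4$ (with slack) comes from. If the argument breaks, the fallback is to lower the cover constant used in the peeling step.
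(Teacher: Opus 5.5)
Your proof is correct, and the charging step you flagged is fine. It is a pointwise statement that holds for every matching $M$ of $G[T]$, and it does not use the edge ordering. If $i_k$ satisfies $A_k$ and is matched to $i_l$, then $l<k$, because a later partner would be a later neighbour in $T$. Hence $A_l$ fails, and since partners are distinct, $|\mathcal A\cap V(M)|\le|T\setminus\mathcal A|$. This gives $|I|\ge 2|\mathcal A|-|T|$ and $\E|I|\ge\frac34 n\phib(t)$. A small correction: strictness is really used for $\ip{v_{ij}}{v_i}=(2\ip{v_i}{v_j}+1)/\sqrt3=0$, which your independence step needs; the unit norm of $v_{ij}$ plays no role.

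The paper only cites \cite{ACC} for this lemma. The argument it does carry out (Appendix~\ref{app:approx}) goes through matching weights rather than a cover-based peeling:
\begin{itemize}
\item The weights $\mu_i(j)=\Prb[\{i,j\}\in M]/\phib(t)$ are symmetric, with row sums $d_i=\Prb[i\text{ matched}\mid i\in T]$.
\item The identity $\E|I|=\phib(t)\sum_i(1-d_i)$ turns failure of the rounding into total weight at least $n/8$.
\item The same orthogonality and independence give the packing inequalities $\mu_i(A)\le\Prb[\max_{j\in A}\ip{g}{v_{ij}}\ge\sqrt3t]$.
\item Peeling vertices of small incident weight leaves a nonempty core in which each row mass, and hence each cover probability, is bounded below.
\end{itemize}

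Your route peels by the cover property itself and then uses the elimination order together with an injection. It is more elementary and gives a stronger dichotomy: either $\E|I|\ge\frac34 n\phib(t)$, or the core exists. The weight route buys more structure. It produces symmetric packings rather than bare covers, and that is exactly what the later spread theorem and the edge-charging in the buffered pruning (Lemma~\ref{lem:app-persistent-pruning}) require. A core from your peeling carries only covers, and their first-witness packings lose the symmetry $\mu_i(j)=\mu_j(i)$.
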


To restrict covers while controlling their probability, we keep
weights on the possible witnesses.

\begin{definition}[Packing and inefficiency {
\cite[Section~3.6]{Chl09}}]\label{def:gaussian-packing}
An \emph{$(s,\delta)$-packing} on $X=(r_j)_{j\in J}$ assigns
nonnegative weights $\mu(j)$ with $\mu(J)\ge\delta$ and
\[
 \mu(A):=\sum_{j\in A}\mu(j)
 \le\Prb\!\left[\max_{j\in A}\ip{g}{r_j}\ge s\right]
 \quad\text{for every }A\subseteq J.
\]
For unit directions, the packing is said to be \emph{at most $c$-inefficient}
if $|J|\le\phib(s)^{-(1+c)}$.
\end{definition}

Assigning each successful Gaussian outcome to its first witness
turns a cover into a packing. Then, restriction preserves the packing
inequality, so one class of any $B$-part partition retains mass
at least $\delta/B$. The following result gives a more geometric
restriction.

\begin{lemma}[Chlamt\'a\v{c} {
\cite[Theorem~3.6.9]{Chl09}}]\label{lem:spread-background}
Fix $c>0$. For all sufficiently large $s$, every at most
$c$-inefficient $(s,\delta)$-packing of unit vectors has a
restriction $J'\subseteq J$ such that $\mu(J')\ge\delta-1/\log s$
and for all $j\in J'$, $\mu\{k\in J':\ip{r_j}{r_k}\ge\lambda\}\le p$ and $\lambda=\frac{c}{1+c}\left(1+\frac{C_1}{\log s}\right)$, where $p=e^{-C_2(\log s)^2}$,
and where  $C_1,C_2>0$ only depend on $c$.
\end{lemma}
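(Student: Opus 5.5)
The statement is Chlamt\'a\v{c}'s packing-spread theorem. I would prove it by iterative pruning of \emph{heavy} points. Call $j\in J'$ heavy if its cap $C_j(J'):=\{k\in J':\ip{r_j}{r_k}\ge\lambda\}$ has $\mu(C_j(J'))>p$. Starting from $J'=J$, I repeatedly delete heavy points. The process only shrinks caps, and it stops at a set with no heavy points, which is exactly the required spread property. Restriction preserves the packing inequality, so everything reduces to showing that the total deleted mass is at most $1/\log s$.

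To bound the deleted mass, I would run the deletion greedily in rounds. In each round I pick a maximal family of heavy centers $j_1,j_2,\dots$ whose caps are pairwise disjoint, and then delete every point of $J'$ that is $\lambda$-close to some center. The deleted mass of a round is at most $\Prb[\exists k:\ip{g}{r_k}\ge s,\ \ip{r_{j_a}}{r_k}\ge\lambda\text{ for some }a]$.

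For a fixed center $j$, I decompose $r_k=\ip{r_j}{r_k}r_j+w_k$ with $w_k\perp r_j$ and $\|w_k\|\le\sqrt{1-\lambda^2}$. If $\ip{g}{r_k}\ge s$, then for any split point $\theta$ either $\ip{g}{r_j}\ge\theta s$, or $\ip{g}{w_k}\ge s-\theta s$ for some $k$ in the cap. I bound the second event by a union bound over at most $|J|\le\phib(s)^{-(1+c)}$ indices, using the tail $\phib$ at scale $(1-\theta)s/\sqrt{1-\lambda^2}$. I then optimize $\theta$. Taking logarithms, with $\log\phib(s)\approx -s^2/2$, the cap event costs about $\phib(s)^{\lambda^2}$ in the first branch. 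The union-bound branch is negligible exactly when $\lambda$ exceeds the threshold at which the entropy $(1+c)s^2/2$ is beaten by the Gaussian cost; this is where $\lambda=\frac{c}{1+c}$ enters, and the slack $C_1/\log s$ buys a superpolynomially small error. The conclusion is that, up to a factor $1+o(1)$, a cap's mass is controlled by the event $\{\ip{g}{r_j}\ge\lambda' s\}$ together with some $\ip{g}{r_k}\ge s$.

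Next I would use disjointness and heaviness of the centers. Each center carries cap mass greater than $p$, and the caps are disjoint, so a single round has at most $1/p=e^{C_2(\log s)^2}$ centers. I then bound the mass of a round by the measure of the union of the cap events, which by the previous step is at most a sum over centers of $\Prb[\ip{g}{r_j}\ge\lambda' s,\ \max_k\ip{g}{r_k}\ge s]$. The argument closes only if this is small compared with the total mass. To get that, I would compare with the packing inequality applied to the union of caps: points in a heavy cap have witness events concentrated on the thin region where $g$ is aligned with $r_j$. The goal is to show that the conditional probability of that region, given $\ip{g}{r_j}\ge\lambda' s$, forces the packing to use up a $(\log s)^{-2}$ fraction per round. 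Summing over $O(\log s)$ rounds would then give total loss at most $1/\log s$ once $C_2$ is chosen small relative to $C_1$.

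The main obstacle is this accounting step. I need to turn ``many points crowded into a $\lambda$-cap'' into a genuine loss of packing mass, rather than just an upper bound on each cap. I also need to control the number of rounds, because deletion can create new heavy points only if caps shrink, which they do not. This is where the exact interplay between the inefficiency exponent $1+c$, the threshold $\lambda=\frac{c}{1+c}(1+C_1/\log s)$, and the quasi-polynomial weight $p=e^{-C_2(\log s)^2}$ has to be balanced. My first attempt would be a double-counting bound on $\sum_j\mu(j)\mu(C_j)$, using the symmetry of the relation $\ip{r_j}{r_k}\ge\lambda$ together with the Gaussian tail estimates above.
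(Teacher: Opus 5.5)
The paper does not prove this statement. It imports Chlamt\'a\v{c}'s theorem~\cite{Chl09} in the form of~\cite[Theorem~4.11]{BHL} and uses it as a black box in Appendix~\ref{app:pruning}, so there is no in-paper proof to compare against. On its own terms, your proposal has a genuine gap, and it is the one you flag. Deletions only shrink caps, so no new heavy points ever appear. Deleting the initially heavy set $H$ once therefore suffices, and there is no round count to control. The whole theorem then reduces to the estimate $\mu(H)\le1/\log s$, which you never prove.

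The heuristic meant to lead there also misplaces the threshold. Your union bound over the perpendicular parts is about $\exp\bigl(\tfrac{s^2}{2}\bigl[(1+c)-(1-\theta)^2/(1-\lambda^2)\bigr]\bigr)$. This is negligible for some $\theta>0$ only if $\lambda^2>c/(1+c)$. In that regime your two-branch bound already gives every cap mass $e^{-\Omega(s^2)}<p$, so $J'=J$ works and nothing needs pruning. At the stated $\lambda=\frac{c}{1+c}(1+C_1/\log s)<\sqrt{c/(1+c)}$, the union branch is exponentially large for every $\theta\ge0$. At your choice $\theta\approx\lambda$ its exponent is $(1+c)-\frac{1-\lambda}{1+\lambda}>0$. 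So a per-cap estimate yields only the trivial version with $\lambda\approx\sqrt{c/(1+c)}$; it is not where $c/(1+c)$ comes from.

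For $c/(1+c)<\lambda<\sqrt{c/(1+c)}$ heavy caps genuinely occur, and this also breaks your round scheme. Take $\phib(s)^{-(1+c)}-1$ vectors $r_k=ae+\sqrt{1-a^2}\,w_k$, with $a^2=c/(1+c)$ and orthonormal $w_k\perp e$, and add $e$ itself.
\begin{itemize}
\item The maximum is about $a\ip{g}{e}+s$, so this is a $c$-inefficient cover of probability about $1/2$.
\item The first-witness packing with $e$ ordered first has $\mu(e)=\phib(s)$, while the cap of $e$ has mass about $1/2$.
\item Every other cap is $\{r_k,e\}$, because $\ip{r_k}{r_l}=a^2<\lambda$, and so is light.
\end{itemize}
The correct restriction deletes only $e$. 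Your rounds delete every point $\lambda$-close to a heavy center, which here is the whole family.

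What is missing is a bound on the packing mass of the heavy \emph{centers}, not on their caps. That bound must use the packing inequality at the centers together with $|J|\le\phib(s)^{-(1+c)}$. Heuristically, this is where $c/(1+c)$ first appears. Discarding indices of mass below $\phib(s)^{1+c}/(2\log s)$ costs at most $1/(2\log s)$. For a remaining center $j$, apply Lemma~\ref{lem:variance} to the cap maximum in $r_j^\perp$, with cap correlations near $\lambda$. It shows that, outside an event of probability $o(\mu(j))$, the event $\ip{g}{r_j}\ge s$ forces some cap vector to reach about $\bigl(1+\lambda-\sqrt{c(\tfrac{c}{1+c}-\lambda^2)}\bigr)s$. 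This level exceeds $s$ exactly when $\lambda>\frac{c}{1+c}$. Converting that gain into $\mu(H)\le1/\log s$ is the substance of Chlamt\'a\v{c}'s theorem, and your proposal does not supply it.
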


This is the form restated by Bansal, Huang, and Lee
\cite[Theorem~4.11]{BHL}. For matching rounding, the symmetric
weights $\mu_i(j)=\Prb[ij\in M]/\phib(t)$ form outgoing packings
at threshold $\sqrt3t$ (Appendix~\ref{app:approx}).
We apply Lemma~\ref{lem:spread-background} with a fixed
$c_{\rm pr}>c_*$ and prune to a graph whose outgoing covers and
two-walk supports persist under extension. We provide this buffered form in Appendix~\ref{app:pruning}. Importantly, the algorithm rounds full shells and does not construct the packings.\\

\subsection{Gaussian convexity and lower tails}
\label{sec:background-chen}

We use the following specialization of Chen's results to control
the loss when we compose two covers.

\begin{lemma}[Chen {
\cite[Theorems~1 and~3]{Chen}}]\label{lem:chen-background}
Let $F:\R^N\to\R$ be convex and Lipschitz, and put
$\mu=\E F(g)$ and $v=\Var(F(g))$.
The function $\Lambda_F(\lambda)=\lambda^{-1}\log\E e^{\lambda F(g)}$,
extended by $\Lambda_F(0)=\mu$, is convex on $\R$.
Moreover, if $v>0$, then
$\Prb[F(g)\le\mu-x]\le e^{-x^2/(2v)}$ for every $x>0$.
If $v=0$, then $F(g)=\mu$ almost surely.
\end{lemma}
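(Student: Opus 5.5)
The lemma restates two results of Chen~\cite{Chen}, so the plan is to quote the convexity statement and then derive the lower tail from it by a Chernoff argument. That way the only input we rely on is the convexity of $\Lambda_F$.

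\textbf{Step 1: regularity of $\Lambda_F$.} Let $L$ be the Lipschitz constant of $F$. Gaussian concentration for Lipschitz functions gives $\E e^{\lambda F(g)}\le e^{\lambda\mu+\lambda^2L^2/2}<\infty$ for every real $\lambda$. Hence $\psi(\lambda):=\log\E e^{\lambda F(g)}$ is finite and real-analytic on $\R$. Its expansion at the origin is
\[
\psi(\lambda)=\lambda\mu+\tfrac12\lambda^2v+O(\lambda^3).
\]
It follows that $\Lambda_F(\lambda)=\psi(\lambda)/\lambda$, extended by $\Lambda_F(0)=\mu$, is smooth, with $\Lambda_F'(0)=v/2$.

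\textbf{Step 2: convexity.} Convexity of $\Lambda_F$ on $\R$ for convex $F$ is exactly \cite[Theorem~1]{Chen}, and I would cite it rather than reprove it. Chen's argument runs through Ehrhard's inequality~\cite{Ehrhard} applied to the convex level sets of $F$. This is the only nontrivial ingredient and the main obstacle if one wanted a self-contained proof. In that case I would follow Chen: use Ehrhard to show that $\lambda\mapsto\Phi^{-1}$ of the relevant tilted measures behaves concavely, and then pass to the logarithmic moment generating function.

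\textbf{Step 3: lower tail.} Convexity and Step 1 give the tangent-line bound $\Lambda_F(\lambda)\ge\mu+\lambda v/2$ for all $\lambda\in\R$. Take $\lambda=-s$ with $s>0$ and multiply by $-s$, which flips the inequality:
\[
\psi(-s)\le -s\mu+\tfrac12 s^2v .
\]
Markov's inequality applied to $e^{-sF}$ then gives
\[
\Prb[F(g)\le\mu-x]\le e^{s(\mu-x)}\E e^{-sF(g)}\le e^{-sx+s^2v/2}.
\]
When $v>0$, choosing $s=x/v$ yields $e^{-x^2/(2v)}$, which matches \cite[Theorem~3]{Chen}.

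\textbf{Step 4: degenerate case.} If $v=0$, then $F(g)$ has zero variance, so $F(g)=\mu$ almost surely.
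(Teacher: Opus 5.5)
Your argument is correct, and it differs from the paper mainly in how much it takes from Chen. The paper gives no proof of this lemma. It cites Chen's Theorems~1 and~3 as two separate inputs. In the proof of Lemma~\ref{lem:variance} it uses the tangent-line inequality $\Lambda_F(\lambda)\ge\mu+\lambda v/2$ only for $\lambda>0$, to bound the variance against the union bound $\E e^{\lambda F}\le m e^{\lambda^2r^2/2}$, and it quotes the lower tail separately.

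You cite only the convexity statement and obtain the lower tail from the same tangent line evaluated at $\lambda=-s<0$. Multiplying by $-s$ gives the sub-Gaussian bound $\E e^{-s(F-\mu)}\le e^{s^2v/2}$, and Chernoff with $s=x/v$ gives $e^{-x^2/(2v)}$. Your Step~1 also makes explicit two facts the paper uses without proof: the moment generating function is finite on all of $\R$, and $\Lambda_F'(0)=v/2$.

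Your route is more economical and more unified: one convexity fact, read on the two sides of the origin, yields both the variance bound and the lower tail that feed Lemma~\ref{lem:compose}. The cost is that you genuinely need convexity of $\Lambda_F$ on the negative half-line. The paper's own use of convexity in Lemma~\ref{lem:variance} needs only $\lambda>0$, and by citing Theorem~3 separately it does not rely on that part of Theorem~1.

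Your description of how Chen proves Theorem~1, via Ehrhard's inequality on level sets, plays no role in your argument and can be dropped.
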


A finite maximum of linear forms is convex and Lipschitz.
Lemma~\ref{lem:variance} applies Chen's convexity to bound its
variance in terms of its mean, cardinality, and vector norms.
The lower-tail estimate then gives the composition bound in
Lemma~\ref{lem:compose}.

\subsection{Dense--sparse progress and parameters}
\label{sec:background-progress}

\begin{definition}[Progress toward a coloring {
\cite{Blum}}]\label{def:coloring-progress}
For a promised 3-colorable $n$-vertex graph, \emph{progress toward
an $f(n)$-coloring} means finding one of the following:
\begin{enumerate}[label=(\roman*),leftmargin=2em,itemsep=0pt,topsep=2pt]
  \item an independent set of size $\Omega(n/f(n))$;
  \item a nonempty independent set $I$ with $|N_G(I)|=O(f(n)|I|)$;
  \item distinct vertices with the same color in every proper 3-coloring.\\
\end{enumerate}
\end{definition}

Our sparse routine provides the first type. We use the following
two results for dense graphs and for combining the two degree regimes.

\begin{lemma}[Kawarabayashi--Thorup--Yoneda {
\cite[Theorem~1]{KTY}}]\label{lem:dense-background}
On a promised 3-colorable $n$-vertex graph of minimum degree at
least $\Delta>\sqrt n$, a polynomial-time algorithm makes progress
toward an $n^{o(1)}\sqrt{n/\Delta}$-coloring.
\end{lemma}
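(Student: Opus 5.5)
This lemma is imported from Kawarabayashi, Thorup, and Yoneda~\cite[Theorem~1]{KTY}. The plan is therefore not to reprove their combinatorial machinery. Instead I will check that their statement, in their notation, gives exactly the form used here, and then record the structure of their argument so the interface with our sparse routine is clear.

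\textbf{Step 1: matching hypotheses and conclusion.} I will confirm three points.
\begin{itemize}
\item Their ``minimum degree $\Delta>\sqrt n$'' hypothesis is the same as ours. In particular, it is stated for the current induced subgraph on which the dense--sparse recursion is run, not for the original input graph.
\item Their notion of progress coincides with Definition~\ref{def:coloring-progress}: a large independent set, a nonempty independent set $I$ with a small neighborhood, or two vertices forced to share a color.
\item Their color bound $n^{o(1)}\sqrt{n/\Delta}$ is uniform over the degree range, so the $n^{o(1)}$ loss can be absorbed later via the slack $c_*=1/4+10^{-6}$ in the choice of $\alpha$.
\end{itemize}

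\textbf{Step 2: the internal structure of their argument.} I will also note why each outcome arises.
\begin{itemize}
\item For a vertex $v$ of degree at least $\Delta$, the neighborhood $N(v)$ is $2$-colorable in every $3$-coloring. Their argument grows two-colorable ``cores'' from such neighborhoods, following Blum~\cite{Blum} and Kawarabayashi--Thorup~\cite{KT17}.
\item Either a core's neighborhood expands by less than a $\sqrt{n/\Delta}\cdot n^{o(1)}$ factor, which gives outcome (ii).
\item Or repeated expansion over $n^{o(1)}$ rounds certifies that two vertices must share a color, which gives outcome (iii).
\item Or a large bipartite piece directly yields an independent set of size $\Omega(\sqrt{\Delta/n}\,n^{1-o(1)})$, which gives outcome (i).
\end{itemize}

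\textbf{Main obstacle.} The only substantive issue I expect is making the $n^{o(1)}$ factor compatible with our recursion. Progress is applied up to $n$ times, and the final color count in Theorem~\ref{thm:main} is a clean $O(n^{4/23})$ rather than $n^{4/23+o(1)}$.

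I would handle this exactly as the introduction indicates. Run the dense case with degree threshold $\Delta=n^{1-2\alpha'}$ for an $\alpha'$ slightly below $4/23$, chosen via $c_*$. The dense guarantee $n^{o(1)}\sqrt{n/\Delta}=n^{\alpha'+o(1)}$ is then $O(n^{4/23})$ for large $n$. The sparse side still has polynomial room to spare at this slightly higher degree.
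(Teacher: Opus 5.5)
Your proposal matches the paper, which gives no proof of this lemma and simply imports it from \cite[Theorem~1]{KTY}; your sketch of the internals of their argument is extra and is not needed here. The exponent bookkeeping you flag is handled outside the lemma, in the dense--sparse combination, by taking the degree threshold $\Delta=n^{\beta}=n^{1-2\alpha}$ with $\alpha=1/(5+3c_*)<4/23$, together with an intermediate $\alpha'\in(\alpha,4/23)$ that absorbs the $n^{o(1)}$ and polylogarithmic losses.
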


\begin{lemma}[Dense--sparse combination {
\cite[Theorem~2.6]{BHL}}]\label{lem:combination-background}
Fix $a,b\in(0,1)$. Suppose polynomial-time algorithms make progress
toward an $\widetilde O(n^a)$-coloring whenever the maximum degree
is at most $n^b$, and whenever the minimum degree is at least $n^b$.
Then a polynomial-time algorithm colors every promised 3-colorable
$n$-vertex graph with $\widetilde O(n^a)$ colors.
\end{lemma}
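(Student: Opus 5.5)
The plan is to reduce to Blum's framework: making progress toward an $f(n)$-coloring in every promised $3$-colorable graph, with $f$ nondecreasing and polynomially bounded, yields an $\widetilde O(f(n))$-coloring in polynomial time (Blum~\cite{Blum}). So it suffices to make progress toward an $\widetilde O(n^a)$-coloring on an arbitrary promised $3$-colorable $n$-vertex graph $G$, using the two hypothesized routines as black boxes. A progress type~(iii) outcome, vertices forced to share a color, is always usable directly: merge the vertices and recurse on a smaller graph. The work lies in types~(i) and~(ii).

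\textbf{Splitting by degree.} I would peel $G$: repeatedly delete a vertex whose degree in the current graph is below $n^b$. Let $L$ be the set of deleted vertices and $C$ the remaining core.

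If $|L|\ge n/2$, I would use the low-degree part. Take $L'\subseteq L$ to be the vertices of $G$-degree at most $n^b$, or more carefully a large subset of $L$ whose induced subgraph has maximum degree at most $n^b$. Running the sparse routine on $G[L']$ gives an independent set of size $\Omega(|L'|/|L'|^a)=\Omega(n^{1-a})$, which is type~(i) progress for $G$. The remaining possibilities are:
\begin{itemize}
\item if $L'$ is small, then many vertices of $G$ have degree above $n^b$, and I move to the dense case;
\item a type~(ii) outcome on $G[L']$ can be lifted only if $N_G(I)$ is controlled, so I would prefer sparse routines that return type~(i), as is the case in this paper.
\end{itemize}

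If $C$ is large, then $G[C]$ is a $3$-colorable graph on $m=|C|\le n$ vertices with minimum degree at least $n^b\ge m^b$. The dense routine then gives progress toward an $\widetilde O(m^a)\subseteq\widetilde O(n^a)$-coloring. A type~(i) set of size $\Omega(m^{1-a})$ is also type~(i) for $G$ when $m\ge n/\mathrm{polylog}\,n$. Type~(iii) transfers verbatim, since any proper $3$-coloring of $G$ restricts to one of $G[C]$.

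\textbf{Main obstacle: lifting type~(ii) progress from the core to $G$.} A set with $|N_{G[C]}(I)|=O(n^a|I|)$ may have many neighbors in $L$. My first attempt is an amortized outer loop instead of a one-shot lift. I would maintain the current graph and use fresh colors only in two ways: an independent set of size $\Omega(n'^{1-a})$, where $n'$ is the current number of vertices, costs one color; a type~(ii) set $I$ receives one color, and then $I\cup N(I)$ is handled via the bipartite/$2$-colorable structure as in Blum's argument, charging $O(n^a)$ colors per $|I|$ neighbors. The charge is against vertices of $C$ only. The vertices of $L$ adjacent to $I$ are deferred: each of them loses at most $n^b$ core neighbors over the whole process. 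After the core is exhausted, $G[L]$ becomes a graph of maximum degree at most $n^b$ (recomputed on the survivors), where the sparse routine applies. The key check is that the number of colors summed over the geometric decrease in the vertex count remains $\widetilde O(n^a)$. This is the standard $\log n$-phase accounting; the polylogarithmic factors are absorbed into $\widetilde O$. If this accounting fails, I would fall back on citing the proof of \cite[Theorem~2.6]{BHL} for the lifting step.
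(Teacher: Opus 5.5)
\textbf{There is a genuine gap.} The paper does not prove this lemma; it imports it from \cite[Theorem~2.6]{BHL}, following Kawarabayashi--Thorup. Your argument therefore has to stand on its own, and it breaks down exactly where you locate the obstacle. You never turn a type~(ii) answer $I\subseteq C$ with $|N_{G[C]}(I)|=\widetilde O(n^a)|I|$ into something usable for $G$, and the amortized loop you offer instead rests on false or unverified steps:
\begin{itemize}
\item For an independent set $I$ with $|I|\ge 2$, $G[N(I)]$ need not be bipartite, because vertices of $I$ may receive different colors in the hidden $3$-coloring. Only $N(v)$ of a single vertex is bipartite.
\item Peeling vertices of degree below $n^b$ makes $G[L]$ only $n^b$-degenerate, not of maximum degree at most $n^b$. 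A hub whose low-degree neighbors are peeled before it stays in $L$ with arbitrarily large degree. So neither your final sparse call nor ``a large subset of $L$ whose induced subgraph has maximum degree at most $n^b$'' is justified; such a subset exists only with degree bound $2n^b$.
\item The ``standard $\log n$-phase accounting'' is asserted, not carried out.
\item Falling back on \cite[Theorem~2.6]{BHL} is circular.
\item You set aside type~(ii) answers from the sparse routine, which the hypothesis permits.
\end{itemize}

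\textbf{The missing idea} is Blum's accumulation of type~(ii) sets, combined with re-peeling. It gives type~(i) or~(iii) progress for $G$ itself, after which Blum's theorem finishes with no special loop. Write $f=\widetilde O(n^a)$.
\begin{itemize}
\item If $|L|\ge n/2$, go to the sparse step below with $P=L$.
\item Otherwise set $C_1=C$. While $|C_j|\ge|C|/2$, run the dense routine on $G[C_j]$, whose minimum degree is at least $n^b\ge|C_j|^b$. Stop on type~(i), which has size $\Omega(n/f)$, or on type~(iii), which transfers as you note. On type~(ii), delete $I_j\cup N_{G[C_j]}(I_j)$, re-peel at threshold $n^b$, and call the new core $C_{j+1}$.
\item Since $I_{j'}\subseteq C_j\setminus N_G(I_j)$ for $j'>j$, the union $\bigcup_j I_j$ is independent in $G$.
\item When the loop ends, more than $n/4$ vertices have left $C$. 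If $\sum_j|I_j|\ge n/(8(f+1))$, the union is type~(i) progress. Otherwise the deleted neighborhoods total less than $n/8$, so the set $P$ of re-peeled vertices satisfies $|P|>n/8$.
\item \emph{Sparse step.} By the peeling order, $G[P]$ is $n^b$-degenerate. Hence it has fewer than $n^b|P|$ edges, and at least half of $P$ has $G[P]$-degree below $2n^b$.
\item Run the sparse routine on that induced subgraph. Take $O(1)$ disjoint copies so the vertex count $N$ satisfies $N^b\ge 2n^b$; answers of each type localize to a single copy.
\item Upgrade type~(ii) answers there to type~(i) by the same accumulation. No re-peeling is needed, because a maximum-degree bound is hereditary.
\end{itemize}
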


We use the formulation of Bansal, Huang, and Lee above, based on
Kawarabayashi and Thorup~\cite{KT17}. Balancing our sparse cost
$\Delta^{1/D}$ with the dense cost $\sqrt{n/\Delta}$ gives
$\Delta=n^{D/(D+2)}$. We fix
\begin{equation}\label{eq:parameters}
\begin{aligned}
 c_*&:=\frac14+\frac1{10^6},
 &D&:=3(1+c_*),
 &d&:=\frac{c_*}{1+c_*},
&\alpha&:=\frac1{5+3c_*},
 &\beta&:=\frac{D}{5+3c_*}.
\end{aligned}
\end{equation}
Set $\Delta=n^\beta$ and choose $t$ by
$\phib(t)=\Delta^{-1/D}$, so the sparse target is
$n\phib(t)=n^{1-\alpha}$. For the shell bounds, put
\begin{equation}\label{eq:scales}
\begin{aligned}
 \log\Delta&=\frac D2t^2+o(t^2),
 &R_*&:=\frac{4+3c_*}{D}, &S&:=R_*+\frac1{10000},
 &\Delta^{R_*}&=n^{1-\alpha}.
\end{aligned}
\end{equation}
The slack $S-R_*$ absorbs shell-rounding losses, and
$\alpha<4/23$ absorbs the subpolynomial costs of the coloring
reduction. Thus, negligible terms are uniform for recursion depth at most $404$.
\section{Gaussian maxima and composition}\label{sec:gaussian}
A high cover threshold forces the inner Gaussian maximum to have a
large mean. The following bound then limits its variance and, hence,
the loss in composing covers.

\begin{lemma}\label{lem:variance}
Let $F(g)=\max_{1\le j\le m}\ip{g}{u_j}$, where $g$ is standard Gaussian,
$m\ge2$, and $\|u_j\|\le r$. Write $\mu=\E F$ and $v=\Var(F)$.
Then
\begin{equation}\label{eq:variance}
 v\le r^2-\frac{\mu^2}{2\log m}.
\end{equation}
For $x\ge0$, $\Prb(F\le\mu-x)\le\exp(-x^2/(2v))$, with the usual
interpretation when $v=0$.
\end{lemma}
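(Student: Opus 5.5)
The plan is to get \eqref{eq:variance} by comparing two bounds on Chen's normalized log-moment-generating function $\Lambda_F(\lambda)=\lambda^{-1}\log\E e^{\lambda F(g)}$ for $\lambda>0$. The lower bound comes from convexity (Lemma~\ref{lem:chen-background}); the upper bound comes from a union bound over the $m$ linear forms. The tail statement is then just the second half of Lemma~\ref{lem:chen-background}.

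\textbf{Step 1 (hypotheses).} $F$ is a finite maximum of linear forms, so it is convex. It is $r$-Lipschitz, because each form $g\mapsto\ip{g}{u_j}$ is $\|u_j\|$-Lipschitz and $\|u_j\|\le r$. Since $|F(g)|\le r\|g\|$, the moment-generating function $\E e^{\lambda F}$ is finite for every real $\lambda$ and real-analytic in $\lambda$. Expanding the cumulants $\log\E e^{\lambda F}=\lambda\mu+\tfrac{\lambda^2}{2}v+O(\lambda^3)$ gives
\[
 \Lambda_F(\lambda)=\mu+\tfrac{v}{2}\lambda+O(\lambda^2).
\]
Hence $\Lambda_F$ is differentiable at $0$ with $\Lambda_F'(0)=v/2$.

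\textbf{Step 2 (two bounds).} Convexity of $\Lambda_F$ means it lies above its tangent line at $0$, so $\Lambda_F(\lambda)\ge\mu+\lambda v/2$ for all $\lambda$. For an upper bound, take $\lambda>0$ and use $e^{\lambda\max_j X_j}\le\sum_j e^{\lambda X_j}$:
\[
 \E e^{\lambda F}\le\sum_{j}\E e^{\lambda\ip{g}{u_j}}=\sum_j e^{\lambda^2\|u_j\|^2/2}\le m\,e^{\lambda^2r^2/2}.
\]
This gives $\Lambda_F(\lambda)\le \lambda^{-1}\log m+\lambda r^2/2$. Combining the two bounds and writing $s=1/\lambda>0$, we get
\[
 r^2-v\;\ge\;2\mu s-2(\log m)s^2\qquad\text{for all }s>0.
\]

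\textbf{Step 3 (optimize in $s$).} Note first that $\mu\ge\E\ip{g}{u_1}=0$. If $\mu>0$, choose $s=\mu/(2\log m)$; this is legitimate since $m\ge2$ gives $\log m>0$. The right-hand side then equals $\mu^2/(2\log m)$, which is exactly \eqref{eq:variance}. If $\mu=0$, the claim reduces to $v\le r^2$. This follows from the Gaussian Poincaré inequality for an $r$-Lipschitz function, or equivalently from letting $s\downarrow0$ above.

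\textbf{Step 4 (tail).} For $v>0$ the bound $\Prb(F\le\mu-x)\le e^{-x^2/(2v)}$ is verbatim Lemma~\ref{lem:chen-background}. If $v=0$, then $F=\mu$ almost surely, so the probability is $0$ for every $x>0$; this matches reading the right-hand side as $0$. At $x=0$ the bound is trivially $1$.

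The only delicate point I expect is Step 1: justifying that Chen's $\Lambda_F$ has derivative exactly $v/2$ at the origin, so that the tangent-line inequality is available. This needs finiteness of the moment-generating function near $0$, which the Lipschitz bound supplies, together with the cumulant expansion. Everything else is a one-line optimization.
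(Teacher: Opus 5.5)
Your proposal is correct and follows the paper's proof step for step. Both use the tangent-line lower bound $\Lambda_F(\lambda)\ge\mu+\lambda v/2$ from Chen's convexity and the union-bound upper bound $\Lambda_F(\lambda)\le\lambda^{-1}\log m+\lambda r^2/2$, then substitute $\lambda=2\log m/\mu$ (your $s=\mu/(2\log m)$), handle $\mu=0$ by sending $\lambda\to\infty$, and take the lower tail directly from Chen.
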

\begin{proof}
The function $F$ is convex and $r$-Lipschitz. Chen's Theorems~1 and~3
\cite{Chen} give convexity of
$\Lambda(\lambda)=\lambda^{-1}\log\E e^{\lambda F}$,
continuously extended at zero, and the stated variance-scale lower tail.
The expansion
$\log\E e^{\lambda F}=\lambda\mu+\lambda^2v/2+o(\lambda^2)$
as $\lambda\to0$ gives $\Lambda(0)=\mu$ and
$\Lambda'(0)=v/2$, so convexity implies
$\Lambda(\lambda)\ge\mu+\lambda v/2$ for $\lambda>0$.
For such $\lambda$, we have
\begin{align*}
 \E e^{\lambda F}
   &=\E\max_{1\le j\le m}e^{\lambda\ip{g}{u_j}}\\
   &\le\sum_{j=1}^m\E e^{\lambda\ip{g}{u_j}}\\
   &=\sum_{j=1}^m\exp\!\left(\frac{\lambda^2\|u_j\|^2}{2}\right)\le m\exp\!\left(\frac{\lambda^2r^2}{2}\right),
    \end{align*}
    and $\mu+\frac{\lambda v}{2}
   \le\Lambda(\lambda)
    \le\frac{\log m}{\lambda}+\frac{\lambda r^2}{2}$ with $v\le r^2-\frac{2\mu}{\lambda}
                  +\frac{2\log m}{\lambda^2}$.
Since $\E F\ge\E\ip{g}{u_1}=0$, we have $\mu\ge0$.
For $\mu>0$, substituting $\lambda=2\log(m)/\mu$ gives
\begin{align*}v &\le r^2-\mu^2/\log m+\mu^2/(2\log m)\\
&=r^2-\mu^2/(2\log m).\end{align*} For $\mu=0$, taking $\lambda\to\infty$
gives $v\le r^2$, and both cases give \eqref{eq:variance}.
\end{proof}

Now, recall that an $(s,\delta)$-cover has Gaussian maximum at least $s$
with probability at least $\delta$. Consider an inner cover of at most
$\Delta$ vectors of norm at most $r\le1$, with threshold
$s_t=\sqrt3t-o(t)$ and probability at least $\delta_t=t^{-O(1)}$.
For its maximum $F$ and mean $\mu=\E F$, Gaussian concentration gives
\begin{align*}
 \delta_t
   &\le\Prb(F\ge s_t)
    \le\exp\!\left(-\frac{(s_t-\mu)_+^2}{2r^2}\right),
\end{align*}
with \begin{align*}\mu&\ge s_t-r\sqrt{2\log(1/\delta_t)}\\
    &\ge\sqrt3t-o(t)-O(\sqrt{\log t})
     \\
     &=\sqrt3t-o(t).
\end{align*}
Here $r>0$ for large $t$, since a zero family cannot reach $s_t>0$.
A single distinct vector would give
$\Prb(F\ge s_t)\le e^{-s_t^2/(2r^2)}
\le e^{-(3/2-o(1))t^2}=o(\delta_t)$, also impossible.
Thus $2\le m\le\Delta$, and Lemma~\ref{lem:variance} gives
\begin{align}
 \Var(F)
   &\le r^2-\frac{\mu^2}{2\log m}
    \\
    &\le r^2-\frac{\mu^2}{2\log\Delta}\notag\\
   &\le r^2-\frac{3t^2-o(t^2)}{Dt^2+o(t^2)}\notag\\
   &=r^2-\frac1{1+c_*}+o(1).\label{eq:normvariance}
\end{align}
Nonnegativity of the variance therefore requires
$r^2\ge1/(1+c_*)-o(1)$, and all the errors are uniform whenever the threshold
and cover-probability assumptions are uniform.

\begin{lemma}[Norm-sensitive composition]\label{lem:compose}
Fix constants $L>0$ and $A\ge0$. Suppose a family of unit vectors $(x_j)_{j\in J}$ is an
$(Lt,t^{-O(1)})$-cover, with $|J|\le\Delta^{A+o(1)}$. For each $j$, let
$U_j\subset x_j^\perp$ be a fixed inner $(\sqrt3t-o(t),t^{-O(1)})$-cover,
with at most $\Delta$ vectors and a uniform norm upper bound $r_j\le1$.
Then there is a joint witness, on an event of probability $1/\mathrm{poly}(t)$, satisfying $\ip{g}{x_j}\ge Lt$ and
\begin{align*}
 \max_{u\in U_j}\ip{g}{u}
 &\ge\left(\sqrt3-
 \sqrt{\left(r_j^2-\frac1{1+c_*}\right)_+}
 \sqrt{(DA-L^2)_+}-o(1)\right)t.
\end{align*}
The error is uniform when $L$ ranges over a fixed compact subset of
$(0,\infty)$ and $A$ ranges over a fixed bounded interval, provided
the cover-probability and inner-threshold bounds are uniform.
\end{lemma}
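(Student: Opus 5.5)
The plan is to exploit the orthogonality $U_j\subset x_j^\perp$. It makes the outer coordinate $\ip{g}{x_j}$ independent of the inner maximum $F_j(g):=\max_{u\in U_j}\ip{g}{u}$. We then remove, by a union bound over $J$, the bad events on which some $j$ passes the outer threshold while its inner maximum falls well below its mean. Lemma~\ref{lem:variance} controls how far below the mean we must allow.

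\textbf{Step 1: per-index mean and variance.} Fix $j$ and write $\mu_j=\E F_j$ and $v_j=\Var F_j$. The computation preceding the lemma (Gaussian concentration together with the inner cover property) gives two facts, with errors uniform in $j$:
\[
\mu_j\ge\sqrt3t-o(t),\qquad v_j\le \Bigl(r_j^2-\tfrac1{1+c_*}\Bigr)_++o(1),
\]
the second by \eqref{eq:normvariance}.

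\textbf{Step 2: bound the bad event for one index.} Since $\ip{g}{x_j}$ is independent of $F_j$, the lower-tail bound of Lemma~\ref{lem:variance} gives, for each $x_j\ge0$,
\[
\Prb\bigl[\ip{g}{x_j}\ge Lt,\ F_j\le\mu_j-x_j\bigr]\le\phib(Lt)\,e^{-x_j^2/(2v_j)}.
\]
If $v_j=0$, then $F_j=\mu_j$ almost surely and the bad event is empty. To avoid dividing by tiny variances, I would work with the regularized quantity $v_j':=v_j+1/t$, for which the same tail bound holds.

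\textbf{Step 3: choose the slack and take a union bound.} The outer cover gives $\Prb[\max_j\ip{g}{x_j}\ge Lt]\ge t^{-C}$ for some constant $C$. Set
\[
x_j:=\sqrt{v_j'\bigl((DA-L^2)_+t^2+o(t^2)+2C'\log t\bigr)},
\]
with $C'$ large and the $o(t^2)$ term absorbing the errors below. We have $\phib(Lt)\le e^{-L^2t^2/2}$ and, since $\log\Delta=\tfrac D2t^2+o(t^2)$, also $|J|\le\Delta^{A+o(1)}=e^{(DA/2)t^2+o(t^2)}$. Hence the total bad probability is at most
\[
\exp\!\Bigl(\tfrac{DA-L^2}{2}t^2+o(t^2)-\tfrac{(DA-L^2)_+}{2}t^2-o(t^2)-C'\log t\Bigr)\le t^{-C}/2 .
\]
The case $DA<L^2$ is harmless: the exponent is then already very negative. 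Consequently, with probability at least $t^{-C}/2$ some $j$ has $\ip{g}{x_j}\ge Lt$ and $F_j>\mu_j-x_j$. Since $\sqrt{v_j'}\le\sqrt{(r_j^2-1/(1+c_*))_+}+o(1)$, on this event
\[
F_j\ge\sqrt3t-o(t)-\sqrt{\bigl(r_j^2-\tfrac1{1+c_*}\bigr)_+}\sqrt{(DA-L^2)_+}\,t-o(t),
\]
which is the claimed bound.

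\textbf{Expected obstacle.} The main difficulty is bookkeeping rather than any single inequality. The $o(t^2)$ terms in $|J|$ and $\log\Delta$ must be absorbed into $x_j$ so that they contribute only $o(t)$ after the square root. This works because $x_j$ carries $t\sqrt{v_j'}$ times a $(1+o(1))$ factor, and $v_j'$ is bounded by Lemma~\ref{lem:variance} since $r_j\le1$. The small-variance regime is handled by the regularization $v_j'=v_j+1/t$. Uniformity over $L$ in a compact subset of $(0,\infty)$ and $A$ in a bounded interval follows because every error term above depends only on bounds for $L$, $A$, $C$ and the uniform inner-threshold error.
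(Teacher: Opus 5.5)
Your proposal is correct and follows the paper's proof essentially step for step. The shared ingredients are independence of $\ip{g}{x_j}$ and $F_j$ (from $U_j\subset x_j^\perp$), Chen's variance-scale lower tail for each index, a union bound over $J$ weighted by $\phib(Lt)$, and the estimates $\mu_j\ge\sqrt3t-o(t)$ and $v_j\le(r_j^2-\frac1{1+c_*})_++o(1)$, which turn the slack into the stated loss.

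The differences are cosmetic. The paper sets the slack adaptively via $B_t=\log(2|J|\phib(Lt)/\delta)$, so the bad mass is exactly $\delta/2$; this removes the need for your constant $C'$ and your regularization $v_j+1/t$, and the case $v_j=0$ is handled by a strict inequality. You should also rename your slack variable, since $x_j$ already denotes the outer vectors.
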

\begin{proof}
Let $F_j(g)=\max_{u\in U_j}\ip{g}{u}$,
$\mu_j=\E F_j$, and $v_j=\Var(F_j)$. Set
$H_j=\{\ip{g}{x_j}\ge Lt\}$, $H=\bigcup_{j\in J}H_j$,
$\delta=\Prb(H)$, and
$B_t=\log(2|J|\phib(Lt)/\delta)$. Since the $x_j$ are unit vectors, we have that $ B_t\ge\log2$, and 
\begin{align*}
 t^{-O(1)}&\le\delta
       \le\sum_{j\in J}\Prb(H_j)=|J|\phib(Lt).
\end{align*}
Let $E_j=\{F_j<\mu_j-\sqrt{2v_jB_t}\}$.
Lemma~\ref{lem:variance} gives $\Prb(E_j)\le e^{-B_t}$ when
$v_j>0$; when $v_j=0$, the strict inequality makes $E_j$ null.
Since $U_j\subset x_j^\perp$, the event $E_j$ is independent of $H_j$.
Therefore
\begin{align*}
 \Prb\!\left(\bigcup_{j\in J}(H_j\cap E_j)\right)
   &\le\sum_{j\in J}\Prb(H_j\cap E_j)\\
   &=\sum_{j\in J}\Prb(H_j)\Prb(E_j)\\
   &\le |J|\phib(Lt)e^{-B_t}=\frac\delta2,
   \end{align*}
   and
   \begin{align*}
 \Prb\!\left(\bigcup_{j\in J}(H_j\cap E_j^c)\right)
   &\ge\Prb(H)-\Prb\!\left(\bigcup_{j\in J}(H_j\cap E_j)\right)\ge\frac\delta2.
\end{align*}
With probability at least $\delta/2$, some $j$ satisfies both
$\ip{g}{x_j}\ge Lt$ and $F_j\ge\mu_j-\sqrt{2v_jB_t}$.
To bound this threshold, we use Gaussian-tail asymptotics and $|J|$ to get
\begin{align*}
 B_t
   &=\log|J|+\log\phib(Lt)+\log(2/\delta)\\
   &\le(A+o(1))\log\Delta-\frac{L^2t^2}{2}+O(\log t)\\
   &=\frac{DA-L^2}{2}t^2+o(t^2)
    \\
    &\le\frac{(DA-L^2)_+}{2}t^2+o(t^2).
\end{align*}
Put $\sigma_j^2=(r_j^2-1/(1+c_*))_+$ and $Q=(DA-L^2)_+$.
The mean estimate and \eqref{eq:normvariance} allow a uniform
$\eta_t\ge0$, tending to zero, such that
$\mu_j\ge(\sqrt3-\eta_t)t,
 v_j\le\sigma_j^2+\eta_t$, and
$2B_t\le(Q+\eta_t)t^2$, and \begin{align*}
 \sqrt{2v_jB_t}
   &\le t\sqrt{(\sigma_j^2+\eta_t)(Q+\eta_t)}\\
   &\le t(\sigma_j+\sqrt{\eta_t})(\sqrt Q+\sqrt{\eta_t})\\
   &=t\bigl(\sigma_j\sqrt Q+o(1)\bigr).\end{align*} Thus, we have that
$\mu_j-\sqrt{2v_jB_t}\ge\bigl(\sqrt3-\sigma_j\sqrt Q-o(1)\bigr)t$. The error is uniform because $\sigma_j$ and $Q$ are bounded, including
when either is zero. So, desired threshold holds with
probability at least $\delta/2=t^{-O(1)}$.
\end{proof}

For a common norm bound $r$, setting $E=DA$, $\lambda=L$, $U=r$,
and $c=c_*$ in~\cite[Lemma~2.5]{narang2026improvedsdpcoloring3colorable}
gives the same leading loss
$\sqrt{(DA-L^2)(r^2-1/(1+c_*))}$ on the feasible range.
The formulations differ in their cover probabilities and norm bounds.
Appendix~\ref{app:entropy} gives a weighted form of our estimate.\\

\paragraph{Bucketing and projection conventions.}
We fix the families before the Gaussian in Lemma~\ref{lem:compose} is
sampled. Order a cover $(a_j)_{j\in J}$, and let $W_j$ be the event
that $j$ is the first index with $\ip{g}{a_j}\ge s$. For
$w_j=\Prb(W_j)$ and a partition $J=J_1\sqcup\cdots\sqcup J_B$,
disjointness gives
\begin{align*}
 \sum_{j\in J}w_j
   &=\Prb\!\left(\max_{j\in J}\ip{g}{a_j}\ge s\right)
     \ge\delta,
     \end{align*} Therefore, we have that
$\sum_{b=1}^B\sum_{j\in J_b}w_j\ge\delta$ and $ \max_{1\le b\le B}\sum_{j\in J_b}w_j\ge\frac\delta B$, and so \[\Prb\!\left(\max_{j\in J_b}\ip{g}{a_j}\ge s\right)\ge\Prb\!\left(\bigcup_{j\in J_b}W_j\right)
    =\sum_{j\in J_b}w_j.\]
A class of mass at least $\delta/B$ is an $(s,\delta/B)$-cover. Now, for each outer index, select a heavy inner correlation bucket, and then
retain a heavy class of outer indices with the same chosen bucket. We may restrict the root correlation in the same way if needed. A fixed number
$k$ of restrictions into $B_\ell=t^{O(1)}$ classes leaves probability
$\delta_{\rm retained}\ge\delta_{\rm original}
\prod_{\ell=1}^kB_\ell^{-1}=t^{-O(1)}$.
The norm bound in \eqref{eq:normvariance} thus applies to the entire
selected inner family.

For projection, fix a unit vector $e$ and a family $U$ with
$|\ip{u}{e}|\le1$. Writing $P_{e^\perp}u=u-\ip{u}{e}e$, we have $ \ip{g}{P_{e^\perp}u}-\ip{g}{u}=-\ip{u}{e}\ip{g}{e}$ and $\sup_{u\in U}\bigl|\ip{g}{P_{e^\perp}u}-\ip{g}{u}\bigr|\le|\ip{g}{e}|$, with\[\Prb\!\left(\sup_{u\in U}
   \bigl|\ip{g}{P_{e^\perp}u}-\ip{g}{u}\bigr|>\log t\right)
   \le2e^{-(\log t)^2/2}.\]
So, an $(s,\delta)$-cover with $\delta\ge t^{-C}$ projects to a
cover with threshold $s-\log t$ and probability bounded below by
\begin{align*}
 \Prb\!\left(\max_{u\in U}\ip{g}{P_{e^\perp}u}
                   \ge s-\log t\right)
   &\ge\delta-2e^{-(\log t)^2/2}\\
   &\ge\delta\left(1-2e^{-(\log t)^2/2+C\log t}\right)
    \\
    &=(1-o(1))\delta.
\end{align*}
We apply this estimate separately to each inner family before choosing
its heavy bucket; no simultaneous projection event over all parents is
needed.

With bucket width $h=1/\log\Delta$, replacing correlations by bucket
representatives changes the coefficients by at most $\omega(h)=o(1)$,
uniformly on the compact parameter ranges below. For any family of
unit vectors $(a_\ell)_{\ell=1}^N$ with $N\le n^{C_0}$,
\begin{align*}
 \Prb\!\left(\max_{1\le\ell\le N}|\ip{g}{a_\ell}|
              >\sqrt{2(C_0+2)\log n}\right)
   &\le\sum_{\ell=1}^N
        2e^{-(C_0+2)\log n}\\
   &\le2n^{-2}.
\end{align*}
Outside this event, the error is at most
$\omega(h)\max_\ell|\ip{g}{a_\ell}|
\le\omega(h)\sqrt{2(C_0+2)\log n}=o(t)$,
since \eqref{eq:scales} and $\Delta=n^\beta$ give
$\log n=\Theta(t^2)$. At fixed depth there are only polynomially many
graph-indexed vectors, and $n^{-2}=o(t^{-C})$ for every fixed $C$.
Thus bucket substitutions have uniform $o(t)$ threshold errors.

\subsection{Graph input and shell rounding}\label{sec:shells}
The symmetrized three-round SoS relaxation gives unit vectors $v_i$
with $\ip{v_i}{v_j}=-1/2$ on edges and consistent triple distributions.
For an edge $ij$, the unit tangent vector
$v_{ij}=(2v_j+v_i)/\sqrt3$ satisfies $v_{ij}\perp v_i$ and
$v_j=-v_i/2+\sqrt3v_{ij}/2$.
Fix a root $i$ and write $q_j=\ip{v_i}{v_j}$.
Color symmetry gives
$p_j:=\Prb[j=\mathsf R\mid i=\mathsf R]=(1+2q_j)/3$;
since $0\le p_j\le1$, we have $-1/2\le q_j\le1$.
For an edge $jk$, the endpoints cannot both have the root's color, so $p_j+p_k
 =\Prb[j=\mathsf R\text{ or }k=\mathsf R\mid i=\mathsf R]\le1$ and $\frac{2+2q_j+2q_k}{3}\le1$. Thus, writing $q=q_j$ and $z=q_k$, we obtain
$q+z\le\tfrac12$.

The shell-rounding loss will be bounded by
\begin{equation}\label{eq:rho}
\rho(q)=\begin{cases}
 \dfrac{1+2q}{3-6q},&-1/2\le q\le0,\\[1mm]
 \dfrac{1-4q^2}{3},&0<q<1/16,\\[1mm]
 \min\left\{\dfrac{1-4q^2}{3},\dfrac{1-4q}{2+4q}\right\},
     &1/16\le q<1/4,\\[1mm]
 0,&q\ge1/4.
\end{cases}
\end{equation}
\begin{lemma}[Rounding correlation shells]\label{lem:shell-rounding}
Fix a root $i$ and $m$ vertices $H\subseteq V\setminus\{i\}$ with
$|\ip{v_i}{v_j}-q|\le h$, where $q\in[-1/2,1]$ and $h=o(1)$.
From the symmetrized three-round SoS solution one can compute, in
polynomial time, an independent set in $G[H]$ of expected size
\[
 m\Delta^{-\rho(q)-o(1)},
\]
with $\rho$ given in \eqref{eq:rho}. The bound is uniform in $q$,
including under the inverse-polynomial SDP errors described in
Section~\ref{sec:algorithm}.
\end{lemma}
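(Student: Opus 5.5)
The plan is to reduce each shell to Lemma~\ref{lem:kms-background}. Write $\rho=1-2/\kappa$, so the exponent $(\kappa-2)/\kappa$ of that lemma equals $\rho$ exactly when $\kappa=2/(1-\rho)$. For each branch of \eqref{eq:rho} I will build, from the symmetrized three-round solution conditioned on the root, unit vectors on $H$ that form a vector $(\kappa(q)+o(1))$-coloring of $G[H]$. I then run Algorithm~\ref{alg:kms}; its output is $m\Delta^{-\rho(q)-o(1)}$ up to polylogarithmic factors, which are $\Delta^{-o(1)}$. The bucket width $h=o(1)$ and the inverse-polynomial SDP errors perturb all inner products by $o(1)$. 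Since $\kappa\mapsto(\kappa-2)/\kappa$ is Lipschitz on compact ranges, this costs only $\Delta^{-o(1)}$, uniformly in $q$. The branch $\rho=0$ for $q\ge 1/4$ is handled as in the last step below.

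\emph{Tangent projection (the $(1-4q^2)/3$ branch).} Put $w_j=(v_j-q_jv_i)/\sqrt{1-q_j^2}$. This is defined for $|q|<1$; near $q=1$ the third step below applies instead. For an edge $jk$ inside the shell, $\ip{v_j}{v_k}=-1/2$ and $q_j,q_k=q\pm h$, so
\[
\ip{w_j}{w_k}=\frac{-1/2-q^2}{1-q^2}+o(1).
\]
Solving $-1/(\kappa-1)=(-1/2-q^2)/(1-q^2)$ gives $\kappa=3/(1+2q^2)$, and hence $\rho=1-2(1+2q^2)/3=(1-4q^2)/3$. This is valid for every $q$. It gives the stated bound on $0<q<1/16$ and serves as one term of the minimum on $[1/16,1/4)$.

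\emph{Conditioned colour vectors (the other two rational branches).} Condition the local distributions on $i=\mathsf R$. By Lemma~\ref{lem:sos-background}, every $j$ then has $\Prb[j=\mathsf R\mid i=\mathsf R]=p_j=(1+2q_j)/3$, and on an edge $jk$ the two endpoints are never both red. The conditioned relaxation also gives indicator vectors $e_{j,c}$ for $c\in\{\mathsf R,\mathsf B,\mathsf G\}$ satisfying $\ip{e_{j,c}}{e_{k,c'}}=\Prb[j=c,k=c'\mid i=\mathsf R]$.

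For $q\le0$ the shell is mostly blue/green, hence close to bipartite. I take
\[
y_j=\frac{a\,e_{j,\mathsf R}+e_{j,\mathsf B}-e_{j,\mathsf G}}{\|\cdot\|}
\]
and choose $a$ to minimize the worst-case edge inner product. The worst case is over all conditional joint distributions on an edge compatible with the marginals $p,(1-p)/2,(1-p)/2$ and with forbidding equal colours. I expect the optimum to be $-1/(\kappa-1)$ with $\kappa=(3-6q)/(1-4q)$, which is precisely $\rho=(1+2q)/(3-6q)$. For $1/16\le q<1/4$ the red class is large. I would use a vector that weights the red indicator against the non-red ones, for example $e_{j,\mathsf R}-\gamma(e_{j,\mathsf B}+e_{j,\mathsf G})$, normalized. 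On an edge the both-red term vanishes, and optimizing $\gamma$ should give $\kappa=(2+4q)/(1+4q)$, i.e.\ $\rho=(1-4q)/(2+4q)$. The algorithm computes both colorings and keeps the larger independent set, which realizes the minimum. Checking these two optimizations is the main obstacle. It amounts to a small linear/quadratic program over $3\times3$ joint colour distributions, and I would first verify that its extremal point is the one forced by $p_j+p_k\le1$. This follows the root-conditioned constructions of Bansal--Huang--Lee.

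\emph{The range $q\ge1/4$.} If $q_j+q_k>1/2$ then $jk\notin E$, so for $q>1/4+h$ the whole shell is independent. Near $q=1/4$, the preceding formula already tends to $0$ as $q\to1/4$. By Lipschitz continuity in $q$, a width-$h$ shell therefore still gives exponent $o(1)$, which yields the uniform bound $m\Delta^{-o(1)}$.
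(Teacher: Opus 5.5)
Your outline matches the paper's proof. You use projection off $v_i$ for the $(1-4q^2)/3$ branch and root-conditioned color vectors for the other two branches. You run KMS on each construction and keep the best output, and you use independence or a near-$2$ coloring for $q\ge1/4$. The projection branch, the $q\ge1/4$ step and the uniformity remarks are fine.

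The gap is in the two color-vector branches, which carry the real content of the lemma. You leave them as an unperformed optimization (``I expect'', ``should give''), and the value you predict for the middle branch is wrong. For $q>0$, $\kappa=(2+4q)/(1+4q)$ is below $2$, which no vector coloring of a graph with an edge can achieve. It also gives $(\kappa-2)/\kappa=-4q/(2+4q)$, not $(1-4q)/(2+4q)$. The parameter that yields the stated loss is $\kappa_+=(4+8q)/(1+8q)$.

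No program over $3\times3$ joint tables is needed. Same-color products vanish on edges, so every inner product you use is a function of $p_j,p_k$ alone.

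\emph{The $q\le0$ branch.} With $a=0$ you get $\|e_{j,\mathsf B}-e_{j,\mathsf G}\|^2=1-p_j$ and $\ip{e_{j,\mathsf B}-e_{j,\mathsf G}}{e_{k,\mathsf B}-e_{k,\mathsf G}}=-\Prb[j\ne\mathsf R,\,k\ne\mathsf R\mid i=\mathsf R]=-(1-p_j-p_k)$. The normalized edge product is therefore $-(1-4q)/(2-2q)+O(h)$, giving $\kappa_-=(3-6q)/(1-4q)$. A nonzero $a$ only adds $a^2p_j$ to the squared norm, because the cross terms cancel under the $\mathsf B\leftrightarrow\mathsf G$ symmetry. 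So $a=0$, which is the paper's $x_j$, is optimal.

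\emph{The middle branch.} The squared norm is $p_j+\gamma^2(1-p_j)$ and the edge product is $\gamma^2(1-p_j-p_k)-\gamma(p_j+p_k)$. The paper takes $\gamma=1$, giving $1-2p_j-2p_k=-(1+8q)/3+O(h)$ and hence $\kappa_+$. If you do optimize, the minimizer is $\gamma=p/(1-p)$ with $p=(1+2q)/3$. This is the centered indicator $e_{j,\mathsf R}-p_jw_0$, where $w_0=\sum_c e_{j,c}$ is the common conditioned unit vector. Its normalized edge product is $-p/(1-p)+O(h)$, which is a vector $3/(1+2q)$-coloring with loss $(1-4q)/3$. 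That loss is at most $\rho(q)$ on $[0,1/4]$, so your plan, once carried out, proves the lemma with room to spare.
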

\begin{proof}
Let $v_{\{(i,\mathsf R),(j,a)\}}$ be the SoS vector for the indicated
partial color assignment, with $a\in\{\mathsf R,\mathsf G,\mathsf B\}$.
Since $\Prb[i=\mathsf R]=1/3$, the conditioned vectors
$w_{j,a}:=\sqrt3\,v_{\{(i,\mathsf R),(j,a)\}}$ satisfy $\|w_{j,a}\|^2=\Prb[j=a\mid i=\mathsf R]$ with 
$\ip{w_{j,a}}{w_{k,b}}=\Prb[j=a,\ k=b\mid i=\mathsf R]$ and $\sum_a\|w_{j,a}\|^2=1$, where $\ip{w_{j,a}}{w_{j,b}}=0$ for $a\neq b$.

For an edge $jk$, the same-color inner products vanish. In particular,
with $p_j=\|w_{j,\mathsf R}\|^2=(1+2q_j)/3$, we have $\ip{w_{j,\mathsf R}}{w_{k,\mathsf R}}=0$, with $\Prb[j\ne\mathsf R,\ k\ne\mathsf R\mid i=\mathsf R]=1-p_j-p_k$ and $\ip{w_{j,\mathsf G}}{w_{k,\mathsf B}}
 +\ip{w_{j,\mathsf B}}{w_{k,\mathsf G}}
=1-p_j-p_k$.

For $q\le0$, we use the construction of \cite[Lemma~3.2]{BHL}: let $x_j\coloneqq \frac{w_{j,\mathsf G}-w_{j,\mathsf B}}{\sqrt{1-p_j}}$. Then,
\begin{align*}
 \|x_j\|^2 &=\left\|\frac{w_{j,\mathsf G}-w_{j,\mathsf B}}{\sqrt{1-p_j}}\right\|^2=\frac{\|w_{j,\mathsf G}\|^2+\|w_{j,\mathsf B}\|^2}{1-p_j}
 =1,
 \end{align*}
 and
\begin{align*} \ip{x_j}{x_k}
 &=-\frac{1-p_j-p_k}{\sqrt{(1-p_j)(1-p_k)}}=-\frac{(1-4q)/3}{(2-2q)/3}+O(h)
=-\frac{1-4q}{2-2q}+O(h).
\end{align*}
Its parameter and rounding loss are given here by $\kappa_-(q)=1+\frac{2-2q}{1-4q}
 =\frac{3-6q}{1-4q}$ and $\frac{\kappa_-(q)-2}{\kappa_-(q)}=\frac{(3-6q)-2(1-4q)}{3-6q}
 =\frac{1+2q}{3-6q}$, where all denominators are bounded below uniformly for
$q\in[-1/2,0]$, and the limiting parameter at $q=-1/2$ is two.

For $0\le q\le1/4$, project off the root. On an edge $jk$, let 
$x_j:=\frac{v_j-q_jv_i}{\sqrt{1-q_j^2}}$ where $\|x_j\|^2=1$. Then,
\begin{align*}
 \ip{x_j}{x_k}
 &=\frac{\ip{v_j}{v_k}-q_jq_k}
         {\sqrt{(1-q_j^2)(1-q_k^2)}}=-\frac{1/2+q^2}{1-q^2}+O(h)
         \end{align*} and we have that 
$\kappa_0(q)=1+\frac{1-q^2}{1/2+q^2}
 =\frac{3}{1+2q^2}$ and $\frac{\kappa_0(q)-2}{\kappa_0(q)}
 =1-\frac{2(1+2q^2)}{3}
 =\frac{1-4q^2}{3}$.

For $1/16\le q\le1/4$, we also use the construction of
\cite[Lemma~3.4]{BHL}. Assign signs $+1$ to $\mathsf R$ and $-1$ to
the other two colors, and set
$y_j:=w_{j,\mathsf R}-w_{j,\mathsf G}-w_{j,\mathsf B}$.
Orthogonality at each vertex gives $\|y_j\|^2=1$.
On an edge, exactly one endpoint has color $\mathsf R$ with conditional
probability $p_j+p_k$, and both avoid it with probability $1-p_j-p_k$.
Therefore
\begin{align*}
 \ip{y_j}{y_k}
 &=(1-p_j-p_k)-(p_j+p_k)\\
 &=1-2p_j-2p_k
 =-\frac{1+8q}{3}+O(h),
 \end{align*} with
$\kappa_+(q)=1+\frac{3}{1+8q}
 =\frac{4+8q}{1+8q}$, and 
$\frac{\kappa_+(q)-2}{\kappa_+(q)}=\frac{(4+8q)-2(1+8q)}{4+8q}
 =\frac{1-4q}{2+4q}$. Now, apply KMS to each available construction and keep the larger returned
set. A construction with loss $\ell$ gives
$\E|I|\ge\widetilde\Omega(m\Delta^{-\ell-O(h)})
=m\Delta^{-\ell-o(1)}$. Choosing the smaller available loss gives
the first three branches of \eqref{eq:rho}.

For $q\ge1/4$, the same $y_j$ construction gives
\begin{align*}
 \ip{y_j}{y_k}
 &=-\frac{1+8q}{3}+O(h)\le-1+O(h),\end{align*} with $
 \kappa=2+O(h)$ and $\frac{\kappa-2}{\kappa}=O(h)$. 
 This yields $\E|I|\ge m\Delta^{-o(1)}$.
If every $q_j>1/4$, the local constraint instead shows that $H$ itself
is independent: an edge would require both
$q_j+q_k>1/2$ and $q_j+q_k\le1/2$.

If moment residuals are bounded by $\xi$, the bound remains uniform. Conditioning divides
by $1/3+O(\xi)$, and every further denominator used above has a fixed
positive lower bound in its stated range. After normalization, the
edge correlations and vector parameters therefore change by
$O(h+\xi)$. For each construction this gives
\begin{align*}
 \kappa_{\rm used}=\kappa(q)+O(h+\xi) \implies \frac{\kappa_{\rm used}-2}{\kappa_{\rm used}}=\frac{\kappa(q)-2}{\kappa(q)}+O(h+\xi),\end{align*} and so
$\E|I|\ge m\Delta^{-\rho(q)-o(1)}$.
Near parameter two, choose a valid parameter at least
$2+O(h+\xi)$, and at exactly two, use bipartite coloring. The losses agree at $q=0$ and $q=1/4$, and the projection loss gives the minimum on both sides of $q=1/16$. So, the same estimate holds
uniformly for buckets crossing the boundaries.\looseness=-1
\end{proof}

\paragraph{Pruning.}
The modified global KMS routine samples a standard Gaussian $g$ and
selects $T_g:=\{i:\ip{g}{v_i}\ge t\}$. It computes a maximal matching
$\mathcal M_g$ in $G[T_g]$ and returns the unmatched vertices $U_g$.
Maximality makes $U_g$ independent, and
$|U_g|=|T_g|-2|\mathcal M_g|$ and 
$\E|T_g|=\sum_{i\in V}\Prb[\ip{g}{v_i}\ge t]=n\phib(t)$. If $\E|U_g|\ge n\phib(t)/4$, the sparse progress requirement is met.
Otherwise fix a constant $c_{\rm pr}>c_*$, sufficiently close to $c_*$,
and put $d_{\rm pr}=c_{\rm pr}/(1+c_{\rm pr})$. 

The buffered pruning
argument in Appendix~\ref{app:pruning} gives a nonempty retained
graph $G'$ with:
\begin{enumerate}
\item an outgoing $(\sqrt3t-o(t),t^{-O(1)})$-cover supported on
$\{v_{ij}:j\in N_{G'}(i)\}$ at every retained vertex;
\item for every retained edge $ij$, an outgoing two-walk subcover at
$j$ whose parameter
$a=\ip{v_{ji}}{v_{jk}}$ lies in $[-o(1),d_{\rm pr}+o(1)]$.
\end{enumerate}
Every outgoing family has at most $\Delta$ members, with all supports
in the same graph $G'$. An endpoint reached by a lift therefore retains
an outgoing cover for the next lift. The variance bound still uses
$c_*$. We certify initialization on $a\in[0,d]$; its strict margin
permits the fixed enlargement to $d_{\rm pr}$ in
Appendix~\ref{app:buffer}, while the spread theorem uses the fixed
parameter $c_{\rm pr}$.

\paragraph{Shell size and cover threshold.}
If a full root-correlation shell $H$ has at least
$\Delta^{S+\rho(q)}$ vertices, shell rounding gives
\begin{align*}
 \E|I|
 &\ge |H|\Delta^{-\rho(q)-o(1)}
 \ge\Delta^{S-o(1)}\\
 &=\Delta^{R_*}\Delta^{1/10000-o(1)}
 \\
 &=n^{1-\alpha}\Delta^{1/10000-o(1)},
\end{align*}
which exceeds the sparse target by a fixed polynomial factor.
Hence assume each full shell has fewer than $\Delta^{S+\rho(q)}$
vertices, up to bucket variation. Any endpoint subcover $J$ then satisfies $|J|\le\min\{n,\Delta^{S+\rho(q)+o(1)}\}$, so 
$D\log_\Delta|J|
 \le\min\{D\log_\Delta n,\ D(S+\rho(q))\}+o(1)$,
and $D\log_\Delta n=\frac D\beta=5+3c_*$.
For a cover of root directions at correlation $q$ and threshold $Lt$,
we may consequently use the cardinality exponent
\begin{equation}\label{eq:K}
 K(q):=\min\{5+3c_*,\ D(S+\rho(q))\}.
\end{equation}
The first term uses only the total number of vertices.
For unit directions $(p_j)_{j\in J}$ with cover probability
$\delta\ge t^{-C}$,
\begin{align*}
 t^{-C}
 &\le\delta
 \le\Prb\!\left[\bigcup_{j\in J}\{\ip{g}{p_j}\ge Lt\}\right]\\
 &\le\sum_{j\in J}\Prb[\ip{g}{p_j}\ge Lt]
 =|J|\phib(Lt)\\
 &\le\Delta^{K(q)/D+o(1)}e^{-L^2t^2/2}
 \\
 &=\exp\!\left(\frac{K(q)-L^2}{2}t^2+o(t^2)\right),
\end{align*}and $L^2\le K(q)+\frac{2C\log t}{t^2}+o(1)$.
It follows then, that every such cover satisfies $L^2\le K(q)+o(1)$,
unless an expected-output alternative already holds.

\subsection{The repeated neighborhood lift}\label{sec:lift}
\begin{lemma}\label{lem:lift}
Let $L>0$. Suppose retained endpoints $j$ form a cover of root directions
$p_j=(v_j-q_jv_i)/\sqrt{1-q_j^2}$ at threshold $Lt$, where
$q_j=\ip{v_i}{v_j}$ belongs to a bucket represented by
$q\in[-1/2,3/5]$. Under the size assumptions above, some retained
successor shell has correlation $z$ satisfying
\begin{equation}\label{eq:feasible}
 z\ge-\tfrac12,\quad q+z\le\tfrac12,\quad
 (2z+q)^2\le3d(1-q^2),
\end{equation}
up to vanishing errors, and forms a cover at threshold
$(M(q,L,z)-o(1))t$, where
\begin{equation}\label{eq:lift}
 M(q,L,z)=
 \frac{(1+2qz)L
 -\sqrt{K(q)-L^2}\sqrt{3d(1-q^2)-(2z+q)^2}
 +3\sqrt{1-q^2}}
 {2\sqrt{1-q^2}\sqrt{1-z^2}}.
\end{equation}
Here \eqref{eq:lift} is used on continuing states $L^2\le K(q)$;
positive parts and vanishing slack cover boundary perturbations.
\end{lemma}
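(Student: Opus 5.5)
The plan is to apply Lemma~\ref{lem:compose} once. The outer family is the cover of root directions $(p_j)$, and the inner families are the outgoing covers $\{v_{jk}\}$ at the retained endpoints, after removing their components along the root and along $p_j$. Throughout we work modulo the root direction. By the projection convention, $|\ip{g}{v_i}|\le\log t=o(t)$ off an event of negligible probability, so every appearance of $\ip{g}{v_i}$ costs only $o(t)$.

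\emph{Step 1: bucket the successors.} For each retained endpoint $j$, take the outgoing $(\sqrt3t-o(t),t^{-O(1)})$-cover $\{v_{jk}:k\in N_{G'}(j)\}$ supplied by pruning. Using the bucketing convention, restrict it to a heavy bucket of successor correlation $z=\ip{v_i}{v_k}$. Then retain a heavy class of outer indices $j$ sharing the same $z$. This costs only $t^{O(1)}$ factors in probability and $o(t)$ in threshold. Lemma~\ref{lem:sos-background} gives $z\ge-1/2$ and $q+z\le1/2$.

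\emph{Step 2: decompose the tangent vectors.} From $v_k=-v_j/2+\sqrt3v_{jk}/2$ we get $\ip{v_{jk}}{v_i}=(2z+q)/\sqrt3$. Since $v_{jk}\perp v_j$, the part of $v_{jk}$ lying in $\mathrm{span}(v_i,p_j)$ is a multiple $c$ of the unit vector $\sqrt{1-q^2}\,v_i-q\,p_j$ (the direction in that span orthogonal to $v_j$), with
\[
c=\frac{2z+q}{\sqrt3\sqrt{1-q^2}}.
\]
Write $v_{jk}=c(\sqrt{1-q^2}v_i-qp_j)+u_{jk}$ with $u_{jk}\perp v_i,v_j,p_j$. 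Then
\[
\|u_{jk}\|^2=r^2=1-\frac{(2z+q)^2}{3(1-q^2)}.
\]
Because $p_j$ and $v_i$ contribute at most $O(\log t)$ beyond their deterministic parts on the relevant events, $\{u_{jk}\}$ is still an inner $(\sqrt3t-o(t),t^{-O(1)})$-cover, and it lies in $p_j^\perp$ (projection convention). Its size is at most $\Delta$.

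Step~2 also yields the feasibility constraint in \eqref{eq:feasible}. Since $1/(1+c_*)=1-d$, the estimate \eqref{eq:normvariance} forces $r^2\ge1-d-o(1)$, which is exactly $(2z+q)^2\le3d(1-q^2)+o(1)$.

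\emph{Step 3: compose.} Apply Lemma~\ref{lem:compose} with outer cover $(p_j)$ at threshold $Lt$ and $A=K(q)/D$, which is justified by the shell-size bound \eqref{eq:K}. The lemma gives, with probability $t^{-O(1)}$, some $j$ with $\ip{g}{p_j}\ge Lt$ and
\[
\max_k\ip{g}{u_{jk}}\ge\bigl(\sqrt3-\sqrt{r^2-(1-d)}\,\sqrt{K(q)-L^2}-o(1)\bigr)t,
\qquad
r^2-(1-d)=\frac{3d(1-q^2)-(2z+q)^2}{3(1-q^2)}.
\]

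\emph{Step 4: substitute into the successor's root direction.} Use
\[
\ip{g}{p_k}=\frac{\ip{g}{v_k}-z\ip{g}{v_i}}{\sqrt{1-z^2}},
\qquad
\ip{g}{v_k}=-\tfrac12\sqrt{1-q^2}\ip{g}{p_j}+\tfrac{\sqrt3}{2}\bigl(-qc\ip{g}{p_j}+\ip{g}{u_{jk}}\bigr)+o(t).
\]
The $p_j$-coefficients combine to $(1+2qz)/(2\sqrt{1-q^2})$ in absolute value. The term $\tfrac{\sqrt3}{2}\cdot\sqrt3 t$ gives $3\sqrt{1-q^2}/(2\sqrt{1-q^2})$. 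The composition loss, multiplied by $\sqrt3/2$, becomes $\sqrt{K-L^2}\sqrt{3d(1-q^2)-(2z+q)^2}/(2\sqrt{1-q^2})$. Dividing everything by $\sqrt{1-z^2}$ reproduces $M(q,L,z)$ in \eqref{eq:lift}. Uniformity of the errors follows from $q\in[-1/2,3/5]$ and $|z|\le1/2+o(1)$, which keep $1-q^2$ and $1-z^2$ bounded away from $0$, together with $L$ ranging over a compact set.

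The main obstacle is Step~4's sign bookkeeping. With the literal identity $v_k=-v_j/2+\sqrt3v_{jk}/2$, the $L$-term comes out with coefficient $-(1+2qz)/(2\sqrt{1-q^2})$. To obtain the $+$ sign in \eqref{eq:lift}, I would orient the outgoing covers as the pruning appendix does (the two-walk cover at the lifted endpoint, with $a=\ip{v_{ji}}{v_{jk}}\ge-o(1)$). That means composing along the direction in which the parent component enters positively, and then checking that this orientation is also the one that makes the retained shell at correlation $z$ a cover. If this fails, the fallback is to pass to the negated Gaussian, which preserves covers by symmetry, and to re-derive the coefficient.

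A secondary point to verify is that the events removing the $v_i$- and $p_j$-components can be absorbed into $t^{-O(1)}$ per inner family before the heavy bucket is selected, as the projection convention allows.
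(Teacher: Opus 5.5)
Your Steps 1--3 match the paper's proof. The decomposition $v_{jk}=c\,v_{ji}+u_{jk}$ along the reverse root direction $v_{ji}=\sqrt{1-q^2}\,v_i-q\,p_j$ with $c=(2z+q)/(\sqrt3\sqrt{1-q^2})$, the heavy-bucket selection in $z$, the feasibility constraint from \eqref{eq:normvariance}, and the choice $A=K(q)/D$ in Lemma~\ref{lem:compose} are exactly what the paper uses. Your evaluation of the loss term is also correct. The gap is the sign you flag, and with Step~3 as written it is fatal. Lemma~\ref{lem:compose} applied to the outer family $(p_j)$ yields a witness with $\ip{g}{p_j}\ge Lt$, while
\[
v_k-zv_i=-\frac{1+2qz}{2\sqrt{1-q^2}}\,p_j+\frac{\sqrt3}{2}\,u_{jk},
\]
so the outer term is bounded above, not below, and you get no lower bound on $\ip{g}{p_k}$. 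Your primary remedy cannot repair this. The coefficient $-(1+2qz)/(2\sqrt{1-q^2})$ is forced by $v_j=qv_i+\sqrt{1-q^2}\,p_j$ and $v_k=-v_j/2+\sqrt3v_{jk}/2$, and it is negative for every feasible successor. No orientation or two-walk restriction of the inner covers changes it; the pruning parameter $a$ enters only the initialization. Negating $g$ globally does not help either, since that also flips the inner term.

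The missing step, which is the paper's, is to reflect only the outer family. Since $g$ and $-g$ have the same law, $(-p_j)$ is again an $(Lt,t^{-O(1)})$-cover with the same cardinality bound $\Delta^{K(q)/D+o(1)}$. The inner families still lie in $(-p_j)^\perp=p_j^\perp$, so the independence used in Lemma~\ref{lem:compose} is unaffected. The joint witness then has $\ip{g}{-p_j}\ge Lt$, which contributes $+\frac{(1+2qz)L}{2\sqrt{1-q^2}}\,t$ provided $1+2qz>0$. You need this positivity and did not state it; it holds because $q,z\in[-1/2,3/5]$ gives $1+2qz\ge 2/5$. Reflecting the inner families instead, and concluding that $(-p_k)$ and hence $(p_k)$ is a cover, works equally well. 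Finally, your bound $|z|\le 1/2+o(1)$ is false: for negative $q$ the ellipse constraint allows $z$ up to $\frac14+\frac34\sqrt d\approx 0.585$. The correct bound is $z<3/5$ from \eqref{eq:domain}. That bound still gives $\sqrt{1-z^2}\ge 4/5$, so your uniformity claim survives.
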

\begin{proof}
Use actual endpoint and successor correlations $q,z$ until the
bucketing step. Put $s=\sqrt{1-q^2}$ and $p=p_j$, so
$v_j=qv_i+sp$. The reverse root direction
$v_{ji}:=(v_i-qv_j)/s=sv_i-qp$ is a unit vector perpendicular to
$v_j$, whether or not $i$ and $j$ are adjacent. For a neighbor $k$ of $j$ in $G'$, write
$r:=v_{jk}$ and $z:=\ip{v_i}{v_k}$. Since
$v_k=-v_j/2+\sqrt3r/2$ and $r\perp v_j$, we obtain $\ip{v_i}{r}=\frac{2z+q}{\sqrt3}$, with
\begin{align*}
 \delta:=\ip{r}{v_{ji}}
 &=\frac{\ip{r}{v_i}-q\ip{r}{v_j}}s
 =\frac{2z+q}{\sqrt3s},
 \end{align*} and
 $u:=r-\delta v_{ji}$ with $\|u\|^2=1-\delta^2$ and $
 \ip{u}{v_j}=0, \ip{u}{v_{ji}}=0$.
 
The pairs $(v_j,v_{ji})$ and $(v_i,p)$ span the same plane, so
$u\perp v_i,p$. Substituting $r=\delta(sv_i-qp)+u$ yields
\begin{align*}
 v_k
 &=-\frac12(qv_i+sp)
   +\frac{\sqrt3}{2}\bigl(\delta sv_i-\delta qp+u\bigr)\\
 &=\left(-\frac q2+\frac{\sqrt3\delta s}{2}\right)v_i
   -\left(\frac s2+\frac{\sqrt3\delta q}{2}\right)p
   +\frac{\sqrt3}{2}u,\\
\end{align*} Moreover, we would have
$-\frac q2+\frac{\sqrt3\delta s}{2}=z$ and $\frac s2+\frac{\sqrt3\delta q}{2}
 =\frac{s^2+q(2z+q)}{2s}
 =\frac{1+2qz}{2s}$. Hence
\begin{equation}\label{eq:geometry}
 v_k-zv_i=-\frac{1+2qz}{2s}p+\frac{\sqrt3}{2}u.
\end{equation}

Every $j$ retains an outgoing cover in $G'$. For a fixed $j$,
let $F_j(g):=\max_{k\in N_{G'}(j)}\ip{g}{v_{jk}}$ and let
$\widehat F_j(g):=\max_{k\in N_{G'}(j)}\ip{g}{u_{jk}}$, where
$u_{jk}$ is the projection just defined. Since $|\delta|\le1$,
\begin{align*}
 \widehat F_j(g)&\ge F_j(g)-|\ip{g}{v_{ji}}|,\\
 \Prb\!\left[\widehat F_j(g)\ge\sqrt3t-o(t)-\log t\right]
 &\ge\Prb\!\left[F_j(g)\ge\sqrt3t-o(t)\right]
       -\Prb[|\ip{g}{v_{ji}}|>\log t]\\
 &\ge t^{-C_j}-2e^{-(\log t)^2/2}
 =t^{-O(1)}.
\end{align*}
Pruning bounds the $C_j$ uniformly. The estimate applies separately to
each parent, and $\log t=o(t)$ leaves threshold $\sqrt3t-o(t)$.
Select a heavy $z$ bucket in each family, then a heavy outer class with
the same selected bucket. Each inner family still has at most $\Delta$
vectors and cover probability $t^{-O(1)}$. We can now replace the
correlations by bucket representatives, with the uniform errors proved
above. Writing $\widehat F_j$ for the selected maximum,
\eqref{eq:normvariance} gives $0\le\Var(\widehat F_j)\le1-\delta^2-\frac1{1+c_*}+o(1)
 =d-\delta^2+o(1)$ where $\delta^2\le d+o(1)$, and $(2z+q)^2\le3d(1-q^2)+o(1)$.
 
The remaining constraints $z\ge-1/2$ and $q+z\le1/2$ come from
the local triple distribution. Together these are \eqref{eq:feasible}.

By Gaussian symmetry, the restricted outer family $(-p_j)$ is also a
cover. Its size is at most $\Delta^{K(q)/D+o(1)}$, and its selected
inner family lies in $p_j^\perp$. Lemma~\ref{lem:compose} therefore
provides a joint witness, on an event of probability $t^{-O(1)}$, with $ \ip{g}{-p}\ge Lt$, $\ip{g}{u}\ge\left(\sqrt3-
          \sqrt{d-\delta^2}\sqrt{K(q)-L^2}-o(1)\right)t$, and $d-\delta^2=\frac{3d(1-q^2)-(2z+q)^2}{3(1-q^2)}$.
          
Positive parts and vanishing slack are understood at the boundaries.
For feasible successors, \eqref{eq:domain} below gives
$q,z\in[-1/2,3/5]$, and consequently $1+2qz\ge2/5>0$.
Both favorable evaluations can therefore be inserted into
\eqref{eq:geometry}. With
$p_k:=(v_k-zv_i)/\sqrt{1-z^2}$, this gives
\begin{align*}
 \frac{\ip{g}{p_k}}t
 &=\frac{1+2qz}{2s\sqrt{1-z^2}}\,\frac{\ip{g}{-p}}t
   +\frac{\sqrt3}{2\sqrt{1-z^2}}\,\frac{\ip{g}{u}}t\\
 &\ge\frac{(1+2qz)L+3s
       -\sqrt3s\sqrt{d-\delta^2}\sqrt{K(q)-L^2}}
       {2s\sqrt{1-z^2}}-o(1)\\
 &=\frac{(1+2qz)L+3\sqrt{1-q^2}
       -\sqrt{3d(1-q^2)-(2z+q)^2}\sqrt{K(q)-L^2}}
       {2\sqrt{1-q^2}\sqrt{1-z^2}}-o(1)\\
 &=M(q,L,z)-o(1).
\end{align*}
Therefore, we have shown that the successor shell has the required cover. Its direction
depends only on the root and endpoint, so repeated paths do not increase
the number of directions beyond the shell size, and all the successors remain
in $G'$. Since we established feasibility of the norm before normalization, the bound below keeps $z$ a fixed distance from $1$.
\end{proof}

\paragraph{The correlation domain.}
The ellipse constraint bounds the successor correlation by
$f(q):=(-q+\sqrt{3d(1-q^2)})/2$. To maximize $f$ on
$[-1/2,1]$, observe that its unique interior critical point on
$(-1,1)$ satisfies
\begin{align*}
 f'(q)&=-\frac12-\frac{\sqrt{3d}\,q}{2\sqrt{1-q^2}},\end{align*} with
 $f'(q)=0
 \iff q=-\frac1{\sqrt{1+3d}}<-\frac12$.
Here $d<1$ makes the critical point smaller than $-1/2$.
Thus $f$ is decreasing on $[-1/2,1]$, and
\begin{equation}\label{eq:domain}
 z\le\frac{-q+\sqrt{3d(1-q^2)}}2
 \le\frac14+\frac34\sqrt d<\frac35.
\end{equation}
Here $d=250001/1250001<49/225$, so
$1/4+(3/4)\sqrt d<1/4+(3/4)(7/15)=3/5$.
Thus $q,z\in[-1/2,3/5]$ and
$\sqrt{1-q^2},\sqrt{1-z^2}\ge4/5$ throughout the recursion.

\subsection{Initialization and uniform drift}\label{sec:drift}
First analyze a two-walk bucket with $a\in[0,d]$. The extension to
the buffered interval given by pruning is proved in
Appendix~\ref{app:buffer}. Define
\begin{equation}\label{eq:eta}
 b=\frac14+\frac34a,\qquad
 \eta(a)=\frac{9-3a-6\sqrt{c_*}\sqrt{d-a^2}}
 {4\sqrt{1-b^2}}.
\end{equation}
For a root $i$ and two-walk $i,j,k$ in a retained bucket of type $a$,
write $v_{jk}=a v_{ji}+u$, where
$u\perp\operatorname{span}\{v_i,v_j\}$ and $\|u\|^2=1-a^2$.
The edge decompositions give $v_j=-\frac12v_i+\frac{\sqrt3}{2}v_{ij}$ and $v_{ji}=\frac{\sqrt3}{2}v_i+\frac12v_{ij}$. So,
and\begin{align*}
 v_k
 &=-\frac12v_j+\frac{\sqrt3}{2}(a v_{ji}+u)\\
 &=\left(\frac14+\frac34a\right)v_i
   -\frac{\sqrt3}{4}(1-a)v_{ij}+\frac{\sqrt3}{2}u.
\end{align*}
The reflected outer directions $-v_{ij}$ form a cover at threshold
$\sqrt3t-o(t)$, with at most $\Delta$ directions. Apply
Lemma~\ref{lem:compose} uniformly with outer parameter
$L_t=\sqrt3-o(1)$, $A=1$, and inner squared norm $r^2=1-a^2$.
At $L=\sqrt3$, the loss factors are
$DA-L^2=3(1+c_*)-3=3c_*$ and
$r^2-1/(1+c_*)=1-a^2-1/(1+c_*)=d-a^2$.
With probability $t^{-O(1)}$, the joint witness satisfies $\ip{g}{-v_{ij}}\ge\sqrt3t-o(t)$ and $\ip{g}{u}\ge
 \left(\sqrt3-\sqrt{3c_*}\sqrt{d-a^2}\right)t-o(t)$. Thus, we have 
 \begin{align*}
 \ip{g}{v_k-bv_i}
 &=\frac{\sqrt3}{4}(1-a)\ip{g}{-v_{ij}}
     +\frac{\sqrt3}{2}\ip{g}{u}\\
 &\ge\left(\frac34(1-a)+\frac32
       -\frac32\sqrt{c_*}\sqrt{d-a^2}\right)t-o(t),
       \end{align*} which implies \begin{align*}
 \left\langle{g},\frac{v_k-bv_i}{\sqrt{1-b^2}}\right\rangle
 &\ge
 \frac{9-3a-6\sqrt{c_*}\sqrt{d-a^2}}
      {4\sqrt{1-b^2}}\,t-o(t)
 \\
 &=\eta(a)t-o(t).
\end{align*}
Notably, the bucket correlations differ from $a$ by $o(1)$, contributing only the
$o(t)$ errors from Section~\ref{sec:gaussian}, and the bounds
$d-a^2\ge d-d^2=d(1-d)>0$ and
$1/4\le b\le1/4+3d/4<3/5$ hold uniformly.

Since $b\ge1/4$, its shell has $\rho(b)=0$. If
$\eta(a)^2>DS$, its cardinality cap is contradicted and rounding
succeeds. Otherwise apply the general lift, with $K(b)=DS$.
At most two generic lifts suffice for the initialization below.

\begin{figure}[t]
\centering
\begin{tikzpicture}[>=stealth]

\begin{scope}[x=1.15cm,y=1.15cm]
  \draw[->, thick] (0,0) -- (0,3.05) node[above] {$v_i$};
  \draw[->, black!45] (0,0) -- (3.65,0)
        node[right] {$v_i^\perp$};
  \draw (-0.04,1.00) -- (0.04,1.00) node[left=3pt] {\scriptsize $q$};
  \draw (-0.04,2.15) -- (0.04,2.15) node[left=3pt] {\scriptsize $z$};
  \draw[dashed, black!45] (0,1.00) -- (3.45,1.00);
  \draw[dashed, black!45] (0,2.15) -- (2.70,2.15);
  \node[black!60, anchor=west] at (0.10,1.20) {\scriptsize shell $q$};
  \node[black!60, anchor=west] at (0.10,2.35) {\scriptsize shell $z$};
  \draw[->, black!45] (0,0) -- (3.00,0.92);
  \draw[->, black!45] (0,0) -- (3.40,1.08);
  \node[black!60] at (2.95,0.48) {\scriptsize cover $\{p_j\}$};
  \draw[->, very thick, blue!70!black] (0,0) -- (3.20,1.00);
  \node[blue!70!black, anchor=west] at (3.43,1.00) {$v_j$};
  \draw[->, very thick, blue!70!black] (0,0) -- (2.30,2.15);
  \node[blue!70!black, anchor=south] at (2.30,2.24) {$v_k$};
  \draw[->, thick, red!70!black] (3.23,1.12)
        to[bend left=14] (2.40,2.12);
  \node[red!70!black, align=left, anchor=west] at (3.35,1.75)
        {\scriptsize one edge\\[-1pt] \scriptsize $jk$ in $G'$};
  \node at (1.85,-0.62) {\small (a) one lift};
\end{scope}

\begin{scope}[xshift=9.1cm,x=1.15cm,y=1.15cm]
  \draw[->] (-1.65,0) -- (2.15,0) node[right] {$q$};
  \draw[->] (0,-0.06) -- (0,3.62) node[above] {$L$};
  \draw (-1.25,0.03) -- (-1.25,-0.03)
        node[below] {\scriptsize $-\frac12$};
  \draw (0.62,0.03) -- (0.62,-0.03)
        node[below] {\scriptsize $\frac14$};
  \draw (1.50,0.03) -- (1.50,-0.03)
        node[below] {\scriptsize $\frac35$};
  \draw[dashed, thick] plot[smooth]
        coordinates {(-1.25,3.10) (-0.60,3.17) (0.15,3.07)
                     (0.75,2.91) (1.55,2.65)};
  \node[align=left, anchor=west] at (1.55,2.64)
        {\scriptsize cap\\[-1pt] \scriptsize $L=\sqrt{K(q)}$};
  \filldraw (0.75,1.60) circle (0.9pt)
        node[right=3pt] {\scriptsize $(b,\eta(a))$};
  \draw[->, thick] (0.75,1.60) -- (0.02,1.91);
  \draw[->, thick] (0.02,1.91) -- (1.10,2.23);
  \draw[->, thick] (1.10,2.23) -- (-0.50,2.55);
  \draw[->, thick] (-0.50,2.55) -- (0.50,2.87);
  \draw[->, thick] (0.50,2.87) -- (-0.15,3.23);
  \filldraw[red!70!black] (-0.15,3.23) circle (0.9pt);
  \node[red!70!black, align=right, anchor=east] at (-0.30,3.43)
        {\scriptsize cap violated\\[-1pt] \scriptsize rounding succeeds};
  \node[align=left, anchor=west] at (0.10,0.67)
        {\scriptsize each lift:\\[2pt]
         \scriptsize $L\to M(q,L,z)\ge L+\frac1{1250}$};
  \node at (0.25,-0.62) {\small (b) the recursion};
\end{scope}

\end{tikzpicture}
\caption{Schematic of the recursion. (a)~One lift: a neighbor $k$ of
the endpoint $j$ in the retained graph $G'$ carries the cover from
shell $q$ to a successor shell at correlation $z$, and the
root-orthogonal component $\sqrt{1-z^2}\,p_k$ inherits the lifted
Gaussian threshold. (b)~The scalar recursion on states $(q,L)$: each
lift increases the threshold by a uniform amount until the cardinality
cap $L^2\le K(q)$ is violated, at which point shell rounding returns a
large independent set.}
\label{fig:recursion}
\end{figure}

\begin{proposition}[Finite inequalities]\label{prop:finite}
For the rational parameter $c_*$ in \eqref{eq:parameters}, put
$T=53/25$. The following statements hold.
\begin{enumerate}
\item For every $a\in[0,d]$, begin at $(b,\eta(a))$.
Along every feasible sequence, a terminal cap $L^2\le K(q)$ is
violated or a threshold strictly above $T$ is reached within at most
two generic lifts.
In particular, if the seed continues below $T$ and
$L_1=M(b,\eta(a),z)\le T$ is a continuing first successor,
every feasible second successor $w$ satisfies $M(z,L_1,w)>T$.
\item For every feasible continuing state
$q,z\in[-1/2,3/5]$ with $T\le L\le\sqrt{K(q)}$, we have $M(q,L,z)-L>1/1000$.
This drift statement holds uniformly for
$c\in[1/4,1/4+10^{-6}]$, using the corresponding $d,D,S,K$.
\end{enumerate}
\end{proposition}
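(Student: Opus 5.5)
This is a finite verification over a compact domain, and I will prove it by exact rational interval arithmetic. Before any computation I will reduce both statements to finitely many polynomial-type inequalities. Throughout, $c_*$ is rational, so $d=c_*/(1+c_*)$, $D=3(1+c_*)$, $S$ and $R_*$ are exact rationals. The only irrational quantities are square roots of rational expressions in $(q,z,L,a)$, together with the piecewise rational function $\rho$ entering $K(q)=\min\{5+3c_*,D(S+\rho(q))\}$.

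\emph{Monotonicity reduction in $L$.} For fixed feasible $(q,z)$, write
\[
M(q,L,z)-L=\frac{(1+2qz)L-\sqrt{K(q)-L^2}\,E+3s}{2s\sqrt{1-z^2}}-L,
\]
where $s=\sqrt{1-q^2}$ and $E=\sqrt{3d s^2-(2z+q)^2}\ge0$.

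\begin{itemize}
\item The term $-\sqrt{K-L^2}\,E$ is nondecreasing in $L$.
\item The coefficient of $L$ in the remaining linear part is $(1+2qz)/(2s\sqrt{1-z^2})-1$, which may have either sign.
\item $-\sqrt{K-L^2}$ is convex in $L$, so $M-L$ is convex in $L$ on $[T,\sqrt{K(q)}]$.
\end{itemize}

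Convexity alone does not reduce the problem to the endpoints. I will therefore bound the minimum on each box by interval enclosure, using the derivative $\partial_L(M-L)$ to localize it.

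\emph{Adaptive subdivision for part 2.} I will subdivide the box $(q,z,L)\in[-1/2,3/5]^2\times[T,\sqrt{5+3c_*}]$ into rational boxes. On each box, directed-rounding rational interval bounds for the square roots (rational lower and upper bounds computed by bisection) give a certified lower bound for $M-L$. A box is discarded when it certifiably misses the feasible set, that is, when $z<-1/2$, $q+z>1/2$, $(2z+q)^2>3d(1-q^2)$, or $L^2>K(q)$ on the whole box. A box is accepted when the lower bound exceeds $1/1000$. Otherwise the box is bisected along its widest side.

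The branches of $\rho$ at $q=0,1/16,1/4$ will be handled by aligning box boundaries to these breakpoints, so each box sees a single rational formula for $K$. Uniformity in $c\in[1/4,1/4+10^{-6}]$ is obtained by adding $c$ as a fourth interval variable. Alternatively, one can note that $d,D,S,K$ are monotone in $c$ and take the worst endpoint in each occurrence; that is cruder but still valid for interval bounds.

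\emph{Part 1.} I will subdivide $a\in[0,d]$ and enclose $(b,\eta(a))$.

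\begin{itemize}
\item For subintervals where $\eta^2>DS=K(b)$ or $\eta>T$ certifiably, the claim is immediate.
\item Otherwise, over the feasible $z$-range of the first lift, I will enclose $L_1=M(b,\eta,z)$.
\item Where $L_1^2>K(z)$ or $L_1>T$, the claim holds.
\item On the remaining boxes $(a,z)$ I will further subdivide the feasible $w$ and certify $M(z,L_1,w)>T$ using the interval enclosure of $L_1$. Since $M$ is increasing in $L$ on the relevant range, as argued above, evaluating at the lower enclosure of $L_1$ suffices.
\end{itemize}

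\emph{Main obstacle.} The main difficulty is near the degenerate boundaries. At the cap $L=\sqrt{K(q)}$ and on the ellipse boundary $(2z+q)^2=3d(1-q^2)$, the square roots have unbounded derivative, so naive interval enclosures are loose and the subdivision may fail to terminate there.

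I will handle this by enclosing $\sqrt{x}$ monotonically, so that no derivative is needed, which makes the boxes touching these boundaries behave well. Near the cap I also expect the drift margin to be smallest, and I may need the exact refinement $-\sqrt{K-L^2}E\ge-\sqrt{(K-L^2)_+}\,E_{\max}$ with $E_{\max}$ computed from the box.

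If the margin $1/1000$ fails in some region, I would first check whether that region is actually infeasible, for instance because $L^2>K(q)$ once the $S$-slack is accounted for, before suspecting the statement.
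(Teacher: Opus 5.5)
Your plan is correct and uses the same machinery as the paper: exact rational interval arithmetic with outward-rounded square roots, adaptive bisection, discarding boxes that certifiably violate feasibility, and monotonicity of $M$ in $L$ to replace $L_1$ by its lower enclosure. The difference is that you subdivide in one more variable in each part, whereas the paper eliminates that variable analytically first.

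For the drift, the paper writes
\[
H(L)=V\bigl(M(q,L,z)-L\bigr)=a_0L-B\sqrt{K-L^2}+3s,\qquad a_0=1+2qz-V.
\]
It does not resolve $L$ by boxes. Instead it uses the convexity you noticed through a supporting line at $T$,
\[
H(L)\ge H(T)+\min\{0,H'(T)\}(\sqrt K-T),
\]
together with the Cauchy--Schwarz bound $H(L)\ge 3s-\sqrt K\sqrt{a_0^2+B^2}$. So you are right that convexity does not locate the minimizer, but it still gives a lower bound that is uniform in $L$. For the initialization, the paper removes the second successor $w$ with the two-fold Cauchy--Schwarz inequality \eqref{eq:dual}, using a free rational multiplier $\ell$ chosen per box. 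As a result only $(q,z)$ and $(a,z)$ are subdivided. The square-root singularities in the eliminated variables, at $L=\sqrt{K}$ and at the ellipse boundary in $w$, never have to be resolved by subdivision.

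Your brute-force version in $(q,z,L)$ (plus $c$) and $(a,z,w)$ is still a valid strategy. Both inequalities hold with a strict margin on compact sets, and all denominators are bounded below on the domain. The natural interval extensions, built from rational operations and the monotone square root with arguments clipped at $0$ on boxes straddling a constraint, therefore converge uniformly as boxes shrink, so the subdivision terminates. You pay in box count and in square-root-rate convergence near those boundaries.

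Two small points:
\begin{itemize}
\item Monotonicity of $M$ in $L$ needs more than monotonicity of the square-root term. It also needs $1+2qz>0$, which holds because $q,z\in[-1/2,3/5]$ gives $1+2qz\ge 2/5$.
\item Your expectation that the drift margin is tightest at the cap is unfounded. For $B>0$ one has $H'(L)=a_0+BL/\sqrt{K-L^2}\to+\infty$ as $L\to\sqrt K$. So away from the ellipse boundary the cap is never the minimizer in $L$, and its singularity is harmless for your enclosures.
\end{itemize}
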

\begin{proof}
We verify both assertions by rational interval arithmetic on the compact
domain. The verifier and its recorded output are available at
\url{https://github.com/emiletimothy/3coloringVerifier.git}, and the accompanying README gives the command needed to reproduce the verification.

Now, set $Q=2^{120}$ and round rational operations outward using
$\operatorname{down}(x)=Q^{-1}\lfloor Qx\rfloor$ and
$\operatorname{up}(x)=Q^{-1}\lceil Qx\rceil$, so
$\operatorname{down}(x)\le x\le\operatorname{up}(x)$.
For $x\ge0$, compute $k=\lfloor\sqrt{\lfloor Q^2x\rfloor}\rfloor$
by integer square root. Then $k^2\le\lfloor Q^2x\rfloor\le Q^2x<(k+1)^2$ and $\frac{k}{Q}\le\sqrt{x}<\frac{k+1}{Q}$. If $k^2=Q^2x$, use $k/Q$ for both endpoints. Apply these rules
outward to interval endpoints. A positive lower endpoint certifies
positivity on the whole box; otherwise subdivide.

For initialization, the domain $[0,d]\times[-1/2,3/5]$ in the seed
parameter $a$ and the first successor correlation $z$ is covered by an
$8\times8$ grid; boxes violating a necessary feasibility constraint
are discarded, and unresolved boxes are bisected adaptively. Boxes
proving an immediate terminal violation or a first threshold above
$T$ are accepted directly. The remaining boxes are handled by the
following uniform second-lift bound.\\

For a state $(q,L)$, put $s=\sqrt{1-q^2}$, $A=\sqrt{3d}\,s$,
$W=\sqrt{K(q)-L^2}$, and $u=2z+q$. The denominator is
$V=2s\sqrt{1-z^2}=s\sqrt{4-(u-q)^2}$.
Feasibility gives $u^2\le A^2$ and $0<V\le2$, and
\begin{align*}
 H:=V\bigl(M(q,L,z)-T\bigr)
 &= (1+2qz)L+3s-W\sqrt{A^2-u^2}-TV\\
 &= s^2L+3s+qLu-W\sqrt{A^2-u^2}
      -Ts\sqrt{4-(u-q)^2}.
\end{align*}
For $\ell\in\R$, write
$qLu=(qL-\ell)u+\ell(u-q)+\ell q$ and apply Cauchy--Schwarz twice:
\begin{align*}
 (qL-\ell)u-W\sqrt{A^2-u^2}
 &=\ip{(qL-\ell,-W)}{(u,\sqrt{A^2-u^2})}\\
 &\ge-\sqrt{(qL-\ell)^2+W^2}
          \sqrt{u^2+(A^2-u^2)}\\
 &=-A\sqrt{(qL-\ell)^2+W^2}\end{align*} and \begin{align*}
 \ell(u-q)-Ts\sqrt{4-(u-q)^2}
 &=\ip{(\ell,-Ts)}{(u-q,\sqrt{4-(u-q)^2})}\\
 &\ge-\sqrt{\ell^2+T^2s^2}
          \sqrt{(u-q)^2+4-(u-q)^2}\\
 &=-2\sqrt{\ell^2+T^2s^2}.
\end{align*}
Combining these inequalities gives
\begin{equation}\label{eq:dual}
\begin{aligned}
 H\ge{}&s^2L+3s+\ell q
       -A\sqrt{(qL-\ell)^2+W^2}\\
       &\hspace{28mm}-2\sqrt{\ell^2+T^2s^2}.
\end{aligned}
\end{equation}

The bound holds for every feasible successor and every $\ell$.
On each box choose any rational $\ell$ and verify the bound by exact
interval arithmetic. A poor choice may require further subdivision,
but cannot cause a false acceptance.

Before this check, replace the first-lift threshold by its positive
dyadic lower endpoint. This is valid because, with
$B=\sqrt{3d(1-q^2)-(2z+q)^2}$ and
$V=2\sqrt{1-q^2}\sqrt{1-z^2}$, we have that for all $0\leq L < \sqrt{K(q)}$,
\begin{align*}
 \frac{\partial M}{\partial L}(q,L,z)
 &=\frac{1+2qz+B L/\sqrt{K(q)-L^2}}{V}>0,
\end{align*}
where $1+2qz\ge2/5$, and continuity extends monotonicity to the
endpoint. Thus $0<\underline L_1\le L_1$ implies
$M(z,L_1,w)\ge M(z,\underline L_1,w)>T$ for every feasible second
successor $w$. Subdivision terminates with smallest threshold margin
at least $2\times10^{-7}$.
For drift, retain $s,B,V$ above and set $a_0=1+2qz-V$, $K=K(q)$.
Then
\begin{align*}
 H(L):=V\bigl(M(q,L,z)-L\bigr)
 =a_0L-B\sqrt{K-L^2}+3s,
 \end{align*} and $H'(L)=a_0+\frac{BL}{\sqrt{K-L^2}}$ and $H''(L)=\frac{BK}{(K-L^2)^{3/2}}\ge0$, for $T\le L  < \sqrt K$.
 
In particular, when $K>T^2$, convexity gives the supporting-line bound
\begin{align*}
 H(L)&\ge H(T)+H'(T)(L-T)\\
 &\ge H(T)+\min\{0,H'(T)\}(\sqrt K-T)
 \qquad(T\le L\le\sqrt K).
\end{align*}
A second, global lower bound follows from Cauchy--Schwarz:
\begin{align*}
 H(L)-3s
 &=\ip{(a_0,-B)}{(L,\sqrt{K-L^2})}\\
 &\ge-\sqrt{a_0^2+B^2}\sqrt{L^2+K-L^2}
 =-\sqrt K\sqrt{a_0^2+B^2}.
\end{align*}
Thus the verifier may use the larger of these two lower bounds.
When $H'(T)\ge0$, the first reduces to $H(L)\ge H(T)$.
When $K=T^2$, the interval consists of the single point $L=T$,
which is evaluated directly; when $K<T^2$, no continuing state is
present. Boxes where a denominator enclosure contains zero are
handled by direct endpoint bounds or subdivision. These inequalities
treat $L$ universally, leaving only $q,z$ as subdivision variables.
A finite adaptive subdivision of the $(q,z)$-domain verifies
$H(L)-V/1000>10^{-7}$ on every nonempty accepted box, with smallest
numerator margin at least $2\times10^{-5}$. To enclose $K$, the function
$\rho$ is evaluated at the interval endpoints and at zero when present;
$\rho$ is continuous, increases up to $q=0$, and decreases thereafter.
Both expressions defining $K$ are increasing in $c$, so endpoint
evaluation also encloses the entire stated parameter interval. This
proves both assertions.

The initialization and drift checks terminate with $8{,}140$
and $355$ final boxes, respectively, including boxes excluded
by necessary feasibility constraints.
The certified initialization threshold margin exceeds
$2.55\times10^{-7}$, and the certified lower bound on
$V(M(q,L,z)-L-1/1000)$ exceeds $2.83\times10^{-5}$.
Both checks cover their full stated domains and leave no
unresolved boxes.
\end{proof}

\subsection{Termination, algorithm, and exponent}\label{sec:algorithm}
Suppose neither global KMS nor any root-correlation shell has the
required expected output. Pruning,
\eqref{eq:eta}, and Proposition \ref{prop:finite}(1), using at most two generic lifts, produce an
endpoint cover with $L\ge53/25$ for all sufficiently large $n$; its
strict verified margin absorbs the negligible initialization errors.

At any continuing state, we have that the exact drift
margin and continuity allow the uniform lower increment $1/1250$ after
all negligible errors have been included. This remains valid close to
the ellipse boundary and close to $L^2=K$: positive-part square roots
are continuous, and their perturbations vanish uniformly. Near a
slightly infeasible boundary one can pass to a nearest feasible point,
or equivalently use compactness and the strict margin. Thus each lift
either contradicts the shell size bound or increases $L$ by at least
$1/1250$.

After $j$ continuing lifts, the thresholds satisfy $L_{j+1}\ge L_j+\frac1{1250}$, where $L_j\ge L_0+\frac{j}{1250}
       \ge\frac{53}{25}+\frac{j}{1250}$, and $L_{400}\ge\frac{53}{25}+\frac{400}{1250}
          =\frac{61}{25}=2.44$.
On the other hand, the endpoint family $\mathcal P_j$ contains at
most $n$ distinct unit vectors and has cover probability at least
$t^{-C_j}$ for a constant $C_j$. Using
$\log n=(5+3c_*)t^2/2+o(t^2)$, its upper bound is
\begin{align*}
 t^{-C_j}
 &\le\Prb\!\left[\max_{p\in\mathcal P_j}\ip{g}{p}
                 \ge(L_j-o(1))t\right]\\
 &\le\sum_{p\in\mathcal P_j}
          \Prb[\ip{g}{p}\ge(L_j-o(1))t]\\
 &\le n\exp\!\left(-\frac{(L_j-o(1))^2t^2}{2}\right)\\
 &=\exp\!\left(\frac{5+3c_*-L_j^2+o(1)}2t^2\right).
\end{align*}
Taking logarithms and dividing by $t^2/2$ yields
\begin{align*}
 L_j^2&\le5+3c_*+o(1)+\frac{2C_j\log t}{t^2}
       =5+3c_*+o(1),
       \end{align*} and so
$L_j\le\sqrt{5+3c_*}+o(1)<2.4$, for all sufficiently large $n$.
This contradicts $L_{400}\ge2.44$. Hence one rounding routine has
the required expected output, at an absolute constant recursion depth.

\paragraph{The sparse algorithm.}
Algorithm~\ref{alg:sparse-rounding} tests the global rounding routine
and all $O(n\log n)$ root-correlation shells. In the middle
positive-correlation range it uses both vector constructions from
Lemma~\ref{lem:shell-rounding}. For each shell with edges, a
conservative upper bound on its largest edge inner product gives a
valid vector-coloring parameter. A bucket crossing a formula
boundary uses the weaker endpoint guarantee; continuity preserves
the asymptotic exponent.

\begin{algorithm}[H]
\caption{Sparse independent-set routine}\label{alg:sparse-rounding}
\small
\textbf{Input:} a promised 3-colorable graph $G$ of maximum degree
at most $\Delta=n^\beta$; a repetition budget $R=n^C$ for a
sufficiently large fixed $C$.\par
\textbf{Output:} the largest independent set found.
\begin{enumerate}[leftmargin=1.6em,itemsep=2pt,topsep=3pt]
  \item Solve the three-round SoS relaxation to inverse-polynomial
        accuracy. Set $I_*\gets\varnothing$ and choose $t$ with
        $\phib(t)=\Delta^{-1/D}$.
  \item For every root $i$, partition $V\setminus\{i\}$ into
        correlation buckets of width $1/\log\Delta$.
  \item \textbf{For} $z=1,\dots,R$:
    \begin{enumerate}[label=(\alph*),leftmargin=1.7em,itemsep=1pt,topsep=2pt]
      \item Apply Algorithm~\ref{alg:threshold-matching} to $G$,
            the vectors $(v_j)$, and threshold $t$.
      \item \textbf{For each} root and each nonempty bucket $H$:
            if $G[H]$ is edgeless, take $H$ as a candidate;
            otherwise apply KMS rounding to every applicable vector
            coloring from Lemma~\ref{lem:shell-rounding}.
      \item Check each candidate for independence and replace
            $I_*$ whenever a larger independent set is found.
    \end{enumerate}
  \item \textbf{Return} $I_*$.
\end{enumerate}
\end{algorithm}
The preceding argument guarantees that at least one routine has
expected output at least $B=n^{1-\alpha-o(1)}$. If $Y\in[0,n]$
is its output size and $p=\Prb(Y\ge B/2)$, then
\begin{align*}
 B\le\E Y
 &=\E\bigl[Y\mathbf 1_{\{Y<B/2\}}\bigr]
   +\E\bigl[Y\mathbf 1_{\{Y\ge B/2\}}\bigr]\\
 &\le\frac B2(1-p)+np\\
&\le\frac B2+np,
 \end{align*}
 and $\Prb(Y\ge B/2)\ge\frac{B}{2n}$. For $R\ge(40n/B)\log n$ independent repetitions, the failure
probability is bounded by
\begin{align*}
 \Prb\!\left[\max_{1\le r\le R}Y_r<B/2\right]
 &\le\left(1-\frac{B}{2n}\right)^R\\
 &\le\exp\!\left(-\frac{RB}{2n}\right)
 \le n^{-20}.
\end{align*}
The fixed polynomial budget in Algorithm~\ref{alg:sparse-rounding}
can exceed this bound, so the algorithm need not know the hidden factor
in $B$. Its output is at least the output of the successful routine.
The full shells contain the endpoint subcovers used in the proof;
the covers, retained graph, and path witnesses need not be constructed.

The analysis uses an exact SoS solution; the implementation uses weak
feasibility to inverse-polynomial accuracy. Bound moment variables in
$[-2,2]$, strictly containing their exact range $[-1,1]$. Relax every linear equality and
inequality by a tolerance $\xi_n=n^{-C}$, for a sufficiently large
fixed $C$, and the positive semidefinite constraint to a minimum
eigenvalue of at least $-\xi_n$.
Let $x^*$ be an exact feasible moment vector, and let $L_n=n^{O(1)}$
bound the Lipschitz constants of all affine constraint maps and of the
moment-matrix map $\mathcal M$ in operator norm. Such a bound exists
because the relaxation has fixed level and polynomial dimension.
For $\|h\|_2\le\xi_n/(2L_n)$, the equality maps $a_r$, inequality
maps $b_r$ (with $b_r(x^*)\ge0$), and moment matrix obey
\begin{align*}
 |a_r(x^*+h)|
 &\le L_n\|h\|_2\le\xi_n/2,
 \end{align*}
 and \begin{align*}
 b_r(x^*+h)
 &\ge b_r(x^*)-L_n\|h\|_2\ge-\xi_n/2,
 \end{align*}
 Together, this gives 
 \begin{align*}\lambda_{\min}(\mathcal M(x^*+h))
 &\ge\lambda_{\min}(\mathcal M(x^*))
      -\|\mathcal M(x^*+h)-\mathcal M(x^*)\|_{\rm op}\\
 &\ge-L_n\|h\|_2\ge-\xi_n/2.
\end{align*}
For large $n$, this ball also lies within the coordinate box $[-2,2]$.
The relaxed feasible body therefore contains a ball of inverse-polynomial
radius. Standard weak-feasibility methods, with affine-constraint and
eigenvalue separation, find an approximate feasible point in polynomial
time. The 3-colorability promise gives the exact point $x^*$ used
only to establish this interior-ball guarantee.

If $\lambda_{\min}(M)\ge-\xi_n$, then
$\widehat M:=M+\xi_n I\succeq0$ and
$|\widehat M_{ab}-M_{ab}|\le\xi_n$.
Symmetrization preserves the order of these residuals. Conditioning on
a root color is stable because its exact probability is $b=1/3$.
Indeed, for a joint probability $a\in[0,1]$ and approximate values
$\widetilde a=a+O(\xi_n)$, $\widetilde b=b+O(\xi_n)\ge1/6$,
\begin{align*}
 \left|\frac{\widetilde a}{\widetilde b}-\frac ab\right|
 &=\frac{|b(\widetilde a-a)-a(\widetilde b-b)|}
          {b\widetilde b}\\
 &\le\frac{b|\widetilde a-a|+|a|\,|\widetilde b-b|}
           {b\widetilde b}
 =O(\xi_n).
\end{align*}
The local tables are approximate probability tables: nonnegativity,
edge exclusions, and consistency hold up to inverse-polynomial residuals,
rather than becoming exact after the diagonal adjustment.
Edge correlations, triple constraints, and the shell vector
colorings therefore incur only vanishing errors, giving vector
parameters $\kappa+o(1)$. At a formula boundary use the adjacent
limiting bound; near $q=1/4$ this is a vector $(2+o(1))$ coloring.
All actual vectors are normalized before evaluating Gaussian tails.
On the compact correlation domain, these errors are dominated by the
bucket width and produce uniform $o(t)$ threshold errors. The strict
margins of the finite verification and the gap $S-R_*$ absorb them. Thus the exact
SDP convention entails no exact-arithmetic computational assumption.
Appendix~\ref{app:approx} proves the matching-packing and
pruning statements directly for perturbed edge correlations, retaining
exactly symmetric matching weights and exactly orthogonal tangent
directions. Appendix~\ref{app:errors} records the uniform errors and
cover probabilities through depth $404$.

Gaussian sampling and threshold comparisons need only polynomially
many bits. Suppose a call uses at most $n^a$ sampled coordinates and
$n^b$ comparisons, for fixed $a,b$.
For a prescribed failure exponent $h>0$, choose $A>0$ with
$A^2/2>a+h+2$ and truncate each standard Gaussian coordinate at
$A\sqrt{\log n}$. The truncation event satisfies
\begin{align*}
 \Prb[\text{some coordinate is truncated}]
 &\le\sum_{j=1}^{\lfloor n^a\rfloor}
       \Prb[|Z_j|>A\sqrt{\log n}]\\
 &\le2n^a\exp\!\left(-\frac{A^2\log n}{2}\right)\\
 &\le2n^{-h-2}.
\end{align*}
Choose the coordinate and threshold precision so that, on a coupling
with exact samples, the error in each tested difference is at most
$\eta=n^{-C}$, except on an event of probability $n^{-h-2}$.
In dimension $d_n\le n^a$, coordinate error $\delta$ gives
$|\ip{\widetilde g-g}{v}|\le\|\widetilde g-g\|\,\|v\|
\le\sqrt{d_n}\,\delta$. Choose $\delta$ and threshold error
$\eta_\tau$ with $\sqrt{d_n}\,\delta+\eta_\tau\le\eta$. Conditional on the already computed vectors,
each ideal unit-vector evaluation is standard normal. Its probability
of lying within $\eta$ of its threshold $\tau$ is at most
\begin{align*}
 \Prb[|Z-\tau|\le\eta]
 &=\int_{\tau-\eta}^{\tau+\eta}
       \frac{e^{-x^2/2}}{\sqrt{2\pi}}\,\mathrm dx
 \le\frac{2\eta}{\sqrt{2\pi}}.
\end{align*}
Consequently, choosing $C>b+h+2$ gives
\begin{align*}
 \Prb[\text{some rounding decision changes}]
 &\le2n^{-h-2}+n^{-h-2}
      +n^b\frac{2n^{-C}}{\sqrt{2\pi}}=O(n^{-h-2}).
\end{align*}
Use a fixed edge ordering when constructing maximal matchings.
When the threshold decisions agree, the matching and the resulting
independent set agree as well. Every returned set is also checked
combinatorially for independence. The exponent $h$ can be chosen to
fit the per-call failure budget below.

\paragraph{Dense--sparse combination.}
The progress framework in Section~2.2 of~\cite{BHL} combines the
sparse independent-set guarantee with dense progress of cost
$n^{o(1)}\sqrt{n/\Delta}$. The parameter choice in
\eqref{eq:parameters} balances these costs explicitly by setting $\Delta^{1/D}
 =n^{\beta/D}
 =n^{1/(5+3c_*)}=n^\alpha$ and 
 \begin{align*}\sqrt{n/\Delta}
 &=n^{(1-\beta)/2}
 \\&=n^{\frac12(1-3(1+c_*)/(5+3c_*))}
 \\&=n^{1/(5+3c_*)} =n^\alpha.
\end{align*}
Both progress costs are therefore $n^{\alpha+o(1)}$.
Choose $\alpha'$ with $\alpha<\alpha'<4/23$. For sufficiently
large $n$, the subpolynomial losses are at most
$n^{\alpha'-\alpha}$, and Theorem~2.6 of~\cite{BHL} gives a
proper coloring with $\widetilde O(n^{\alpha'})$ colors. If the
polylogarithmic overhead is $(\log n)^k$, then
\begin{align*}
 \frac{n^{\alpha'}(\log n)^k}{n^{4/23}}
 &=\exp\!\left(k\log\log n-(4/23-\alpha')\log n\right)
 \longrightarrow0,\\
 \alpha
 &=\frac{1000000}{5750003}
 <\frac4{23}=0.173913043478\ldots.
\end{align*}
For the high-probability guarantee, let $N$ be the original input
size and $N^b$ a polynomial upper bound on the total number of
randomized calls. Amplify each call to failure probability at most
$N^{-b-10}$; then
\begin{align*}
 \Prb[\text{some call fails}]
 &\le\sum_{r=1}^{\lfloor N^b\rfloor}
       \Prb[\text{call }r\text{ fails}]\\
 &\le N^bN^{-b-10}=N^{-10}.
\end{align*}
Together with the color bound, this proves Theorem~\ref{thm:main}.
Finitely many smaller input sizes are absorbed into the constant
in the asymptotic guarantee.

\begin{remark}
The dense case is governed by a minimum-degree condition through the
cost $n^{o(1)}\sqrt{n/\Delta}$. Whether expansion-type structure in
dense graphs can substitute for, or refine, this condition in dense
progress arguments is an interesting direction; rank-expansion, a
well-linkedness measure defined through the cutrank of balanced
separations~\cite{Anand}, is one candidate notion of expansion for
dense graphs.
\end{remark}

\section{Conclusion}
We provide a randomized polynomial time algorithm to color a graph that is promised to be 3-colorable with $O(n^{4/23})$ colors. The proof reduces the sparse coloring problem to a scalar recursion on
root correlation and cover threshold. The variance bound for Gaussian
maxima controls the loss at each step, while shell rounding limits the
number of available endpoints. The resulting positive drift is verified
on a compact domain by exact interval arithmetic. The number of steps
is an absolute constant, and the algorithm rounds full correlation shells.\looseness=-1

\section*{Acknowledgements}

We thank Ijay Narang for discussions that prompted us to study
this problem, for subsequent exchanges about our respective
approaches, and for sharing his concurrent work and preprint with Yukai
Tang \cite{narang2026improvedsdpcoloring3colorable}. This work was supported by NSF Grants CCF 2338816 and CCF 2504994.

\bibliographystyle{alpha}
\bibliography{main}

\newpage 
\appendix

\section{Pruning with a fixed parameter buffer}\label{app:pruning}

This appendix gives the upper-degree version of the pruning argument
used in the proof. It follows the packing and spread framework of
\cite[Theorem 4.11 and Appendix A.1]{BHL}, with vertex and edge deletions
interleaved so that all resulting covers have support in the same final
graph.

\subsection{Packings and the auxiliary parameter}

For unit vectors $(r_j)_{j\in J}$, nonnegative weights $\mu(j)$ form an
$(s,\delta)$-packing if $\sum_{j\in J}\mu(j)\ge\delta$ and, for every
$A\subseteq J$,
\[
 \sum_{j\in A}\mu(j)
 \le \Prb_g\!\left[\max_{j\in A}\ip{g}{r_j}\ge s\right].
\]
The inequality is required for every subset, so it survives restriction.
In particular, restricting to $J'\subseteq J$ gives
\begin{align*}
 \Prb_g\!\left[\max_{j\in J'}\ip{g}{r_j}\ge s\right]
 &\ge \sum_{j\in J'}\mu(j),\\
&=\sum_{j\in J}\mu(j)-\sum_{j\in J\setminus J'}\mu(j).
\end{align*}
Thus deleting mass $b$ leaves a cover of probability at least $\delta-b$.
Repeated directions may be aggregated when applying a theorem stated
for sets of vectors: the new weight of a direction is the sum of the
weights of its copies, and its Gaussian evaluation is unchanged.

Keep the analysis parameter $c=c_*$ fixed and put $D=3(1+c)$.
The degree parameter $\Delta$ is an upper bound for $\Delta(G)$.
For $\phib(t)=\Delta^{-1/D}$,
\begin{equation}\label{eq:app-tail-scale}
 \log\Delta=\frac{3(1+c)}2t^2+O(\log t).
\end{equation}
Choose another fixed parameter $c_{\rm pr}>c$, sufficiently close to $c$
as justified below. For $s=(\sqrt3-o(1))t$, the Gaussian tail asymptotic
and~\eqref{eq:app-tail-scale} give $\log\frac1{\phib(s)}=\frac{s^2}{2}+O(\log s)
   =\frac32t^2+o(t^2)$, so
\begin{align*}
 (1+c_{\rm pr})\log\frac1{\phib(s)}-\log\Delta
 &=\frac32\bigl((1+c_{\rm pr})-(1+c)\bigr)t^2+o(t^2)\\
 &=\frac32(c_{\rm pr}-c)t^2+o(t^2)>0.
\end{align*} So, $|N(i)|\le\Delta
\le\phib(s)^{-(1+c_{\rm pr})}$
for all sufficiently large $n$. Then, via the spread theorem at its fixed parameter, every outgoing family is therefore
at most $c_{\rm pr}$-inefficient at threshold $s$.

\subsection{A retained graph with persistent two-walk supports}

\begin{lemma}\label{lem:app-persistent-pruning}
Suppose an $n$-vertex graph has symmetric nonnegative edge weights
$\mu_i(j)=\mu_j(i)$, with row sums at most one and
\[
 \sum_{\{i,j\}\in E}\mu_i(j)\ge n/8.
\]
For every vertex $i$, suppose unit directions $r_{ij}$ form an outgoing
packing at threshold $s=(\sqrt3-o(1))t$ with weights $\mu_i$.
Assume $|N(i)|\le\Delta\le\phib(s)^{-(1+c_{\rm pr})}$.
Put $\eps=1/\log\log t$, $\tau=\eps^2$, and
\[
 \lambda=\frac{c_{\rm pr}}{1+c_{\rm pr}}
       \left(1+\frac{C_1}{\log s}\right),\qquad
 p=\exp[-C_2(\log s)^2],
\]
where $C_1,C_2>0$ depend only on $c_{\rm pr}$.
For sufficiently large $n$, there is a nonempty retained graph $G'$
such that every retained vertex $i$ has
\begin{equation}\label{eq:app-row}
 \mu_i(N_{G'}(i))\ge1/8-O(\eps),
\end{equation}
and every oriented retained edge $i\to j$ has
\begin{equation}\label{eq:app-twowalk}
 \mu_j\{k\in N_{G'}(j):
       -\eps\le\ip{r_{ji}}{r_{jk}}\le\lambda\}
 \ge\tau/2.
\end{equation}
\end{lemma}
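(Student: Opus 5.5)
We plan an iterative deletion process, following the pruning in \cite[Appendix~A.1]{BHL}, but with vertex and edge deletions interleaved. Each deletion is charged against the global mass $\sum_{\{i,j\}\in E}\mu_i(j)\ge n/8$. Since the weights are symmetric, every deleted edge costs the same amount $\mu_i(j)=\mu_j(i)$ in both rows, so one global potential $\Psi(G')=\sum_{\{i,j\}\in E(G')}\mu_i(j)$ tracks all of the bookkeeping. We start from $G'=G$ and repeat the following rules until none applies:
\begin{enumerate}[label=(\alph*),leftmargin=2em,itemsep=0pt,topsep=2pt]
  \item delete a vertex $i$ whose retained row mass $\mu_i(N_{G'}(i))$ is below $1/8-\eps$;
  \item delete an oriented edge $i\to j$ (that is, the undirected edge $ij$) that violates \eqref{eq:app-twowalk} in the current $G'$.
\end{enumerate}
A set of edges removed in a preliminary spread step, described below, is deleted once at the start. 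When the process stops, \eqref{eq:app-row} and \eqref{eq:app-twowalk} hold by construction. The whole proof therefore reduces to showing that the final graph is nonempty, i.e.\ that the total mass deleted is strictly less than $\Psi(G)$.

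\textbf{Preliminary spread step.} Before pruning, we apply Lemma~\ref{lem:spread-background} at every vertex $j$. By the computation preceding the lemma, $|N(j)|\le\Delta\le\phib(s)^{-(1+c_{\rm pr})}$, so every outgoing packing is at most $c_{\rm pr}$-inefficient. The spread theorem gives a restriction $J'_j$ that loses mass at most $1/\log s$ and satisfies
\[
 \mu_j\{k\in J'_j:\ip{r_{jk}}{r_{ji}}\ge\lambda\}\le p .
\]
We delete the edges $jk$ with $k\notin J'_j$; their total mass is at most $n/\log s=o(\eps n)$.

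\textbf{The two-walk condition.} We must show that, for most of the mass, the window $[-\eps,\lambda]$ carries at least $\tau/2$. The upper side is handled by the spread bound: mass above $\lambda$ is at most $p$. For the lower side we need
\[
 \mu_j\{k:\ip{r_{ji}}{r_{jk}}<-\eps\}\le 1/8-O(\eps)-\tau
\]
for most $i$. The plan is a Gaussian averaging argument. Weighting $i$ by $\mu_j(i)$, a large mass of pairs $(i,k)$ with $\ip{r_{ji}}{r_{jk}}<-\eps$ means the family $(r_{jk})$ has many nearly antipodal partners to the family $(r_{ji})$. This should contradict the packing inequality: for $B=\{k:\ip{r_{jk}}{x}<-\eps\}$ with $x$ a witness direction, the joint event that $\max_{k\in B}\ip{g}{r_{jk}}\ge s$ and $\ip{g}{x}\ge s$ is rarer than the product bound by a factor $\exp(-\Omega(\eps t^2))$.

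I expect this to be the main obstacle. The cleanest route is probably the one in \cite[Appendix~A.1]{BHL}: condition on the Gaussian, use the fact that both $i$ and $k$ pass threshold $s$, and apply the symmetric weight structure $\mu_i(j)=\mu_j(i)$. That should show that the weighted fraction of oriented edges $i\to j$ whose window mass is below $\tau$ is $O(\eps)$. The choice $\tau=\eps^2$ leaves a factor-$\eps$ slack for this, and the threshold $\tau/2$ in \eqref{eq:app-twowalk} rather than $\tau$ absorbs the later deletions.

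\textbf{Charging.} Call an edge $i\to j$ \emph{bad} if it fails the window bound with $\tau$ in the original graph. We delete the bad edges up front, at cost $O(\eps n)$. Every other edge can only fail later, at level $\tau/2$, if its endpoint $j$ has lost window mass at least $\tau/2$. To account for these later losses we assign each deleted edge at $j$ a charge to the edges $i\to j$ whose window contained it. Summed over $i$ with weights $\mu_j(i)\le1$, a deletion of mass $m$ at $j$ causes at most $2m/\tau$ mass of later two-walk deletions. This multiplier is large, so it must be paired with vertex-deletion losses that are $O(\eps^3)n$ rather than $O(\eps)n$. I would first try to run the vertex rule with slack $\eps^3$, recovering \eqref{eq:app-row} via $O(\eps)\ge\eps^3$. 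If that fails, I would fall back to a potential argument, $\Psi-\tfrac{\tau}{4}\cdot\#\text{violations}$, mimicking BHL's interleaved pruning. In either case we conclude that the total deleted mass is at most $O(\eps n)+o(n)<n/8$, so $G'$ is nonempty.
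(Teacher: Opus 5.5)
The deletion rules, the spread step and the reduction to nonemptiness all match the paper's framework. The gap is that you never bound the mass removed by rule (b), and the charging you sketch cannot bound it. The Markov multiplier $2/\tau$ applies to every earlier deletion at $j$. That includes vertex deletions, whose total can approach $n/8$. Moreover, the two-walk deletions it generates are themselves deletions at both endpoints, so the cascade has ratio $2/\tau\gg1$ and does not sum.

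The $\eps^3$ fallback goes the wrong way. Vertex deletions are charged at less than the row threshold $r_0$ apiece, so raising $r_0$ toward $1/8$ only shrinks the gap $n(1/8-r_0)$ that must absorb every other loss. The Gaussian averaging step for the initially bad edges is also unproved. As the next paragraph shows, it is unnecessary.

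The missing idea is a purely geometric per-center bound that does not care what else has been deleted: at most $1+1/\eps$ unit vectors can have pairwise inner products below $-\eps$, since $0\le\|\sum_{a=1}^q r_{ji_a}\|^2\le q-2\eps\binom q2$. The paper's second operation deletes the whole current cap $E_{ij}=\{k\in N_{\rm current}(j):\ip{r_{ji}}{r_{jk}}\ge-\eps\}$ whenever $\mu_j(E_{ij})<\tau$; this cap contains $i$ itself. A later trigger $i_b$ at the same center survived every earlier cap deletion there, so $\ip{r_{ji_a}}{r_{ji_b}}<-\eps$ for $a<b$. Hence each center triggers at most $1+1/\eps$ times, and the total two-walk cost is at most $n\tau(1+1/\eps)=O(\eps n)$, however the vertex deletions interleave.

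The spread loss is at most $n/\log s$ and the vertex loss at most $nr_0$, with $r_0=1/8-2B_t$ and $B_t=1/\log s+\tau(1+1/\eps)$. An empty final graph would therefore give $n/8\le n(B_t+r_0)<n/8$. At termination, $\mu_j(E_{ij})\ge\tau$ and the spread bound $p\le\tau/2$ give \eqref{eq:app-twowalk}.

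The same fact would also rescue your single-edge rule. Order the rule-(b) deletions at $j$ by time and group them under leaders, each new leader being the earliest deletion not yet in a group. The leaders are pairwise below $-\eps$. Each group has mass below $\tau/2+p$, because all its members were still present, and in the leader's cap, when the leader triggered. Rule (b) then costs about $\eps n/2$ in total, which fits under your row slack $\eps$.
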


\begin{proof}
Apply \cite[Theorem 4.11]{BHL} at each vertex. It gives
$X_i\subseteq N(i)$ with lost mass at most $1/\log s$ and, for every
$h\in X_i$,
\begin{equation}\label{eq:app-spread}
 \mu_i\{k\in X_i:\ip{r_{ih}}{r_{ik}}\ge\lambda\}\le p.
\end{equation}
Rows with total mass at most $1/\log s$ may be discarded entirely.
Delete an edge unless both endpoints retain it, and let $E_0$ be the
remaining edge set. Symmetry of the weights gives the explicit charge
\begin{align*}
 \sum_{\{i,j\}\in E\setminus E_0}\mu_i(j)
 &\le\sum_i\sum_{j\in N(i)\setminus X_i}\mu_i(j)\\
 &\le\sum_i\frac1{\log s}
   =\frac n{\log s}.
\end{align*}
An edge discarded by both endpoints may be counted twice on the right,
which only increases this upper bound. Every subsequent neighborhood is
a subset of its $X_i$, so~\eqref{eq:app-spread} continues to hold for
every retained neighbor.

Set $B_t=\frac1{\log s}+\tau(1+1/\eps)$ and $r_0=\frac18-2B_t$. For large $n$, $r_0>0$. We repeatedly apply either of the following operations whenever possible:
\begin{enumerate}
\item Delete a vertex whose current incident weight is less than $r_0$.
\item If a current oriented edge $i\to j$ satisfies
 \[
 E_{ij}:=\{k\in N_{\rm current}(j):
                 \ip{r_{ji}}{r_{jk}}\ge-\eps\},
 \qquad \mu_j(E_{ij})<\tau,
 \]
 delete every edge $jk$ with $k\in E_{ij}$.
\end{enumerate}
The second operation removes the triggering edge $ij$ itself, since
$\ip{r_{ji}}{r_{ji}}=1$. If $i_1,\ldots,i_q$ successively trigger that
operation at a fixed center $j$, a later triggering edge survived every
earlier deletion. Consequently, for $a<b$, we have that $i_b\notin E_{i_a j}
 \implies \ip{r_{ji_a}}{r_{ji_b}}<-\eps$, and so
\begin{align*}
 0\le\left\|\sum_{a=1}^q r_{ji_a}\right\|^2
 &=\sum_{a=1}^q\|r_{ji_a}\|^2
   +2\sum_{1\le a<b\le q}\ip{r_{ji_a}}{r_{ji_b}}\\
 &\le q-2\eps\binom q2
   \\
   &=q\bigl(1-\eps(q-1)\bigr).
\end{align*}
For $q\ge1$, division by $q$ yields $q\le1+1/\eps$; the bound also
holds if no operation is triggered at $j$. If $q_j$ is the number of
operations at center $j$, their total deletion charge is therefore
\begin{align*}
 W_{\rm edge}
 &\le\sum_j\sum_{a=1}^{q_j}\mu_j(E_{i_a j})\\
 &\le\sum_j q_j\tau
\\
&\le n\tau(1+1/\eps).
\end{align*}
Here each $E_{i_a j}$ is evaluated in the graph at the time of its
deletion. Charging a vertex deletion only for edges still present at
that time similarly gives $W_{\rm vertex}\le nr_0$.

Every operation deletes a vertex or an edge, so the process terminates.
If no edge remained, all the initial weight would have been deleted,
and the three charges would give
\begin{align*}
 \frac n8
 &\le\sum_{\{i,j\}\in E}\mu_i(j)\\
 &\le\frac n{\log s}+n\tau(1+1/\eps)+nr_0\\
 &=n(B_t+r_0)
   \\
   &=n\left(\frac18-B_t\right)
   <\frac n8,
\end{align*}
which is a contradiction. At termination, every retained vertex satisfies $\mu_i(N_{G'}(i))
 \ge r_0
   =\frac18-\frac2{\log s}-2\eps^2-2\eps
   =\frac18-O(\eps)$.
   For a final oriented edge $i\to j$, the second operation is no longer
applicable, so $\mu_j(E_{ij})\ge\tau$. Also $i\in X_j$, and
\eqref{eq:app-spread} bounds the mass with inner product above $\lambda$.
For sufficiently large $n$, $p\le\tau/2$, whence
\begin{align*}
 \mu_j\{k\in N_{G'}(j):-\eps\le\ip{r_{ji}}{r_{jk}}\le\lambda\} &=\mu_j(E_{ij})
   -\mu_j\{k\in N_{G'}(j):\ip{r_{ji}}{r_{jk}}>\lambda\}\\
 &\ge\tau-p\\ &\ge\frac\tau2.
\end{align*}
This proves~\eqref{eq:app-row} and~\eqref{eq:app-twowalk} in the same
final graph, with all their supporting neighbors still present.
\end{proof}

\subsection{Matching weights and approximate edge correlations}\label{app:approx}

For exact unit vectors with edge inner product $-1/2$, let
$S_g=\{i:\ip{g}{v_i}\ge t\}$ and choose a maximal matching $M$ in
$G[S_g]$. Its unmatched vertices form an independent set $I_g$.
Put $p_t=\phib(t)$ and define
\[\mu_i(j)=\frac{\Prb[\{i,j\}\in M]}{p_t}=\mu_j(i).\]
Then,\begin{align*}
 d_i:=\sum_{j\in N(i)}\mu_i(j)
 &=\frac{\Prb[i\text{ is matched in }M]}{p_t}
   =\Prb[i\text{ is matched in }M\mid i\in S_g]\le1\end{align*}
   So, we have
   \begin{align*}
 \E|I_g|
 &=\sum_i\Prb[i\in S_g\text{ and }i\text{ is unmatched}]\\
 &=p_t\sum_i(1-d_i).
\end{align*}
The sum in the second line is valid because the matching contains at
most one edge incident to $i$. If $\E|I_g|<np_t/4$, these identities give
$\sum_i d_i>3n/4$. In particular, for $H=\{i:d_i\ge1/2\}$,
\begin{align*}
 \frac{3n}{4}<\sum_i d_i
 &\le |H|+\frac{n-|H|}{2}
   =\frac n2+\frac{|H|}{2},
\end{align*} which gives
 $|H|>\frac n2$, thereby implying
 \begin{align*}
 \sum_{\{i,j\}\in E}\mu_i(j)
 &=\frac12\sum_i d_i
 \ge\frac12\sum_{i\in H}d_i
 \ge\frac{|H|}{4}\ge\frac n8.
\end{align*}

For each edge, the direction $r_{ij}=(v_j+v_i/2)/(\sqrt3/2)$ satisfies $\ip{v_i}{r_{ij}}=\frac{-1/2+1/2}{\sqrt3/2}=0$ with $\|r_{ij}\|^2=\frac{\|v_j\|^2+\ip{v_i}{v_j}+\|v_i\|^2/4}{3/4}=1$ where
we have that $\{i,j\}\in M \implies
 \ip{g}{r_{ij}}
 =\frac{\ip{g}{v_j}+\ip{g}{v_i}/2}{\sqrt3/2}
 \ge\sqrt3t$.
 
Now, fix $A\subseteq N(i)$. The matching events indexed by $j\in A$ are
disjoint, and their union implies that $i$ is selected and one of the
directions in $A$ reaches threshold $\sqrt3t$. Since all these
directions lie in $v_i^\perp$, their evaluations are jointly independent
of $\ip{g}{v_i}$. Therefore
\begin{align*}
 \sum_{j\in A}\mu_i(j)
 &=\frac1{p_t}\Prb\!\left[\bigcup_{j\in A}\{\{i,j\}\in M\}\right]\\
 &\le\frac1{p_t}\Prb\!\left[
      \ip{g}{v_i}\ge t,\ \max_{j\in A}\ip{g}{r_{ij}}\ge\sqrt3t
      \right]\\
 &=\frac{\Prb[\ip{g}{v_i}\ge t]}{p_t}
      \Prb\!\left[\max_{j\in A}\ip{g}{r_{ij}}\ge\sqrt3t\right]\\
 &=\Prb\!\left[\max_{j\in A}\ip{g}{r_{ij}}\ge\sqrt3t\right].
\end{align*}
These are precisely the packing inequalities required by
Lemma~\ref{lem:app-persistent-pruning}.

For implementation, normalize all vectors exactly and suppose that
edge correlations satisfy
$e_{ij}:=\ip{v_i}{v_j}=-1/2+O(\xi_n)$, uniformly, where
$\xi_n$ is inverse polynomial. Instead use the exact unit directions $r_{ij}=\frac{v_j-e_{ij}v_i}{\sqrt{1-e_{ij}^2}}$. The matching weights are defined by the same undirected events, so
$\mu_i(j)=\mu_j(i)$ still holds even though the directions at the two
endpoints differ. 

The new directions satisfy exactly $\ip{r_{ij}}{v_i}=\frac{e_{ij}-e_{ij}\|v_i\|^2}{\sqrt{1-e_{ij}^2}}=0$ and $\|r_{ij}\|^2=\frac{1-2e_{ij}^2+e_{ij}^2}{1-e_{ij}^2}=1$.

For large $n$, $e_{ij}<0$. On a matched edge, both endpoints are
selected, and hence
\begin{align*}
 \ip{g}{r_{ij}}
 &=\frac{\ip{g}{v_j}-e_{ij}\ip{g}{v_i}}{\sqrt{1-e_{ij}^2}}\\
 &\ge\frac{(1-e_{ij})t}{\sqrt{1-e_{ij}^2}}
   \\
   &=\sqrt{\frac{1-e_{ij}}{1+e_{ij}}}\,t\\
 &\ge(\sqrt3-C\xi_n)t
\end{align*}
for a uniform constant $C$. Thus the preceding packing calculation
applies at the common threshold $s=(\sqrt3-C\xi_n)t$.
Orthogonality is exact, and the auxiliary parameter buffer gives
$\Delta\le\phib(s)^{-(1+c_{\rm pr})}$. The pruning proof uses only
these packing inequalities, symmetry, unit norms, and the displayed
Gram-matrix calculation, all of which remain valid.

The subsequent geometric identities can also be written before any
approximation. Fix a root $i$, a current endpoint $j$, and a neighbor
$k$ of $j$, and let
 $q=\ip{v_i}{v_j}$, $z=\ip{v_i}{v_k}$, and $e=\ip{v_j}{v_k}$. Also, let
 $\sigma=\sqrt{1-q^2}$,
 $p_j=\frac{v_j-qv_i}{\sigma}$,
 $h=\frac{v_i-qv_j}{\sigma}=\sigma v_i-qp_j$. Then, let $r=\frac{v_k-ev_j}{\sqrt{1-e^2}},
 \delta=\ip{r}{h}
       =\frac{z-eq}{\sqrt{1-e^2}\,\sigma}$, and
 $u=r-\delta h$.
Both $r$ and $h$ are perpendicular to $v_j$, and subtracting
$\delta h$ also makes $u$ perpendicular to $h$. Since
$\operatorname{span}\{v_j,h\}=\operatorname{span}\{v_i,p_j\}$, this
gives $u\perp v_i,p_j$ and $\|u\|^2=1-\delta^2$. Expanding $v_k$ now
yields
\begin{align*}
 v_k
 &=ev_j+\sqrt{1-e^2}\,r\\
 &=e(qv_i+\sigma p_j)
   +\sqrt{1-e^2}\bigl(\delta(\sigma v_i-qp_j)+u\bigr)\\
 &=zv_i+\frac{e-qz}{\sigma}p_j+\sqrt{1-e^2}\,u,
\end{align*}
Therefore, we have
\[v_k-zv_i
 =-\frac{1+2qz}{2\sigma}p_j+\frac{\sqrt3}{2}u
   +O(\xi_n)p_j+O(\xi_n)u,\] by using the fact that $e=-1/2+O(\xi_n)$ and the lower bound on $\sigma$
in the continuing domain. Gaussian evaluations of the error terms are
$o(t)$ with the uniform bounds established below. Approximate local
distributions likewise give $q+z\le1/2+o(1)$, so the exact lift and its
feasibility constraints change only by uniformly vanishing errors.

\subsection{Preservation of the finite initialization bound}\label{app:buffer}

The parameter $c_{\rm pr}$ only changes the allowed initial
two-walk interval to $a\in[-o(1),d_{\rm pr}+o(1)]$ and $d_{\rm pr}=\frac{c_{\rm pr}}{1+c_{\rm pr}}$. All Gaussian variance estimates still use
\eqref{eq:app-tail-scale}, hence the original analysis parameter
$c$, the original $d=c/(1+c)$, and the same functions $K$ and $M$.
In particular the seed formula remains $b=\frac{1+3a}{4}$ and $\eta(a)=\frac{9-3a-6\sqrt c\sqrt{d-a^2}}
                  {4\sqrt{1-b^2}}$.
                  
For $c_{\rm pr}>c$ sufficiently close, the enlarged interval is
strictly below $\sqrt d$, and $b$ is bounded away from one.
The seed formula is therefore continuous on a fixed neighborhood
of the original interval.

Proposition~\ref{prop:finite}(1) asserts a strict finite implication:
every admissible seed either contradicts a terminal cardinality
bound or, after at most two lifts, exceeds $53/25$.
To see that this persists under a small fixed enlargement, suppose
otherwise. Take a sequence of failing tuples with
$c_{{\rm pr},m}\downarrow c$ and tolerances $\omega_m\downarrow0$.
Write their seed and successor states as
$(q_{0,m},L_{0,m})$, $(q_{1,m},L_{1,m})$, and
$(q_{2,m},L_{2,m})$. A failing tuple obeys, up to $\omega_m$ errors, that
 $a_m\in[-\omega_m,d_{{\rm pr},m}+\omega_m]$, $q_{0,m}=\frac{1+3a_m}{4}+o(1)$, $L_{0,m}=\eta(a_m)+o(1)$, $L_{r+1,m}=M(q_{r,m},L_{r,m},q_{r+1,m})+o(1)$ for $r\in(0,1)$, and 
$L_{r,m}\le T+o(1)$ and $L_{r,m}^2\le K(q_{r,m})+o(1)$ for $r=0,1,2$.
In this, we have set $T=53/25$, and each transition satisfies the perturbed feasibility
constraints. Positive parts define the square roots at perturbed
boundaries. Compactness gives a convergent subsequence, and continuity
then gives a limiting tuple satisfying $a\in[0,d]$ with $(q_0,L_0)=\left(\frac{1+3a}{4},\eta(a)\right)$ and $L_{r+1}=M(q_r,L_r,q_{r+1})$ for $r=0,1$, and $L_r\le T$ and $L_r^2\le K(q_r)$ for $r=0,1,2$.

Its transitions satisfy the unperturbed feasibility constraints.
Thus it has neither a terminal violation nor a threshold above $T$
within two lifts, contradicting Proposition~\ref{prop:finite}(1).

Thus some fixed $c_{\rm pr}>c$ is valid. A numerical value for this
auxiliary buffer is not needed by the rounding algorithm, which never
constructs the pruned graph. The drift statement is unchanged.
Small negative values of $a$ are handled by the same argument and
continuity of the terminal loss at $b=1/4$.

\subsection{Cover masses and uniform errors at depth 404}\label{app:errors}

Pruning is performed once. Its outgoing mass is bounded below by a
positive constant, and its initial two-walk mass is at least
$1/[2(\log\log t)^2]$. We record separately the losses in probability
and in threshold, since they enter different parts of composition.
For projection off a fixed unit direction $h$, write
$w=w^\perp+\ip{w}{h}h$, with $\|w\|\le1$. Then, we have that
$\bigl|\ip{g}{w}-\ip{g}{w^\perp}\bigr|
 \le|\ip{g}{h}|$ and $\Prb[|\ip{g}{h}|>\log t]=2\phib(\log t)
 \le2\exp\!\left(-\frac{(\log t)^2}{2}\right)=:e_t$, with 
$\Prb\!\left[\max_w\ip{g}{w^\perp}\ge s-\log t\right]\ge\Prb\!\left[\max_w\ip{g}{w}\ge s\right]-e_t$. So, for every fixed $A>0$,
\begin{align*}
 \frac{e_t}{t^{-A}}
 &\le2\exp\!\left(-\frac{(\log t)^2}{2}+A\log t\right)
 \longrightarrow0.
\end{align*}
Thus projection preserves, for example, half of any cover mass
$t^{-A}$ once $t$ is sufficiently large. 
At each lift there are $O(\log\Delta)=O(t^2)$ scalar buckets per
parameter, and only constantly many parameters are bucketed. Fix
$B$ so that the product of the relevant bucket counts is at most
$t^B$. Selecting a heavy inner bucket leaves inverse-polynomial mass
for that inner cover; synchronizing their labels among outer indices
costs at most $t^B$ in outer mass. Projection and composition cost
constant factors. Consequently, if $\delta_h$ is the retained outer
cover mass, then for fixed $C_0,B$ and all sufficiently large $t$, we have
$\delta_0\ge t^{-C_0}$,
$\delta_{h+1}\ge\frac{\delta_h}{4t^B}$, $\delta_h\ge4^{-h}t^{-C_0-Bh}
          \ge t^{-C(h+1)}$ for $0\le h\le404$, and $\log\frac1{\delta_h}\le C(h+1)\log t=o(t^2)$ after enlarging the fixed constant $C$. The fresh inner cover at each
endpoint has the constant mass given by pruning, so its probability
bound does not deteriorate through all earlier lifts.

For an inner maximum $F$ with coefficient norms at most one, mean
$\mu$, and cover probability at least $t^{-A}$ at threshold
$\sqrt3t-C_1\log t$, Gaussian upper-tail concentration gives
\begin{align*}
 t^{-A}
 &\le\Prb[F\ge\sqrt3t-C_1\log t]\\
 &\le\exp\!\left[-\frac12
           (\sqrt3t-C_1\log t-\mu)_+^2\right],
           \end{align*} and \begin{align*}
 (\sqrt3t-C_1\log t-\mu)_+
 &\le\sqrt{2A\log t},\\
 \mu&\ge\sqrt3t-C_1\log t-\sqrt{2A\log t}.
\end{align*}
This is the mean estimate used before applying the variance bound.

To control scalar-bucket errors, let $\mathcal U_n$ be any of the
graph-indexed families of unit directions needed at a fixed depth,
with $|\mathcal U_n|\le n^b$ for a fixed $b$. Given $A>0$, choose
$C_A=\sqrt{2(b+A+1)}$. The individual Gaussian tails imply
\begin{align*}
 \Prb\!\left[\max_{u\in\mathcal U_n}|\ip{g}{u}|
                        >C_A\sqrt{\log n}\right]
 &\le\sum_{u\in\mathcal U_n}
          \Prb[|\ip{g}{u}|>C_A\sqrt{\log n}]\\
 &\le2n^b\exp\!\left(-\frac{C_A^2\log n}{2}\right)
  \\
  &=2n^{-A-1}.
\end{align*}
Since $\log\Delta=\Theta(\log n)=\Theta(t^2)$, a coefficient error
bounded by a modulus of continuity $\omega(1/\log\Delta)=o(1)$
contributes, outside this event, at most
\begin{align*}
 \omega(1/\log\Delta)\,
       \max_{u\in\mathcal U_n}|\ip{g}{u}|
 &\le C_A\omega(1/\log\Delta)\sqrt{\log n}
   =o(t).
\end{align*}
All denominators are bounded away from zero on the compact continuing
domain. At a square-root boundary one may use
$|\sqrt{x_+}-\sqrt{y_+}|\le\sqrt{|x-y|}$, so uniform continuity
suffices there as well. Iterating these estimates a fixed number of
times gives normalized errors $e_h(t)$ with
$\max_{h\le404}|e_h(t)|\to0$. In particular, for sufficiently large
$t$, the strict drift from Proposition~\ref{prop:finite}(2) gives
\begin{align*}
 L_{h+1}-L_h
 &>\frac1{1000}-\max_{r\le404}|e_r(t)|\\
 &\ge\frac1{1000}-\frac1{5000}
   =\frac1{1250}.
\end{align*}
This establishes the uniform error convention and the working
increment used in the finite recursion.

\section{Weighted composition and entropy}\label{app:entropy}

This appendix records a weighted form of the composition argument and
makes explicit the distinction between individual branch losses and a
uniform threshold. It is not needed for the numerical recursion.
For a probability vector $\pi$, write
$H(\pi)=\sum_j\pi_j\log(1/\pi_j)$, using natural logarithms.

\begin{lemma}[Weighted composition]\label{lem:weighted}
Let $(x_j)_{j\in J}$ be unit vectors with an $(s,\delta)$-packing
$\mu$ of total mass exactly $\delta>0$, and discard zero-mass indices.
Let $p=\phib(s)$, $\pi_j=\mu(j)/\delta$, and $a_j=\log\frac{p}{\mu(j)}$ with $\overline a=\sum_j\pi_j a_j
             =H(\pi)+\log\frac p\delta$. 
For each $j$, let $F_j(g)$ be the maximum of a fixed finite family
of Gaussian linear forms with coefficient vectors in $x_j^\perp$,
and write $m_j=\E F_j$ and $v_j=\Var(F_j)$.
For $0<\epsilon<1$, with probability at least $(1-\epsilon)\delta$
there is an index $j$ such that $\ip{g}{x_j}\ge s$ and
\[F_j(g)\ge m_j-\sqrt{2v_j\bigl(a_j+\log(1/\epsilon)\bigr)}.\]
If also $m_j\ge m$, $v_j\le v$, and
$0<\eta<1-\epsilon$, then with probability at least
$(1-\eta-\epsilon)\delta$ there is a joint witness satisfying $\ip{g}{x_j}\ge s$ and
\[F_j(g)\ge m-\sqrt{2v\bigl(\overline a/\eta+\log(1/\epsilon)\bigr)}.
\]
\end{lemma}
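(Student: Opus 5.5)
The argument mirrors the proof of Lemma~\ref{lem:compose}, with a per-index deviation budget in place of the uniform $B_t$. Order the indices and let $W_j$ be the event that $j$ is the first index with $\ip{g}{x_j}\ge s$. Since $x_j$ is a unit vector, $\Prb(\ip{g}{x_j}\ge s)=p$. For each $j$ set
\[
E_j=\Bigl\{F_j<m_j-\sqrt{2v_j(a_j+\log(1/\epsilon))}\Bigr\}.
\]
The coefficient vectors of $F_j$ lie in $x_j^\perp$, so $E_j$ is independent of $\{\ip{g}{x_j}\ge s\}$. Lemma~\ref{lem:variance}, or Chen's lower tail in Lemma~\ref{lem:chen-background}, gives
\[
\Prb(E_j)\le e^{-a_j}\epsilon=\epsilon\,\mu(j)/p .
\]
When $v_j=0$, $E_j$ is null. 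Hence
\[
\Prb(\{\ip{g}{x_j}\ge s\}\cap E_j)\le p\cdot\epsilon\,\mu(j)/p=\epsilon\,\mu(j),
\]
and summing over $j$ gives a total bad mass of at most $\epsilon\delta$.

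For the good event I will not use $\Prb(\bigcup_j W_j)\ge\delta$ directly. The packing inequality for the whole set $J$ gives
\[
\Prb\Bigl(\bigcup_j\{\ip{g}{x_j}\ge s\}\Bigr)\ge\mu(J)=\delta .
\]
The union of the events $\{\ip{g}{x_j}\ge s\}\cap E_j^c$ contains this union minus $\bigcup_j(\{\ip{g}{x_j}\ge s\}\cap E_j)$. It therefore has probability at least $\delta-\epsilon\delta=(1-\epsilon)\delta$, which proves the first claim.

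For the uniform claim, use Markov's inequality on the weights $\pi$. Let $J_{\rm bad}=\{j: a_j>\overline a/\eta\}$. We need $a_j\ge0$ so that Markov applies. This holds because $\mu(j)\le\Prb(\ip{g}{x_j}\ge s)=p$ by the packing inequality applied to the singleton $\{j\}$. So $\pi(J_{\rm bad})\le\eta$, that is, $\mu(J_{\rm bad})\le\eta\delta$.

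Restrict to $J'=J\setminus J_{\rm bad}$. By restriction the packing mass is at least $(1-\eta)\delta$, so $\{\max_{j\in J'}\ip{g}{x_j}\ge s\}$ has probability at least $(1-\eta)\delta$. Rerun the bad-event bound on $J'$: the bad mass is at most $\epsilon\mu(J')\le\epsilon\delta$. This leaves probability at least $(1-\eta-\epsilon)\delta$ that some $j\in J'$ satisfies the first-part conclusion. For such $j$ we have $a_j\le\overline a/\eta$, $m_j\ge m$ and $v_j\le v$, which gives the displayed uniform threshold.

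The identity $\overline a=H(\pi)+\log(p/\delta)$ is routine. It follows by substituting $\mu(j)=\delta\pi_j$ into $a_j=\log(p/\mu(j))$ and averaging against $\pi$.

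The main point to get right is the order of operations. The lower bound on the probability of the good union must come from the packing inequality on the retained set, not from first-witness masses, because the weights $\mu(j)$ need not equal $\Prb(W_j)$. The per-index tail must be charged as $\epsilon\,\mu(j)$ rather than as $\epsilon\,p$, which is exactly why $a_j$ is defined through $\mu(j)$.
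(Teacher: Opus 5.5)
Your proposal is correct and follows the paper's proof essentially step for step: the per-index charge $\epsilon\mu(j)$ via independence and Chen's lower tail, the outer packing inequality (not first-witness masses) for the good union, and a Markov truncation on $\pi$ that uses $a_j\ge0$ from the singleton packing inequality, followed by rerunning the bad-event bound on the retained set. The one case the paper treats separately is $\overline a=0$, where Markov's inequality with threshold $0$ is not literally available; there your strict-inequality definition gives $J_{\rm bad}=\varnothing$ directly, because every $\pi_j>0$ and $a_j\ge0$ force all $a_j=0$.
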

\begin{proof}
The singleton packing inequalities and the identity
$\mu(j)=\delta\pi_j$ give $0<\mu(j)\le\Prb[\ip{g}{x_j}\ge s]=p$ and $a_j=\log\frac p{\mu(j)}\ge0$. So, we have
\begin{align*}
\overline{a} &=\sum_j\pi_j\log\frac p{\delta\pi_j}\\
 &=\sum_j\pi_j\log\frac1{\pi_j}
   +\left(\sum_j\pi_j\right)\log\frac p\delta
   \\ &=H(\pi)+\log\frac p\delta.
\end{align*}
Define the outer-hit and inner-failure events by $A_j=\{\ip{g}{x_j}\ge s\}$ and $ B_j=\left\{F_j(g)<m_j-\sqrt{2v_j\bigl(a_j+\log(1/\epsilon)\bigr)}\right\}$.
       
For fixed $j$, the coefficient vectors defining $F_j$ lie in
$x_j^\perp$. Thus $F_j$ depends only on the projection of $g$ onto
$x_j^\perp$, which is independent of $\ip{g}{x_j}$. If $v_j>0$,
Lemma~\ref{lem:variance} therefore yields
\begin{align*}
 \Prb(A_j\cap B_j)
 &=\Prb(A_j)\Prb(B_j)\\
 &\le p\exp\!\left[-\frac{2v_j(a_j+\log(1/\epsilon))}{2v_j}\right]\\
 &=\epsilon p e^{-a_j}
   =\epsilon\mu(j).
\end{align*}
If $v_j=0$, then $F_j=m_j$ almost surely, so $B_j$ has probability
zero and the same bound holds. For a one-member inner family, the
tail estimate is the ordinary one-dimensional Gaussian estimate.

The outer packing gives $\Prb(\bigcup_j A_j)\ge\delta$. Moreover,
every outcome in $(\bigcup_j A_j)\setminus\bigcup_j(A_j\cap B_j)$
has an outer witness whose inner inequality succeeds. Consequently
\begin{align*}
 \Prb\!\left[\bigcup_j(A_j\cap B_j^c)\right]
 &\ge\Prb\!\left[\bigcup_j A_j\right]
       -\Prb\!\left[\bigcup_j(A_j\cap B_j)\right]\\
 &\ge\delta-\sum_j\Prb(A_j\cap B_j)\\
 &\ge\delta-\epsilon\sum_j\mu(j)
   =(1-\epsilon)\delta.
\end{align*}

For the uniform threshold, first suppose $\overline a>0$, and retain
$J_\eta=\{j:a_j\le\overline a/\eta\}$. The mass of the discarded
indices is bounded directly by
\begin{align*}
 \sum_{j\notin J_\eta}\mu(j)
 &=\delta\sum_{j\notin J_\eta}\pi_j\\
 &\le\delta\,\frac\eta{\overline a}
                  \sum_{j\notin J_\eta}\pi_j a_j\\
 &\le\delta\,\frac\eta{\overline a}
                  \sum_j\pi_j a_j
   =\eta\delta,
\end{align*} and \begin{align*}
 \Prb\!\left[\bigcup_{j\in J_\eta}A_j\right]
 &\ge\sum_{j\in J_\eta}\mu(j)
   \ge(1-\eta)\delta.
\end{align*}
If $\overline a=0$, positivity of the $\pi_j$ and nonnegativity of
the $a_j$ imply $a_j=0$ for every $j$; take $J_\eta=J$ instead.
In either case, on $B_j^c$ for a retained index,
\begin{align*}
 F_j(g)
 &\ge m_j-\sqrt{2v_j\bigl(a_j+\log(1/\epsilon)\bigr)}\\
 &\ge m-\sqrt{2v\bigl(\overline a/\eta+\log(1/\epsilon)\bigr)}.
\end{align*}
Subtracting the failures over the retained set gives
\begin{align*}
 \Prb\!\left[\bigcup_{j\in J_\eta}(A_j\cap B_j^c)\right]
 &\ge\Prb\!\left[\bigcup_{j\in J_\eta}A_j\right]
       -\sum_{j\in J_\eta}\Prb(A_j\cap B_j)\\
 &\ge(1-\eta)\delta
       -\epsilon\sum_{j\in J_\eta}\mu(j)\\
 &\ge(1-\eta-\epsilon)\delta,
\end{align*}
which is the second assertion.
\end{proof}

In a uniform packing, $a_j=\log(|J|p/\delta)$ is constant, so no
truncation is needed. An arbitrary cover need not admit a uniform
packing. Nevertheless, for an outer cover of probability $\delta$,
putting $B=\log(2|J|p/\delta)$ gives
\begin{align*}
 \Prb\!\left[\bigcup_j
      \left(A_j\cap\{F_j<m_j-\sqrt{2v_jB}\}\right)\right]
 &\le\sum_j p e^{-B}\\
 &=|J|p\,\frac\delta{2|J|p}
   =\frac\delta2.
\end{align*}
This leaves probability at least $\delta/2$ for a joint witness, as
in Lemma~\ref{lem:compose}. Thus that estimate requires only
cardinality, while the weighted form keeps track of individual masses.
The identity for $\overline a$ is an entropy interpretation of the
average logarithmic loss. A bound on Shannon entropy alone does not
control each branch loss; obtaining the displayed uniform bound costs
the stated truncation factor $1/\eta$.

For comparison, the Gaussian isoperimetric route starts with an inner
$(s_2,\delta_2)$-cover of vectors of norm at most $r>0$. Let
$F(g)=\max_{u\in U}\ip{g}{u}$, let
$A=\{g:F(g)\ge s_2\}$, and let $A^{(a)}$ be its Euclidean
$a$-enlargement. If $x\in A$ and $\|y-x\|\le a$, then
\begin{align*}
 F(y)
 &=\max_{u\in U}\bigl(\ip{x}{u}+\ip{y-x}{u}\bigr)\\
 &\ge F(x)-r\|y-x\|
   \ge s_2-ra,
\end{align*}
and $A^{(a)}\subseteq\{g:F(g)\ge s_2-ra\}$. Writing $\gamma$ for standard Gaussian measure, Borell's
inequality~\cite{Borell} gives that for all $a\geq 0$,
\begin{align*}
 \Prb[F(g)\ge s_2-ra]
 &\ge\gamma(A^{(a)})\\
 &\ge\Phi\bigl(\Phi^{-1}(\gamma(A))+a\bigr)\\
 &\ge\Phi\bigl(\Phi^{-1}(\delta_2)+a\bigr),
\end{align*}
where $\Phi=1-\phib$ and $0<\delta_2<1$.
This is the enlargement step in the cover-composition argument
of Bansal, Huang, and Lee~\cite[Lemma~4.14]{BHL}.
The present proof uses Chen's variance-scale lower tail together
with per-index independence and a union bound. The comparison
after Lemma~\ref{lem:compose} explains its relation to the Ehrhard-based
composition bound of Narang and Tang~\cite{narang2026improvedsdpcoloring3colorable}.

\end{document}